\documentclass[preprint,12pt,authoryear]{elsarticle}

\usepackage{setspace}                %
\usepackage{fancyhdr}                %
\usepackage{times}                   
\usepackage{mathptmx}                

\usepackage{amsmath,amssymb,amsfonts}
\usepackage{amsbsy,latexsym}          
\usepackage{amsthm}                  %
\usepackage{mathrsfs}                
\usepackage{cancel}                  
\usepackage{tfrupee}                 %
\usepackage[nointegrals]{wasysym}   %
\usepackage{adjustbox}      

\usepackage{multirow}                %
\usepackage{booktabs}                
\usepackage{tabulary}                %
\usepackage{array}                   %
\usepackage{tabularx}                %
\usepackage{colortbl}                
\usepackage{xcolor}                  
\usepackage{makecell}                %

\usepackage{graphicx}                %
\usepackage{float}                   
\usepackage{morefloats}              %
\usepackage{floatflt}                %
\usepackage{subcaption}              
\usepackage[graphicx]{realboxes}     %
\usepackage{multirow}       

\usepackage{algorithm}               %
\usepackage{algorithmicx}            %
\usepackage{algpseudocode}           
\usepackage{listings}                %

\usepackage{hyperref}                %
\usepackage{cleveref}                
\usepackage{tocbibind}               

\usepackage{textcomp}                
\usepackage{pifont}                  
\usepackage{manyfoot}                %
\usepackage{url}                     

\usepackage{amssymb}
\usepackage{amsmath}

\theoremstyle{thmstyleone}%
\newtheorem{Assumption}{Assumption}
\newtheorem{theorem}{Theorem}
\newtheorem{Lemma}{Lemma}[section]%
\theoremstyle{thmstyletwo}%

\theoremstyle{thmstylethree}%
\newtheorem{definition}{Definition}[section]%
\journal{The Journal of Finance and Data Science}

\begin{document}

\begin{frontmatter}



\title{SBCA: Cross-Modal BERT-driven Actor-Critic for Multi-Asset Portfolio Optimization} 

\author[a]{Jinfeng Pan\corref{cor1}} 
\ead{ppjinfeng} 

\author[b]{Jiahao Chen} 
\ead{changahou@stu2024.jnu.edu.cn} 

\cortext[cor1]{Corresponding author: Jinfeng Pan}

\affiliation[a]{organization={School of Accounting, Guangdong University of Finance},%
	addressline={Longdong Road},
	city={Guangzhou},
	postcode={510521},
	state={Guangdong},
	country={China}}
\affiliation[b]{organization={School of Economics, Jinan University},
	addressline={Huangpu Avenue}, 
	city={Guangzhou}, 
	postcode={510000}, 
	state={Guangdong}, 
	country={China}} 

\begin{abstract}
	
Portfolio optimization is constrained by linear assumptions and insufficient integration of multi-modal information in traditional models. This paper proposes a cross-modal BERT-driven Actor-Critic framework (SBCA) for multi-asset portfolio optimization to address the deficiencies of existing deep reinforcement learning (DRL) methods in fusing price data and financial text sentiment, as well as lacking practical trading constraints. The framework adopts a cross-modal gated fusion mechanism to adaptively integrate price time-series features and text semantic features, embeds downside risk and turnover penalty constraints into the reward function, and constructs a complete empirical system for validation. Experiments on 11-year U.S. stock multi-asset datasets show that SBCA outperforms equal weight, buy-and-hold and market benchmark strategies in portfolio value, annual return, Sharpe ratio and maximum drawdown. Ablation studies verify the complementary enhancement of Actor-Critic mechanism and cross-modal fusion module. Cost sensitivity analysis confirms the model’s robustness under varying transaction costs. SBCA provides an effective and interpretable end-to-end solution for dynamic quantitative portfolio decision-making.

\end{abstract}
\begin{keyword}
Portfolio optimization\sep 
Cross-modal learning\sep 
Deep reinforcement learning\sep 
BERT\sep 
Quantitative investment

\end{keyword}

\end{frontmatter}



\section{Introduction}\label{sec:introduction}

With the rapid development of quantitative investment and artificial intelligence, multi‑asset dynamic portfolio allocation has become a key direction for improving returns and controlling risks in real financial markets. The growing volume of financial news, research reports, and public opinion texts provides rich incremental information for market prediction, yet such unstructured data remains underutilized in mainstream decision models.

Portfolio optimization is a core research topic in the interdisciplinary field of finance, econometrics, and artificial intelligence. Financial asset returns exhibit complex characteristics such as nonlinearity, non-stationarity, and modal dependence, which are affected by multiple dynamic factors. Traditional portfolio optimization models have obvious limitations due to their reliance on linear assumptions and stationary distribution conditions, and they cannot effectively integrate unstructured information such as textual public opinion, failing to meet the practical demands of quantitative investment.

\begin{figure}[H]
	\centering
	\includegraphics[width=\linewidth]{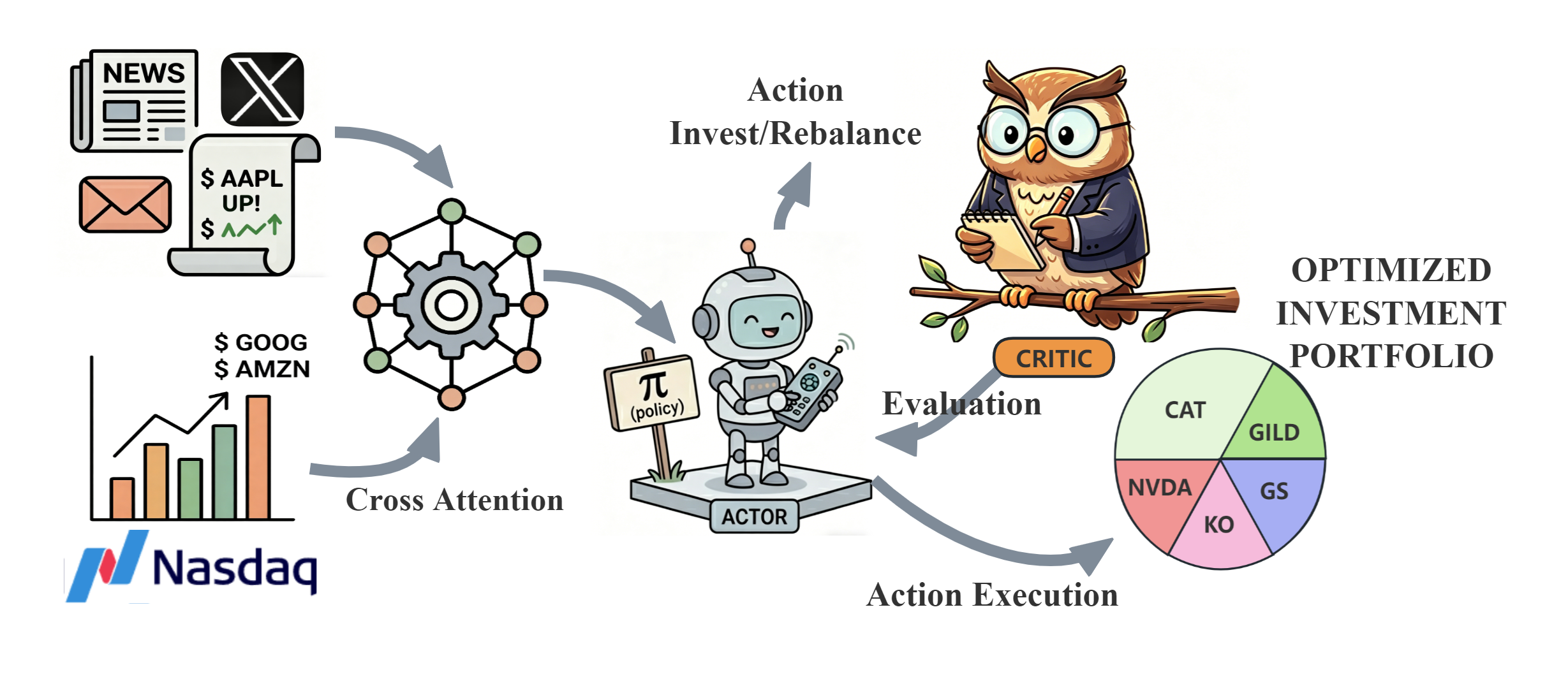}
	\caption{SBCA Task}
	\label{fig:taskfig}
\end{figure}

Deep reinforcement learning (DRL) provides an effective paradigm for dynamic portfolio optimization with its advantages in sequential decision-making and dynamic environment adaptation. Meanwhile, pre-trained language models represented by BERT can extract valuable semantic and sentiment features from unstructured financial texts. However, existing research still suffers from several critical limitations. Multi-modal fusion between price sequences and textual sentiment is mostly superficial and fails to capture adaptive interactive relations across modalities. Most existing methods overlook real-world trading constraints including downside risk and portfolio turnover, which significantly weakens their practical applicability. In addition, empirical validation is often insufficient and lacks systematic ablation tests and cost sensitivity analysis to verify model robustness and generalization.

To address these issues, this paper proposes a cross-modal BERT-driven Actor-Critic framework (SBCA) for multi-asset portfolio optimization. The framework introduces a cross-modal gated fusion module to achieve adaptive integration of price time-series features and BERT-extracted textual semantic features. It further embeds downside risk penalties and portfolio turnover constraints into the reward function to align model training with realistic trading environments. Moreover, it establishes a complete and rigorous empirical pipeline including ablation studies and cost sensitivity analysis to ensure model reliability and reproducibility. The framework adopts a cross-modal gated fusion mechanism to adaptively integrate price time-series features and text semantic features, embeds downside risk and turnover penalty constraints into the reward function, and constructs a complete empirical system for validation. The main contributions of this paper are summarized in three aspects:

\begin{itemize}
	\item A novel cross-modal gated fusion approach is developed to integrate price and textual semantic features, enabling interpretable heterogeneous information fusion for financial applications.
	\item Reinforcement learning-based optimization is enhanced with downside risk and turnover constraints to balance portfolio performance and improve model training reliability.
	\item A comprehensive empirical framework is established to validate the model’s effectiveness and provide reproducible support for its practical deployment in multi-asset portfolio optimization.
\end{itemize}

The subsequent structure of this paper is as follows: \Cref{sec:related_work} systematically reviews relevant research on portfolio optimization and deep reinforcement learning in the financial field; \Cref{sec:theoretical_foundation} elaborates the theoretical foundation and model structure of the SBCA framework; \Cref{sec:experiments} introduces data sources, experimental design, and empirical results; \Cref{sec:conclution} summarizes research conclusions, limitations, and future prospects.

\section{Related Work}\label{sec:related_work}

\subsection{Portfolio Optimization and Deep Reinforcement Learning}

Traditional portfolio optimization began with the mean-variance theory proposed by Markowitz \cite{markowitz1952portfolio}, which laid the fundamental theoretical foundation for portfolio risk diversification. However, this classic model suffers from obvious defects, including severe estimation errors and unrealistic linear and stationary assumptions \cite{demiguel2009optimal}, which make it difficult to adapt to the complex and volatile real financial market. Later, scholars extended the mean-variance model by incorporating transaction costs \cite{magill1976portfolio}, downside risk measures, and online learning rules, trying to improve its practical applicability. 

With the rapid development of artificial intelligence, deep reinforcement learning has reshaped the research paradigm of dynamic portfolio management by virtue of its model-free, end-to-end sequential decision-making capabilities. Early DRL frameworks for portfolio optimization focused on solving the problem of continuous action spaces and designed portfolio vector memory mechanisms to support adaptive asset allocation \cite{jiang2017deep}. On this basis, model-based DRL methods further improved the stability of portfolio strategies by integrating environment prediction and data augmentation technologies \cite{yu2019model}, making the decision-making process more in line with the dynamic changes of the market. 

In recent years, Actor–Critic variants such as DDPG, PPO, and SAC have become the dominant frameworks in DRL-based portfolio optimization, due to their excellent performance in handling continuous portfolio weighting problems. A series of improved methods have been proposed to address the practical constraints of real trading: cost-sensitive DRL models \cite{zhang2020cost} explicitly consider transaction costs in the training objective, DeepTrader \cite{wang2021deeptrader} optimizes the decision-making process through multi-dimensional market state perception, and dynamic rebalancing RL models \cite{lim2022dynamic} adaptively adjust portfolio weights according to market structure mutations, all of which have achieved significant improvements in risk-adjusted returns. In addition, platforms such as FinRL and FinRL-Meta have standardized the empirical evaluation of DRL-based portfolio strategies \cite{liu2021finrl,liu2022finrlmeta}, providing a unified benchmark for academic research. Despite these remarkable progresses, most existing DRL models still do not systematically use financial text information or conduct in-depth fusion of multi-modal data, which becomes a key bottleneck restricting their performance improvement.

\subsection{Cross-Modal Learning and Financial Language Models}

With the rise of pre-trained language models, financial language models have developed rapidly, among which BERT and its domain-adapted versions (such as FinBERT) have become the core tools for extracting information from unstructured financial texts \cite{araci2019finbert,yang2020finbert,liu2020finbert}. These models can effectively capture semantic information and sentiment tendencies from financial news, research reports, and social media content, providing valuable decision-making information for financial markets. Specifically, they have been widely applied in stock movement prediction \cite{chen2021stock,wei2020stock,li2020applying} and investor sentiment analysis \cite{shen2024financial,cicekyurt2026enhancing}, and have achieved good results in improving the accuracy of market prediction.

Cross-modal learning, as a frontier research direction in artificial intelligence, aims to fuse structured price data with unstructured textual information, so as to comprehensively capture the multi-dimensional driving factors of financial markets. In the field of portfolio optimization, scholars have made initial attempts in cross-modal fusion: Multimodal Transformers and cross-modal attention mechanisms have been used to combine price patterns and semantic signals, effectively improving the accuracy of market forecasting \cite{pei2025cross,gohari2024modality}. Some studies have incorporated textual sentiment features into the state space of reinforcement learning \cite{nawathe2024multimodal,mantshimuli2025sentiment}, while others have fused large language models (LLMs) with policy gradient methods to enhance the interpretability of portfolio strategies \cite{xiong2025flagtrader}.

However, transaction costs are an important factor affecting strategy returns, and their fluctuations will directly erode the profits of portfolio strategies \cite{choudhary2025risk}. Existing cross-modal DRL studies rarely conduct systematic sensitivity analysis on transaction costs, which makes it difficult to verify the robustness of strategies in different market environments. In addition, the generalization ability of portfolio strategies is also a key concern in academic research and practical deployment \cite{sood2026deep}, but existing models lack a unified and rigorous empirical testing system, which makes it difficult to fully verify their generalization and robustness. This paper will focus on these issues and conduct in-depth empirical research to provide reliable theoretical and practical support for the application of cross-modal DRL in portfolio optimization.

\section{Theoretical Foundation}\label{sec:theoretical_foundation}

\subsection{Notation Convention}
\label{subsec:notation}
This paper adopts a unified notation system, and all symbols maintain consistent meanings throughout the paper unless otherwise specified.

Denote the set of real numbers as $\mathbb{R}$, the set of non-negative real numbers as $\mathbb{R}_+$, and the set of positive integers as $\mathbb{N}^*$. For any $n \in \mathbb{N}^*$, $\mathbb{R}^n$ denotes the $n$-dimensional Euclidean space, $\mathbf{1} \in \mathbb{R}^n$ denotes the all-ones vector, $\|\cdot\|_1$ denotes the $L_1$ norm, and $\odot$ denotes the Hadamard product (element-wise multiplication). For a random variable $X$ defined on a measurable space, $\mathbb{E}[X]$ denotes its mathematical expectation with respect to the probability measure.

All stochastic processes in this paper are defined on a complete probability space $(\Omega, \mathcal{F}, \mathbb{P})$, where $\{\mathcal{F}_t\}_{t \in \mathcal{T}}$ is a filtration satisfying the usual conditions on $(\Omega, \mathcal{F})$, the discrete-time index set is $\mathcal{T} = \{0,1,\dots,T\}$, $T \in \mathbb{N}^*$ is the terminal time of the investment horizon, and all stochastic processes are adapted to this filtration.

In financial market modeling, let the number of tradable risky assets be $N$. For any $i=1,2,\dots,N$ and $t \in \mathcal{T}$, the closing price of the $i$-th asset at time $t$ is $p_{t,i}$, the single-period simple return is $y_{t,i} = p_{t,i}/p_{t-1,i}$, the single-period log return is $r_{t,i} = \ln y_{t,i}$, and the asset return vector at time $t$ is $\mathbf{y}_t \in \mathbb{R}_+^N$; the portfolio weight vector at time $t$ is $\mathbf{w}_t \in \mathbb{R}^N$, the turnover rate between adjacent trading days is $TO_t$, the one-way trading commission rate is $c$, the unit net value of the portfolio at time $t$ is $V_t$, and the investor's utility function is $U(\cdot)$.

For feature space notations, the length of the price time series observation window is $W$, the price feature space is $\mathcal{X}_p = \mathbb{R}^{WN}$, the text feature space is $\mathcal{X}_s = \mathbb{R}^N$, and the shared latent space for features is $\mathcal{H} = \mathbb{R}^H$; the encoding mappings for price and text features are denoted as $f_p$ and $f_s$ respectively, and the cross-modal gated fusion mapping is denoted as $f_{fusion}$; the downside risk penalty coefficient and turnover penalty coefficient are $\lambda_{risk}$ and $\lambda_{turnover}$ respectively, with $\lambda_{risk}, \lambda_{turnover} > 0$.

In the Markov Decision Process (MDP) framework, the discrete-time MDP is denoted as $\mathcal{M}=(\mathcal{S}, \mathcal{A}, P, R, \gamma)$, where $\mathcal{S}$ is the non-empty measurable state space, $\mathcal{A}$ is the non-empty measurable action space, $P$ is the state transition kernel, $R$ is the bounded measurable reward function, and $\gamma \in [0,1]$ is the discount factor; the decision policy is denoted as $\pi$, and the cumulative discounted return at time $t$ is denoted as $G_t$; given a policy $\pi$, the state value function, action value function, and advantage function are denoted as $V^\pi$, $Q^\pi$, and $A^\pi$ respectively, and the single-step temporal difference error is denoted as $\delta_t$.

For model optimization notations, the parameters of the Actor policy network and Critic value network are $\theta$ and $\phi$ respectively; the policy optimization objective function is $\mathcal{J}(\theta)$, and the loss functions of the Actor and Critic networks are $\mathcal{L}_{Actor}$ and $\mathcal{L}_{Critic}$ respectively.

\subsection{Assumptions and Definitions}
\label{subsec:assumptions_definitions}
To conduct rigorous theoretical analysis and model construction, this paper proposes the following basic assumptions:

\begin{Assumption}
	The financial market has no frictional costs other than trading commissions, no short-selling mechanism, all assets have infinite liquidity, and trading behavior does not affect asset prices.
\end{Assumption}

\begin{Assumption}
	Market participants can costlessly obtain all historical information contained in $\mathcal{F}_t$ at time $t$, portfolio decisions are based solely on historically available information, and there is no look-ahead bias.
\end{Assumption}

\begin{Assumption}
	The dynamic evolution of asset prices satisfies the Markov property, i.e., given the current market state, the conditional distribution of future prices is independent of historical states.
\end{Assumption}

\begin{Assumption}
	Investors are risk-averse, their preferences are characterized by a Constant Relative Risk Aversion (CRRA) utility function, and the decision objective is to maximize the expected utility of terminal wealth.
\end{Assumption}

\begin{Assumption}[Monotonicity of Penalty Terms]
	For any two admissible portfolio strategies $\pi_1$ and $\pi_2$, if $\mathbb{E}_{\pi_1}[\ln V_T] > \mathbb{E}_{\pi_2}[\ln V_T]$, then the expected cumulative penalty term of strategy $\pi_1$ is no higher than that of strategy $\pi_2$, i.e., $\mathbb{E}_{\pi_1}\left[\sum_{t=1}^T \mathcal{P}_t\right] \leq \mathbb{E}_{\pi_2}\left[\sum_{t=1}^T \mathcal{P}_t\right]$.
	\label{assump:penalty_monotonicity}
\end{Assumption}

\begin{Assumption}[Regularity of Feature Mapping]
	The Layer Normalization (LayerNorm) module in the feature encoding mapping includes learnable gain parameter $\gamma$ and bias parameter $\beta$, with its standard form:
	$$
	\text{LayerNorm}(x) = \gamma \cdot \frac{x - \mathbb{E}[x]}{\sqrt{\text{Var}(x) + \epsilon}} + \beta
	$$
	where $\epsilon>0$ is a constant for numerical stability. There exist learnable parameters $\gamma, \beta$ such that $\text{LayerNorm}(x) = x$ holds for any $x$ in the domain.
	\label{assump:layernorm_identity}
\end{Assumption}

Based on the above assumptions, the core definitions of this paper are given as follows:

\begin{definition}[Price Process]
	The price process $\{p_{t,i}\}_{t \in \mathcal{T}}$ of the $i$-th risky asset is adapted to the filtration $\{\mathcal{F}_t\}$, and $p_{t,i} > 0$ for any $t \in \mathcal{T}$.
\end{definition}

\begin{definition}[Admissible Portfolio Strategy]
	A sequence $\{\mathbf{w}_t\}_{t=0}^{T-1}$ is called an admissible portfolio strategy if, for any $t \in \{0,1,\dots,T-1\}$, $\mathbf{w}_t$ is $\mathcal{F}_t$-measurable and satisfies the long-only constraint and budget constraint:
	\begin{equation}
		\mathbf{1}^\top \mathbf{w}_t = 1, \quad w_{t,i} \geq 0, \quad \forall i=1,2,\dots,N
		\label{eq:admissible_policy}
	\end{equation}
\end{definition}

\begin{definition}[Turnover and Net Value Process]
	Given an admissible portfolio strategy $\{\mathbf{w}_t\}$, the turnover rate at time $t$ is defined as:
	\begin{equation}
		TO_t = \frac{1}{2}\|\mathbf{w}_t - \mathbf{w}_{t-1}\|_1
		\label{eq:turnover}
	\end{equation}
	where the initial position weight is set as $\mathbf{w}_{-1}=\mathbf{1}/N$ by convention; the portfolio net value process $\{V_t\}_{t \in \mathcal{T}}$ satisfies the initial condition $V_0=1$, and for any $t \geq 1$, has the recurrence relation:
	\begin{equation}
		V_t = V_{t-1} \cdot \left( \mathbf{w}_{t-1}^\top \mathbf{y}_t - c \cdot TO_t \right)
		\label{eq:VT_recursion}
	\end{equation}
\end{definition}

\begin{definition}[Discrete-Time Markov Decision Process]
	A 5-tuple $\mathcal{M}=(\mathcal{S}, \mathcal{A}, P, R, \gamma)$ is called a discrete-time Markov Decision Process, where: $\mathcal{S}$ is the non-empty measurable state space, and the state $s_t$ at time $t$ includes the time series features of asset prices and the sentiment features of news texts; $\mathcal{A}$ is the non-empty measurable action space, and the action $a_t$ is the admissible portfolio weight vector; $P: \mathcal{S} \times \mathcal{A} \to \mathcal{P}(\mathcal{S})$ is the state transition kernel, where $\mathcal{P}(\mathcal{S})$ is the space of probability measures on $\mathcal{S}$; $R: \mathcal{S} \times \mathcal{A} \times \mathcal{S} \to \mathbb{R}$ is the bounded measurable reward function; $\gamma \in [0,1]$ is the discount factor.
\end{definition}

\begin{definition}[Policy and Cumulative Discounted Return]
	A mapping $\pi: \mathcal{S} \to \mathcal{P}(\mathcal{A})$ is called a stochastic policy, where $\pi(a|s)$ is the conditional probability of selecting action $a$ in state $s$; if $\pi(\cdot|s)$ is a degenerate point distribution for any $s \in \mathcal{S}$, then $\pi$ is called a deterministic policy. Given a policy $\pi$, the cumulative discounted return at time $t$ is defined as:
	\begin{equation}
		G_t = \sum_{k=0}^{T-t} \gamma^k R(s_{t+k},a_{t+k},s_{t+k+1})
		\label{eq:cumulative_return}
	\end{equation}
	where $a_{t+k} \sim \pi(\cdot|s_{t+k})$ and $s_{t+k+1} \sim P(\cdot|s_{t+k},a_{t+k})$.
\end{definition}

\begin{definition}[Value Function and Advantage Function]
	Given a policy $\pi$, the state value function $V^\pi: \mathcal{S} \to \mathbb{R}$ is defined as:
	\begin{equation}
		V^\pi(s) = \mathbb{E}_\pi\left[ G_t | s_t = s \right]
		\label{eq:state_value}
	\end{equation}
	the action value function $Q^\pi: \mathcal{S} \times \mathcal{A} \to \mathbb{R}$ is defined as:
	\begin{equation}
		Q^\pi(s,a) = \mathbb{E}_\pi\left[ G_t | s_t = s, a_t = a \right]
		\label{eq:action_value}
	\end{equation}
	the advantage function $A^\pi: \mathcal{S} \times \mathcal{A} \to \mathbb{R}$ is defined as:
	\begin{equation}
		A^\pi(s,a) = Q^\pi(s,a) - V^\pi(s)
		\label{eq:advantage}
	\end{equation}
\end{definition}

\begin{definition}[Temporal Difference Error]
	Given a policy $\pi$ and a parameterized state value function $V_\phi(s)$, the single-step temporal difference error is defined as:
	\begin{equation}
		\delta_t = R(s_t,a_t,s_{t+1}) + \gamma V_\phi(s_{t+1}) - V_\phi(s_t)
		\label{eq:td_error}
	\end{equation}
\end{definition}

\begin{definition}[Cross-Modal Feature Encoding and Fusion Mapping]
	The price feature encoding mapping $f_p: \mathcal{X}_p \to \mathcal{H}$ is defined as:
	\begin{equation}
		f_p(\mathbf{p}) = \text{ReLU}\left( \text{LayerNorm}\left( \mathbf{W}_p \mathbf{p} + \mathbf{b}_p \right) \right)
		\label{eq:price_encoding}
	\end{equation}
	the text feature encoding mapping $f_s: \mathcal{X}_s \to \mathcal{H}$ is defined as:
	\begin{equation}
		f_s(\mathbf{s}) = \text{ReLU}\left( \text{LayerNorm}\left( \mathbf{W}_s \mathbf{s} + \mathbf{b}_s \right) \right)
		\label{eq:text_encoding}
	\end{equation}
	the cross-modal gated fusion mapping $f_{fusion}: \mathcal{H} \times \mathcal{H} \to \mathcal{H}$ is defined as:
	\begin{equation}
		f_{fusion}(\mathbf{f}_p, \mathbf{f}_s) = \text{ReLU}\left( \mathbf{W}_f \left( \mathbf{f}_p \odot (1 + \mathbf{g}) + \mathbf{f}_s \right) + \mathbf{b}_f \right)
		\label{eq:fusion_mapping}
	\end{equation}
	where the gating weight $\mathbf{g} = \text{Tanh}\left( \mathbf{W}_g \mathbf{f}_s + \mathbf{b}_g \right)$, and $\mathbf{W}_p,\mathbf{b}_p,\mathbf{W}_s,\mathbf{b}_s,\mathbf{W}_f,\mathbf{b}_f,\mathbf{W}_g,\mathbf{b}_g$ are learnable parameters.
\end{definition}

\begin{definition}[Risk-Sensitive Reward Function]
	The risk-sensitive reward function designed in this paper is defined as:
	\begin{equation}
		r_t = \ln \left( \mathbf{w}_{t-1}^\top \mathbf{y}_t - c \cdot TO_t \right) - \lambda_{risk} \cdot \left[ \max\left( - \ln \left( \mathbf{w}_{t-1}^\top \mathbf{y}_t - c \cdot TO_t \right), 0 \right) \right]^2 - \lambda_{turnover} \cdot TO_t
		\label{eq:reward_def}
	\end{equation}
	where the single-period penalty term is denoted as:
	\begin{equation}
		\mathcal{P}_t = \lambda_{risk} \cdot \left[ \max\left( - \ln \left( \mathbf{w}_{t-1}^\top \mathbf{y}_t - c \cdot TO_t \right), 0 \right) \right]^2 + \lambda_{turnover} \cdot TO_t
		\label{eq:penalty_def}
	\end{equation}
\end{definition}

\subsection{Lemmas}\label{subsec:lemmas}
This section lists the classical established conclusions relied on in the theoretical analysis of this paper. All lemmas are annotated with original sources, and the proof process is not repeated.

\begin{Lemma}[Bellman Expectation Equation]
	For any policy $\pi$, the state value function $V^\pi$ can be expressed as:
	\begin{equation}
		V^\pi(s) = \mathbb{E}_\pi \left[ R(s,a,s') + \gamma V^\pi(s') | s_t = s \right]
		\label{eq:bellman_exp_V}
	\end{equation}
	where the expectation is taken with respect to $a \sim \pi(\cdot | s)$ and $s' \sim P(\cdot | s,a)$; the action value function $Q^\pi$ can be expressed as:
	\begin{equation}
		Q^\pi(s,a) = \mathbb{E}_\pi \left[ R(s,a,s') + \gamma Q^\pi(s',a') | s_t = s, a_t = a \right]
		\label{eq:bellman_exp_Q}
	\end{equation}
	where the expectation is taken with respect to $s' \sim P(\cdot | s,a)$ and $a' \sim \pi(\cdot | s')$ \cite{bellman1957dynamic}.
\end{Lemma}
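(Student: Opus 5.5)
The plan is to derive both identities directly from the definition of the cumulative discounted return $G_t$ in \eqref{eq:cumulative_return}, using a one-step recursion together with the tower property of conditional expectation and the Markov structure of $\mathcal{M}$. First, from \eqref{eq:cumulative_return} I will split off the $k=0$ term and reindex the remaining sum with $j=k-1$. This gives the pathwise recursion
\begin{equation*}
G_t = R(s_t,a_t,s_{t+1}) + \gamma G_{t+1},
\end{equation*}
with the convention $G_{T+1}=0$ at the horizon.

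Boundedness of $R$ guarantees that all expectations below are finite, so linearity of expectation applies. I will then take $\mathbb{E}_\pi[\,\cdot \mid s_t=s]$ of both sides of the recursion. For the second term, I condition additionally on $(s_t,a_t,s_{t+1})$ and use the tower property. The key step is to show that
\begin{equation*}
\mathbb{E}_\pi[G_{t+1}\mid s_t,a_t,s_{t+1}] = V^\pi(s_{t+1}).
\end{equation*}
This follows because, given $s_{t+1}$, the future actions are drawn from $\pi(\cdot\mid s_{t+k})$ and the future states from $P(\cdot\mid s_{t+k},a_{t+k})$. So the conditional law of the future trajectory depends only on $s_{t+1}$, by the Markov property (Assumption 3 together with the definition of $\mathcal{M}$ and the stationarity of $\pi$). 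Substituting this into the conditioned recursion yields \eqref{eq:bellman_exp_V}, with the outer expectation taken over $a\sim\pi(\cdot\mid s)$ and $s'\sim P(\cdot\mid s,a)$.

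The $Q$ identity \eqref{eq:bellman_exp_Q} follows the same route, now conditioning on $s_t=s$ and $a_t=a$. The additional ingredient is the relation $V^\pi(s')=\mathbb{E}_{a'\sim\pi(\cdot\mid s')}[Q^\pi(s',a')]$, which I obtain by conditioning \eqref{eq:state_value} on the first action and using \eqref{eq:action_value}. Substituting it for the term $\mathbb{E}[G_{t+1}\mid s_{t+1}=s']$ gives the stated form.

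The main obstacle I expect is the finite horizon $T$. In that case $V^\pi$, as written in \eqref{eq:state_value}, actually depends on $t$ through the summation limit $T-t$. As a result, $\mathbb{E}[G_{t+1}\mid s_{t+1}]$ equals the value function at time $t+1$, not the same function $V^\pi$. I would handle this in one of two ways:
\begin{itemize}
\item augment the state with time, $\tilde s_t=(s_t,t)$, so that the value function becomes stationary in the augmented state and the equation holds verbatim; or
\item interpret $V^\pi$ as time-indexed and note that the identity holds with $V^\pi_{t}$ on the left and $V^\pi_{t+1}$ on the right.
\end{itemize}
I would also need a measurability argument to ensure the conditional expectations are well defined as functions of the state. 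This is routine given that $P$ is a kernel and $R$ is bounded and measurable, as in \cite{bellman1957dynamic}.
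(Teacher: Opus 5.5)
The paper gives no proof of this lemma: it is listed among the ``classical established conclusions'' and attributed to Bellman (1957), so your proposal is a different route in that it actually supplies an argument. The argument is correct. It consists of the pathwise recursion $G_t = R(s_t,a_t,s_{t+1}) + \gamma G_{t+1}$ with $G_{T+1}=0$, the tower property, and the Markov identification $\mathbb{E}_\pi[G_{t+1}\mid s_t,a_t,s_{t+1}] = V^\pi(s_{t+1})$. For the $Q$ version you add $V^\pi(s') = \mathbb{E}_{a'\sim\pi(\cdot\mid s')}[Q^\pi(s',a')]$. This is the standard derivation the citation stands in for.

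What your route buys beyond the citation is the finite-horizon caveat. That caveat exposes a genuine defect in the statement as written, not a technicality. With $G_t$ truncated at $T$ as in \eqref{eq:cumulative_return}, the function $V^\pi$ in \eqref{eq:state_value} depends on $t$. The equation with the same $V^\pi$ on both sides is then false in general. Take a single state, $R\equiv 1$ and $\gamma=1$: then $\mathbb{E}_\pi[G_t\mid s_t=s] = T-t+1$, whereas a time-homogeneous solution would need $V = 1+V$.

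Either of your repairs is the correct reading of the lemma. One is the time-indexed form $V^\pi_t(s) = \mathbb{E}[R(s,a,s') + \gamma V^\pi_{t+1}(s')]$ with $V^\pi_{T+1}\equiv 0$. The other is augmenting the state to $(s_t,t)$, with the policy ignoring $t$. The paper's citation-only treatment leaves this mismatch between its finite-horizon definitions and the stationary form of Bellman's result unaddressed.

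Your measurability remark is also apt. Defining $V^\pi(s)$ for every $s$, rather than only almost everywhere under the law of $s_t$, is done via the process started at $s$. That construction is routine because $P$ and $\pi$ are kernels.
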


\begin{Lemma}[Bellman Optimality Equation]
	The optimal state value function $V^*$ satisfies:
	\begin{equation}
		V^*(s) = \max_{a \in \mathcal{A}} \mathbb{E} \left[ R(s,a,s') + \gamma V^*(s') | s_t = s, a_t = a \right]
		\label{eq:bellman_opt_V}
	\end{equation}
	where the expectation is taken with respect to $s' \sim P(\cdot | s,a)$; the optimal action value function $Q^*$ satisfies \cite{bellman1957dynamic}:
	\begin{equation}
		Q^*(s,a) = \mathbb{E} \left[ R(s,a,s') + \gamma \max_{a' \in \mathcal{A}} Q^*(s',a') | s_t = s, a_t = a \right]
		\label{eq:bellman_opt_Q}
	\end{equation}

\end{Lemma}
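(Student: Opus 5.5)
We prove the Bellman optimality equation by the standard finite-horizon argument: backward induction, combined with the Bellman expectation equation \eqref{eq:bellman_exp_V}--\eqref{eq:bellman_exp_Q}. Since $\mathcal{T}=\{0,\dots,T\}$ is finite and $R$ is bounded, every $G_t$ in \eqref{eq:cumulative_return} is a finite sum of bounded terms. Hence all value functions are bounded and every expectation is well defined, even when $\gamma=1$.

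We first define the time-indexed optimal values
\[
V^*_t(s)=\sup_\pi \mathbb{E}_\pi[G_t\mid s_t=s], \qquad Q^*_t(s,a)=\sup_\pi \mathbb{E}_\pi[G_t\mid s_t=s,a_t=a],
\]
with the terminal convention $V^*_{T+1}\equiv 0$. The statement's $V^*$ and $Q^*$ are read as these objects, with the time index suppressed; equivalently, time is absorbed into the state.

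The induction runs backward in $t$ and has two inequalities.

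\textbf{Step 1 (upper bound).} Fix $(s,a)$ and any policy $\pi$. The Bellman expectation equation for $Q^\pi$ gives
\[
Q^\pi_t(s,a)=\mathbb{E}\big[R(s,a,s')+\gamma V^\pi_{t+1}(s')\big].
\]
By definition $V^\pi_{t+1}\le V^*_{t+1}$ pointwise, and $\gamma\ge 0$, so
\[
Q^\pi_t(s,a)\le \mathbb{E}\big[R(s,a,s')+\gamma V^*_{t+1}(s')\big].
\]
For $V^\pi_t(s)$, which averages $Q^\pi_t(s,\cdot)$ over $\pi(\cdot\mid s)$, an average never exceeds a supremum. Hence
\[
V^\pi_t(s)\le \sup_a \mathbb{E}\big[R(s,a,s')+\gamma V^*_{t+1}(s')\big].
\]
Taking the supremum over $\pi$ gives the upper bound.

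\textbf{Step 2 (lower bound).} Fix $\varepsilon>0$. Choose a policy that plays some $\varepsilon$-maximising action $a$ at $(s,t)$ and then follows a policy that is $\varepsilon$-optimal from every state at time $t+1$. Its value is at least
\[
\mathbb{E}\big[R(s,a,s')+\gamma V^*_{t+1}(s')\big]-(1+\gamma)\varepsilon .
\]
Letting $\varepsilon\to 0$ gives the reverse inequality, so equality holds for both $V^*$ and $Q^*$. The $Q^*$ form then follows by substituting $V^*_{t+1}(s')=\sup_{a'}Q^*_{t+1}(s',a')$ into $Q^*_t(s,a)=\mathbb{E}[R+\gamma V^*_{t+1}(s')]$.

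\textbf{Main obstacle.} The hard part is Step 2 in a general measurable space $\mathcal{S}$. First, the ``$\varepsilon$-optimal from every state'' continuation policy must be chosen measurably in $s'$. Second, the supremum in $a$ must be attained for the statement's ``$\max$'' to be literal. We would handle this by restricting to the setting actually used in the paper. The action set is the simplex from \eqref{eq:admissible_policy}, which is compact. We would assume $a\mapsto \mathbb{E}[R(s,a,s')+\gamma V^*_{t+1}(s')]$ is upper semicontinuous, which holds when $R$ is continuous in $a$ and $P$ is weakly continuous. A measurable selection theorem (Kuratowski--Ryll-Nardzewski) then gives a measurable maximiser, and the supremum becomes a maximum.

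If one prefers not to impose these conditions, the fallback is to state the result with $\sup$ in place of $\max$ over universally measurable policies, in the standard Bertsekas--Shreve framework. This is the framework of the classical reference \cite{bellman1957dynamic} cited in the statement.
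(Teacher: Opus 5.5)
The paper gives no proof of this lemma. It lists it among the ``classical established conclusions'' whose proofs ``are not repeated,'' and points only to \cite{bellman1957dynamic}. Your proposal therefore supplies an argument the paper omits, and the route you chose is the appropriate one for this setting: finite-horizon backward induction, with an upper bound from the Bellman expectation equation and a lower bound from $\varepsilon$-optimal continuation. The return $G_t$ in \eqref{eq:cumulative_return} is a finite sum ending at $T$, and the paper allows $\gamma=1$. So the familiar infinite-horizon proof, by contraction of the Bellman operator, is unavailable, since it needs $\gamma<1$, and that proof is also what produces a stationary $V^*$. Because $G_t$ depends on $t$, the lemma's time-homogeneous form is only correct once $V^*$ is time-indexed or time is put into the state, as you do. The augmentation also reconciles your Step 2 with the paper's definition of a policy as a stationary map $\mathcal{S}\to\mathcal{P}(\mathcal{A})$. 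The policy you construct is time-dependent, so it belongs to the paper's class only on the augmented state space. What your route adds over the citation is a precise statement of the sense in which the lemma holds: time-indexed values, $\sup$ in general, and $\max$ only under extra conditions. The paper's formulation leaves all of this implicit.

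One correction to your side remark on attainment. Continuity of $R$ in $a$ together with weak continuity of $P$ does not make $a\mapsto\mathbb{E}[R(s,a,s')+\gamma V^*_{t+1}(s')]$ upper semicontinuous. Weak convergence only controls integrands that are semicontinuous in $s'$. For example, take $\mathcal{A}=[0,1]$, $P(\cdot\mid s,a)=\delta_a$, $R(s,a,s')=s'$ for $s'<1$, and $R(s,a,1)=0$. At $t=T$ the map equals $a$ on $[0,1)$ and $0$ at $a=1$, so the supremum $1$ is not attained.

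The standard sufficient conditions are those of the semicontinuous model: $R$ bounded and jointly upper semicontinuous in $(s,a,s')$, $P$ weakly continuous in $(s,a)$, and $\mathcal{A}$ compact. Your backward induction then also shows each $V^*_t$ is upper semicontinuous. That is what makes the next stage's integrand semicontinuous in $s'$ and yields a Borel maximizer. Without these conditions, your fallback to $\sup$ over universally measurable policies is fine.
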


\begin{Lemma}[Policy Gradient Theorem]
	For a parameterized policy $\pi_\theta$, the gradient of the policy optimization objective function $\mathcal{J}(\theta) = \mathbb{E}_{\tau \sim \pi_\theta} \left[ G_0 \right]$ with respect to parameter $\theta$ is:
	\begin{equation}
		\nabla_\theta \mathcal{J}(\theta) = \mathbb{E}_{\tau \sim \pi_\theta} \left[ \sum_{t=0}^{T-1} \nabla_\theta \ln \pi_\theta(a_t | s_t) \cdot G_t \right]
		\label{eq:policy_gradient}
	\end{equation}
	where $\tau = (s_0,a_0,s_1,a_1,\dots,s_T)$ is the trajectory sampled following policy $\pi_\theta$ \cite{sutton1998reinforcement}.
\end{Lemma}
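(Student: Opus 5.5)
We prove the Policy Gradient Theorem in its finite-horizon, trajectory-based (REINFORCE) form. The approach is the score-function (likelihood-ratio) identity.

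\textbf{Step 1: write $\mathcal{J}$ as an integral over trajectories.} Write $\mathcal{J}(\theta)=\int p_\theta(\tau)\,G_0(\tau)\,d\tau$. By the Markov structure of $\mathcal{M}$, the trajectory density factorizes as
\begin{equation*}
p_\theta(\tau)=\rho_0(s_0)\prod_{t=0}^{T-1}\pi_\theta(a_t|s_t)\,P(s_{t+1}|s_t,a_t),
\end{equation*}
where $\rho_0$ is the initial-state law. Since $\rho_0$ and $P$ do not depend on $\theta$, $\nabla_\theta \ln p_\theta(\tau)=\sum_{t=0}^{T-1}\nabla_\theta\ln\pi_\theta(a_t|s_t)$.

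\textbf{Step 2: differentiate under the integral.} Using $\nabla p_\theta=p_\theta\nabla\ln p_\theta$, we get
\begin{equation*}
\nabla_\theta\mathcal{J}(\theta)=\mathbb{E}_{\tau\sim\pi_\theta}\Big[\sum_{t=0}^{T-1}\nabla_\theta\ln\pi_\theta(a_t|s_t)\,G_0\Big].
\end{equation*}
We assume $\pi_\theta$ has a density that is differentiable in $\theta$ with integrable score. Boundedness of $R$ and the finite horizon then make $G_0$ bounded, so a dominated-convergence argument justifies the interchange.

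\textbf{Step 3: replace $G_0$ by $G_t$ (causality).} Split $G_0$ into the rewards collected before time $t$ and the remainder from $t$ onward. For $k<t$, the reward $R(s_k,a_k,s_{k+1})$ is measurable with respect to the history up to $s_t$. Conditioning on that history, the term $\mathbb{E}[\nabla_\theta\ln\pi_\theta(a_t|s_t)\mid s_t]=\int\nabla_\theta\pi_\theta(a|s_t)\,da=\nabla_\theta 1=0$, so all past rewards drop out in expectation.

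The remaining future rewards are $\sum_{k\ge t}\gamma^k R_k=\gamma^t G_t$ under the definition of $G_t$ in \eqref{eq:cumulative_return}. This yields the statement with $\gamma^t G_t$ in place of $G_t$. The two forms coincide when $\gamma=1$; for $\gamma<1$, the stated form without the $\gamma^t$ factor is the standard practical convention.

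\textbf{Main obstacle.} The delicate point is this discount factor. I would either state the result for $\gamma=1$, or note the $\gamma^t$ factor explicitly and state the version in the lemma under the usual convention (it is a biased but common estimator; cf.\ \cite{sutton1998reinforcement}).

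A secondary subtlety concerns the deterministic policies that the paper allows. There $\ln\pi_\theta$ is undefined, so the lemma must be restricted to stochastic policies with densities, for example a Dirichlet or Gaussian-softmax actor on the simplex \eqref{eq:admissible_policy}.
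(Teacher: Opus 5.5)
The paper gives no proof of this lemma; it quotes it from Sutton and Barto and explicitly omits the argument, so your proof has to stand on its own, and it does.

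\textbf{Route.} Yours is the likelihood-ratio (Williams/REINFORCE) derivation over finite trajectories: factorize $p_\theta(\tau)$ with $\theta$-free $\rho_0$ and $P$, apply the score identity, then discard past rewards. The textbook proof of the policy gradient theorem instead unrolls the Bellman recursion for $\nabla_\theta V^{\pi_\theta}$. That gives the state-visitation form $\nabla_\theta\mathcal{J}\propto\sum_s\mu(s)\sum_a Q^{\pi_\theta}(s,a)\nabla_\theta\pi_\theta(a|s)$ and extends to infinite horizons. Your route is more elementary and lands directly on the sampled form the lemma displays.

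\textbf{Tightenings.} In Step 3, condition on the full history $h_t=(s_0,a_0,\dots,s_t)$ rather than on $s_t$, since past rewards are $h_t$-measurable; the key fact $\mathbb{E}[\nabla_\theta\ln\pi_\theta(a_t|s_t)\mid h_t]=0$ still holds. For the interchanges of $\nabla_\theta$ with integrals (Step 2, and $\int\nabla_\theta\pi_\theta(a|s)\,da=0$), an integrable score at the single point $\theta$ is not enough. You need bounds on $\pi_{\theta'}(a|s)$ and $\|\nabla_\theta\pi_{\theta'}(a|s)\|$ that hold for all $\theta'$ near $\theta$ and are integrable in $a$ uniformly in $s$. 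Your restriction to stochastic policies with densities is also necessary.

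\textbf{Discount factor.} Your diagnosis is correct, but it is a correction to the statement, not a convention. So drop the option to ``state the version in the lemma under the usual convention.'' What you have proved is $\nabla_\theta\mathcal{J}(\theta)=\mathbb{E}_{\tau\sim\pi_\theta}\bigl[\sum_{t=0}^{T-1}\gamma^t\nabla_\theta\ln\pi_\theta(a_t|s_t)\,G_t\bigr]$. For $\gamma<1$ the $\gamma^t$-free formula is false in general. Take $T=2$, a deterministic transition to a fixed $s_1$, all rewards zero except $R(s_1,a_1,s_2)=a_1\in\{0,1\}$, and $\theta$ entering only through $\pi_\theta(1|s_1)=\sigma(\theta)$ with $\sigma$ the logistic function. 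Then $\mathcal{J}'(\theta)=\gamma\sigma'(\theta)$, while the lemma's right-hand side is $\sigma'(\theta)$. A biased estimator is not an identity, and the paper trains with $\gamma=0.99$, so the lemma as displayed holds only for $\gamma=1$.

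\textbf{Off-by-one.} You also silently repaired an indexing error, and you should make the repair explicit. The paper's $G_t$ sums over $k=0,\dots,T-t$, so $G_0$ contains $R(s_T,a_T,s_{T+1})$, which is not a function of $\tau=(s_0,\dots,s_T)$. Your Step 1 treats $G_0$ as a function of $\tau$, i.e.\ it reads the upper limit as $T-t-1$. Under the literal definition, the trajectory density needs the extra factor $\pi_\theta(a_T|s_T)P(s_{T+1}|s_T,a_T)$. The gradient then picks up $\gamma^T\mathbb{E}[\nabla_\theta\ln\pi_\theta(a_T|s_T)\,G_T]$, which the lemma omits even when $\gamma=1$.
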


\begin{Lemma}[Expressiveness Upper Bound of Linear Feature Concatenation]
	The fusion form of linear feature concatenation is $\mathbf{f}_{concat} = \mathbf{W}_{concat} \cdot [\mathbf{p}; \mathbf{s}] + \mathbf{b}_{concat}$, where $[\cdot; \cdot]$ denotes vector concatenation. Its fusion weight is a fixed matrix independent of the input, and its expressiveness is limited to affine linear transformations, which cannot characterize input-dependent nonlinear interaction relationships \cite{goodfellow2016deep}.
\end{Lemma}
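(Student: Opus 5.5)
The claim is that the concatenation fusion $\mathbf{f}_{concat}(\mathbf{p},\mathbf{s}) = \mathbf{W}_{concat}[\mathbf{p};\mathbf{s}] + \mathbf{b}_{concat}$ is affine. I will show that its partial Jacobian with respect to $\mathbf{p}$ is constant, so it cannot depend on the text input $\mathbf{s}$. This rules out any input-dependent interaction.

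First, write $\mathbf{W}_{concat} = [\mathbf{W}_1 \; \mathbf{W}_2]$, with $\mathbf{W}_1$ acting on $\mathbf{p}$ and $\mathbf{W}_2$ on $\mathbf{s}$. Then
\[
\mathbf{f}_{concat}(\mathbf{p},\mathbf{s}) = \mathbf{W}_1\mathbf{p} + \mathbf{W}_2\mathbf{s} + \mathbf{b}_{concat}.
\]
This is additively separable: a function of $\mathbf{p}$ plus a function of $\mathbf{s}$. Two consequences follow directly.
\begin{itemize}
\item The map is affine jointly in $(\mathbf{p},\mathbf{s})$, being a fixed linear map plus a constant.
\item The mixed second derivatives vanish: $\partial^2 \mathbf{f}_{concat}/\partial p_j \partial s_k = 0$ for all $j,k$. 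Equivalently, $\partial \mathbf{f}_{concat}/\partial \mathbf{p} = \mathbf{W}_1$ is the same for every $\mathbf{s}$.
\end{itemize}

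To make ``cannot characterize input-dependent nonlinear interaction'' precise, I define a fusion map $f$ to \emph{exhibit interaction} if $\nabla_{\mathbf{p}} f(\mathbf{p},\mathbf{s})$ is not constant in $\mathbf{s}$. The second consequence above shows that $\mathbf{f}_{concat}$ never exhibits interaction, whatever its parameters. So a target such as $\mathbf{p}\odot\mathbf{s}$ cannot be represented by any choice of $(\mathbf{W}_{concat},\mathbf{b}_{concat})$: its mixed partials $\delta_{jk}$ are nonzero.

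For the uniform version of the statement, I show that $\mathbf{p}\odot\mathbf{s}$ is not even approximable on a bounded domain. The argument uses a four-point finite difference on the corners $(\pm\mathbf{e}_j,\pm\mathbf{e}_j)$ of a square. Any affine map has zero four-point difference there, while $p_j s_j$ has difference $4$. Hence the sup-norm error of any affine approximation is at least $1$, by the triangle inequality across the four corners.

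The main obstacle is interpretive rather than technical. ``Expressiveness'' is informal, so the real work is choosing the rigorous surrogate, namely the Jacobian-constancy and finite-difference criterion above. I would also note the contrast with \eqref{eq:fusion_mapping}. There the term $\mathbf{f}_p\odot(1+\mathbf{g})$, with $\mathbf{g}$ depending on $\mathbf{f}_s$, has a $\mathbf{p}$-Jacobian that varies with $\mathbf{s}$ generically. This shows the criterion actually separates the two fusion schemes, which is how the lemma is used later.
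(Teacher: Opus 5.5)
Your proof is correct. It cannot follow the paper's route, because the paper gives no proof of this lemma: it lists it among the ``classical established conclusions'' with a citation. The only related argument is inside the proof of Theorem 2, where $p\cdot s\notin\mathcal{F}_{concat}$ is shown in the scalar case by point evaluation ($p=s=0$ forces $b=0$, then $s=0$ and $p=0$ force $w_p=w_s=0$, and $p=s=1$ gives a contradiction).

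You do two things differently. First, the block split $\mathbf{W}_{concat}=[\mathbf{W}_1\;\mathbf{W}_2]$ shows that $\partial\mathbf{f}_{concat}/\partial\mathbf{p}=\mathbf{W}_1$ is independent of $\mathbf{s}$. This is a structural criterion for ``no input-dependent interaction'' that holds for every parameter choice, rather than a single counterexample. Second, your four-point mixed difference at $(\pm\mathbf{e}_j,\pm\mathbf{e}_j)$ vanishes for every affine map and equals $4$ for $p_js_j$. This upgrades exact non-representability to a quantitative bound: every affine map is at sup-distance at least $1$ from $p_js_j$ on the cube. The bound is sharp, since the zero map attains it. It is the natural counterpart to the approximation claim in part 2 of Theorem 2, and because it uses no derivatives it applies directly to ReLU-based maps. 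Qualitatively, a positive gap already follows from the paper-style argument: affine functions restricted to a compact set with nonempty interior form a finite-dimensional, hence closed, subspace of $C(\mathcal{K})$. Your argument adds the explicit constant.

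A few small repairs. The target $\mathbf{p}\odot\mathbf{s}$ requires $D_p=D_s$, which fails in the paper's setting ($\mathcal{X}_p=\mathbb{R}^{WN}$ versus $\mathcal{X}_s=\mathbb{R}^N$), so use the scalar target $p_1s_1$ in one output coordinate instead. The mixed partials of the $i$-th component of $\mathbf{p}\odot\mathbf{s}$ are $\delta_{ij}\delta_{ik}$, not $\delta_{jk}$. On a domain that only contains a cube of half-width $\varepsilon$, use the rescaled corners $(\pm\varepsilon\mathbf{e}_j,\pm\varepsilon\mathbf{e}_j)$, which give the bound $\varepsilon^2$.
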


\begin{Lemma}[Power Series Expansion of Hyperbolic Tangent Function]
	The power series expansion of the hyperbolic tangent function $\text{Tanh}(x)$ for $x \in (-1,1)$ is:
	\begin{equation}
		\text{Tanh}(x) = \sum_{n=1}^\infty \frac{2^{2n}(2^{2n}-1) B_{2n}}{(2n)!} x^{2n-1} = x - \frac{x^3}{3} + \frac{2x^5}{15} - \dots
		\label{eq:tanh_series}
	\end{equation}
	where $B_{2n}$ are Bernoulli numbers, and the series converges absolutely and uniformly in $(-1,1)$ \cite{abramowitz1948handbook}.
\end{Lemma}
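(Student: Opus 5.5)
The plan is to derive the expansion from the Bernoulli generating function
\[
\frac{t}{e^t-1}=\sum_{n\ge 0}\frac{B_n}{n!}t^n, \qquad |t|<2\pi,
\]
by routing through the hyperbolic cotangent. Convergence then follows from locating the singularities of $\text{Tanh}$ in the complex plane.

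\textbf{Step 1: expansion of $x\coth x$.} Write
\[
x\coth x = x + \frac{2x}{e^{2x}-1}.
\]
Substitute $t=2x$ into the generating function. Since $B_1=-\tfrac12$ and $B_{2n+1}=0$ for $n\ge1$, the linear term cancels. This gives
\[
x\coth x=\sum_{n\ge0}\frac{2^{2n}B_{2n}}{(2n)!}x^{2n}, \qquad 0<|x|<\pi,
\]
so $\coth x = \tfrac1x+\sum_{n\ge1}\frac{2^{2n}B_{2n}}{(2n)!}x^{2n-1}$.

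\textbf{Step 2: the identity $\text{Tanh}(x)=2\coth(2x)-\coth x$.} This follows from $\coth(2x)=\frac{1+\tanh^2x}{2\tanh x}$. Inserting the series from Step 1 into both terms:
\begin{itemize}
\item the singular parts cancel, since $2\cdot\frac{1}{2x}-\frac1x=0$;
\item the coefficient of $x^{2n-1}$ becomes $\frac{B_{2n}}{(2n)!}\bigl(2\cdot 2^{2n}\cdot 2^{2n-1}-2^{2n}\bigr)=\frac{2^{2n}(2^{2n}-1)B_{2n}}{(2n)!}$, which is exactly the coefficient in \eqref{eq:tanh_series}.
\end{itemize}
As a check, $B_2=\tfrac16$ and $B_4=-\tfrac1{30}$ give the coefficients $1$ and $-\tfrac13$, as stated. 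The identity is valid for $0<|x|<\pi/2$, which is where both series in Step 1 converge. At $x=0$ both sides vanish, so the expansion holds on all of $(-\pi/2,\pi/2)$.

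\textbf{Step 3: absolute and uniform convergence.} This is the step needing care, because the lemma asserts uniform convergence on the open interval $(-1,1)$, so pointwise convergence is not enough. $\text{Tanh}$ is holomorphic except at the poles $z=i\pi(k+\tfrac12)$, $k\in\mathbb{Z}$. Hence the Taylor series at $0$ has radius of convergence exactly $\pi/2>1$. Pick any $\rho$ with $1<\rho<\pi/2$. Then $\sum_n |a_n|\rho^{2n-1}<\infty$, where $a_n$ denotes the coefficient above. Apply the Weierstrass M-test with $M_n=|a_n|$: since $\sum_n|a_n|\le\sum_n|a_n|\rho^{2n-1}<\infty$, the series converges absolutely and uniformly on the closed interval $[-1,1]$. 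It therefore converges absolutely and uniformly on $(-1,1)$.

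Alternatively, one can avoid complex analysis and get the same summability directly from $|B_{2n}|=\frac{2(2n)!}{(2\pi)^{2n}}\zeta(2n)$. This gives
\[
|a_n|\le \frac{2\zeta(2)\,2^{4n}}{(2\pi)^{2n}}=2\zeta(2)\left(\frac{2}{\pi}\right)^{2n},
\]
a convergent geometric bound.
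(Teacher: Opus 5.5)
Your proof is correct. It also goes further than the paper, which gives no argument for this lemma: it lists the expansion among the ``classical established conclusions,'' states that the proof is not repeated, and cites the Abramowitz--Stegun handbook. That handbook records the series only with its domain of validity $|z|<\pi/2$.

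Your Steps 1--2 are the standard derivation behind that formula: the Bernoulli generating function for $x\coth x$, then the identity $\tanh x=2\coth 2x-\coth x$. Your coefficient bookkeeping and the checks against $B_2$ and $B_4$ are right.

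What your route adds over the bare citation is Step 3. The lemma claims absolute \emph{and uniform} convergence on $(-1,1)$, which a formula handbook does not address. This holds precisely because the radius of convergence $\pi/2$ exceeds $1$. Your M-test argument makes this explicit, and in fact gives uniform convergence on the closed interval $[-1,1]$.

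Of your two convergence arguments, the Euler-formula bound $|a_n|\le 2\zeta(2)(2/\pi)^{2n}$ is the more useful one for the paper. It yields an explicit remainder estimate $|\tanh y-y|\le C|y|^3$ for $|y|\le 1$. That is the quantitative content actually needed when the lemma is invoked for the truncation $\tanh(ks)=ks+o(ks)$ in the proof of Theorem 2.

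For that application alone, the full series is more than necessary. From $(\tanh y-y)'=-\tanh^2 y$ and $|\tanh t|\le|t|$ you get $|\tanh y-y|\le|y|^3/3$ directly.
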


\subsection{Theorems}\label{subsec:theorems}
This section presents two core theorems independently proposed in this paper, which respectively characterize the utility consistency of the risk-sensitive reward function and the nonlinear expressiveness advantage of the cross-modal gated fusion, providing rigorous theoretical support for the core innovations of the model in this paper.

\begin{theorem}[Utility Consistency of Risk-Sensitive Reward Function]
	Let the investor's utility function be the CRRA utility function. When the relative risk aversion coefficient $\rho=1$, under \Cref{assump:penalty_monotonicity}, the expected cumulative value of the risk-sensitive reward function given in  \Cref{eq:reward_def} has strictly consistent monotonicity with the investor's terminal expected utility, i.e., for any two admissible strategies $\pi_1$ and $\pi_2$, it holds that:
	\begin{equation}
		\mathbb{E}_{\pi_1}\left[\sum_{t=1}^T r_t\right] > \mathbb{E}_{\pi_2}\left[\sum_{t=1}^T r_t\right] \iff \mathbb{E}_{\pi_1}\left[\ln V_T\right] > \mathbb{E}_{\pi_2}\left[\ln V_T\right]
		\label{eq:utility_consistency}
	\end{equation}
\end{theorem}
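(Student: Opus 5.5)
The plan is to reduce the cumulative reward to the terminal log-wealth minus an accumulated penalty, and then use \Cref{assump:penalty_monotonicity} to compare strategies.

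\textbf{Reduction to log-wealth.} For $\rho=1$ the CRRA utility is $U(x)=\ln x$, so the investor's objective is exactly $\mathbb{E}[\ln V_T]$. Rewriting \Cref{eq:reward_def} with \Cref{eq:penalty_def} gives
\[
r_t=\ln\left(\mathbf{w}_{t-1}^\top\mathbf{y}_t-c\cdot TO_t\right)-\mathcal{P}_t .
\]
I would then take logarithms in the net-value recursion \Cref{eq:VT_recursion}, which gives $\ln V_t-\ln V_{t-1}=\ln(\mathbf{w}_{t-1}^\top\mathbf{y}_t-c\cdot TO_t)$. Summing over $t=1,\dots,T$ and using $V_0=1$ telescopes to
\[
\sum_{t=1}^T r_t=\ln V_T-\sum_{t=1}^T\mathcal{P}_t .
\]
This step needs the growth factors to be positive, so that the logarithms exist. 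That follows from $p_{t,i}>0$, the long-only and budget constraints \Cref{eq:admissible_policy}, and $c\cdot TO_t<\mathbf{w}_{t-1}^\top\mathbf{y}_t$; the last condition is implicit in the reward being well defined and bounded. Boundedness also makes all expectations finite, so linearity of expectation gives
\[
\Delta_r=\Delta_{\ln V}-\Delta_{\mathcal{P}},
\]
where each $\Delta$ denotes the $\pi_1$ value minus the $\pi_2$ value of the corresponding expectation.

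\textbf{Direction ($\Leftarrow$).} If $\Delta_{\ln V}>0$, then \Cref{assump:penalty_monotonicity} gives $\Delta_{\mathcal{P}}\le 0$. Hence $\Delta_r\ge\Delta_{\ln V}>0$.

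\textbf{Direction ($\Rightarrow$).} I would argue by contrapositive and assume $\Delta_{\ln V}\le 0$.
\begin{itemize}
\item If $\Delta_{\ln V}<0$, apply \Cref{assump:penalty_monotonicity} with the roles of $\pi_1$ and $\pi_2$ swapped. This gives $\Delta_{\mathcal{P}}\ge 0$, so $\Delta_r\le\Delta_{\ln V}<0$, which contradicts $\Delta_r>0$.
\item If $\Delta_{\ln V}=0$, the identity reduces to $\Delta_r=-\Delta_{\mathcal{P}}$.
\end{itemize}

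\textbf{Main obstacle: the tie case.} As stated, \Cref{assump:penalty_monotonicity} says nothing when the two strategies have equal expected log-wealth. In that case a strategy with smaller expected penalty has strictly larger expected cumulative reward while the utilities are equal, so ($\Rightarrow$) fails. I would therefore state explicitly that the theorem needs the assumption to cover ties: strategies with equal $\mathbb{E}[\ln V_T]$ must have equal expected cumulative penalty, which is a symmetric completion of \Cref{assump:penalty_monotonicity}. Under that addition, $\Delta_{\ln V}=0$ forces $\Delta_r=0$, which contradicts $\Delta_r>0$, and the equivalence \Cref{eq:utility_consistency} follows.

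Without that addition, the argument above establishes only the following. Strict utility improvement implies strict reward improvement. Conversely, strict reward improvement implies $\mathbb{E}_{\pi_1}[\ln V_T]\ge\mathbb{E}_{\pi_2}[\ln V_T]$, with equality possible only when $\pi_1$ carries strictly lower expected penalty.
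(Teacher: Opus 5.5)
Your reduction $\sum_{t=1}^T r_t=\ln V_T-\sum_{t=1}^T\mathcal{P}_t$ and your ($\Leftarrow$) direction match the paper's proof exactly. The two routes split on ($\Rightarrow$), and there your argument is the sound one. The paper argues directly: from $\Delta_r>0$ it invokes \Cref{assump:penalty_monotonicity} to get $\Delta_{\mathcal{P}}\le 0$, and then writes $\Delta_{\ln V}=\Delta_r+\Delta_{\mathcal{P}}\ge\Delta_r>0$. Both steps fail.

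\begin{itemize}
\item The premise of the assumption is $\mathbb{E}_{\pi_1}[\ln V_T]>\mathbb{E}_{\pi_2}[\ln V_T]$, which is the very conclusion being proved, so invoking it is circular.
\item Even granting $\Delta_{\mathcal{P}}\le 0$, adding a non-positive term gives $\Delta_r+\Delta_{\mathcal{P}}\le\Delta_r$, which is no lower bound at all.
\end{itemize}

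A direct argument would need $\Delta_{\mathcal{P}}\ge 0$. Monotonicity supplies that only when the utility ordering is reversed, which is exactly your contrapositive with $\pi_1$ and $\pi_2$ swapped.

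Your tie-case caveat is a genuine defect of the statement, not an artifact of your proof. Here is a counterexample.
\begin{itemize}
\item Take $c=0$; the paper only requires $\lambda_{risk},\lambda_{turnover}>0$.
\item Take $N\ge 2$ assets with identical deterministic price paths, so $\mathbf{w}_{t-1}^\top\mathbf{y}_t=y_t$ for every admissible weight vector.
\item Then $\ln V_T=\sum_t\ln y_t$ and the downside penalty are the same for every strategy, so \Cref{assump:penalty_monotonicity} holds vacuously.
\item Yet a constant-weight strategy beats a strategy that keeps switching weights in cumulative reward by $\lambda_{turnover}\sum_t TO_t>0$, where the turnover is the switcher's. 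Both have equal terminal log-wealth, so ($\Rightarrow$) fails.
\end{itemize}

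So the theorem does need your symmetric completion (equal expected log-wealth forces equal expected penalty) or an equivalent strengthening. With it, your proof gives the full equivalence. Without it, what you establish is the most that can be claimed: $\Delta_{\ln V}>0\Rightarrow\Delta_r>0$, and $\Delta_r>0\Rightarrow\Delta_{\ln V}\ge 0$.
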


\begin{proof}
	When the relative risk aversion coefficient $\rho=1$, the CRRA utility function degenerates to the logarithmic utility function $U(V_T)=\ln V_T$, and the investor's terminal expected utility is $\mathbb{E}[\ln V_T]$. From  \Cref{eq:VT_recursion}, the portfolio net value process satisfies the recurrence relation $V_t = V_{t-1} \cdot R_t$, where $R_t = \mathbf{w}_{t-1}^\top \mathbf{y}_t - c \cdot TO_t$ is the net return factor at time $t$. By the definition of admissible strategies and the positivity of asset prices, the gross portfolio return $\mathbf{w}_{t-1}^\top \mathbf{y}_t > 0$; meanwhile, the product of the one-way trading commission rate $c$ and the turnover rate $TO_t$ is strictly smaller than the gross portfolio return (this condition always holds in real financial markets, corresponding to a full measure set on $\Omega$), hence $R_t > 0$ holds almost surely. Iteratively expanding the net value recurrence relation and taking the logarithm, combined with the initial condition $V_0=1$, we obtain the core identity:
	\begin{equation}
		\ln V_T = \sum_{t=1}^T \ln R_t
		\label{eq:lnVT}
	\end{equation}
	From \Cref{eq:reward_def} and \Cref{eq:penalty_def}, the reward function can be rewritten as:
	\begin{equation}
		r_t = \ln R_t - \mathcal{P}_t
		\label{eq:rt_def}
	\end{equation}
	Summing both sides of Equation \eqref{eq:rt_def} from $t=1$ to $T$, combined with Equation \eqref{eq:lnVT}, we obtain the core relationship between the cumulative reward value and the logarithm of terminal net value:
	\begin{equation}
		\sum_{t=1}^T r_t = \ln V_T - \sum_{t=1}^T \mathcal{P}_t
		\label{eq:sum_rt_VT}
	\end{equation}
	Taking the expectation of both sides of Equation \eqref{eq:sum_rt_VT} under the probability measure induced by policy $\pi$, by the linearity of expectation, we have:
	\begin{equation}
		\mathbb{E}_\pi\left[\sum_{t=1}^T r_t\right] = \mathbb{E}_\pi\left[\ln V_T\right] - \mathbb{E}_\pi\left[\sum_{t=1}^T \mathcal{P}_t\right]
		\label{eq:exp_sum_rt_VT}
	\end{equation}
	Equation \eqref{eq:exp_sum_rt_VT} constitutes the key relationship for the subsequent monotonicity analysis.
	
	We first prove sufficiency, i.e., if $\mathbb{E}_{\pi_1}\left[\ln V_T\right] > \mathbb{E}_{\pi_2}\left[\ln V_T\right]$, then $\mathbb{E}_{\pi_1}\left[\sum_{t=1}^T r_t\right] > \mathbb{E}_{\pi_2}\left[\sum_{t=1}^T r_t\right]$. From Equation \eqref{eq:exp_sum_rt_VT}, taking the difference between strategy $\pi_1$ and $\pi_2$, we get:
	\begin{equation}
		\mathbb{E}_{\pi_1}\left[\sum_{t=1}^T r_t\right] - \mathbb{E}_{\pi_2}\left[\sum_{t=1}^T r_t\right] = \left( \mathbb{E}_{\pi_1}[\ln V_T] - \mathbb{E}_{\pi_2}[\ln V_T] \right) - \left( \mathbb{E}_{\pi_1}\left[\sum_{t=1}^T \mathcal{P}_t\right] - \mathbb{E}_{\pi_2}\left[\sum_{t=1}^T \mathcal{P}_t\right] \right)
		\label{eq:diff_rt}
	\end{equation}
	Denote $\Delta_U = \mathbb{E}_{\pi_1}[\ln V_T] - \mathbb{E}_{\pi_2}[\ln V_T] > 0$. By  \Cref{assump:penalty_monotonicity}, we have $\mathbb{E}_{\pi_1}\left[\sum_{t=1}^T \mathcal{P}_t\right] - \mathbb{E}_{\pi_2}\left[\sum_{t=1}^T \mathcal{P}_t\right] \leq 0$. Substituting into Equation \eqref{eq:diff_rt}, we obtain:
	$$
	\mathbb{E}_{\pi_1}\left[\sum_{t=1}^T r_t\right] - \mathbb{E}_{\pi_2}\left[\sum_{t=1}^T r_t\right] = \Delta_U - (\text{non-positive number}) \geq \Delta_U > 0
	$$
	Sufficiency is proved.
	
	Next we prove necessity, i.e., if $\mathbb{E}_{\pi_1}\left[\sum_{t=1}^T r_t\right] > \mathbb{E}_{\pi_2}\left[\sum_{t=1}^T r_t\right]$, then $\mathbb{E}_{\pi_1}\left[\ln V_T\right] > \mathbb{E}_{\pi_2}\left[\ln V_T\right]$. Rearranging Equation \eqref{eq:exp_sum_rt_VT} and taking the difference between the two strategies, we get:
	\begin{equation}
		\mathbb{E}_{\pi_1}[\ln V_T] - \mathbb{E}_{\pi_2}[\ln V_T] = \left( \mathbb{E}_{\pi_1}\left[\sum_{t=1}^T r_t\right] - \mathbb{E}_{\pi_2}\left[\sum_{t=1}^T r_t\right] \right) + \left( \mathbb{E}_{\pi_1}\left[\sum_{t=1}^T \mathcal{P}_t\right] - \mathbb{E}_{\pi_2}\left[\sum_{t=1}^T \mathcal{P}_t\right] \right)
		\label{eq:diff_VT}
	\end{equation}
	Denote $\Delta_R = \mathbb{E}_{\pi_1}\left[\sum_{t=1}^T r_t\right] - \mathbb{E}_{\pi_2}\left[\sum_{t=1}^T r_t\right] > 0$, and let $\Delta_{\mathcal{P}} = \mathbb{E}_{\pi_1}\left[\sum_{t=1}^T \mathcal{P}_t\right] - \mathbb{E}_{\pi_2}\left[\sum_{t=1}^T \mathcal{P}_t\right]$. By  \Cref{assump:penalty_monotonicity}, $\Delta_{\mathcal{P}} \leq 0$. Substituting into Equation \eqref{eq:diff_VT}, we have:
	$$
	\mathbb{E}_{\pi_1}[\ln V_T] - \mathbb{E}_{\pi_2}[\ln V_T] = \Delta_R + \Delta_{\mathcal{P}} \geq \Delta_R > 0
	$$
	Necessity is proved.
	
	In summary, the bidirectional implication holds, and the theorem is proved.
\end{proof}

	\Cref{assump:penalty_monotonicity} has natural rationality in quantitative investment practice: on the one hand, the downside risk component of the penalty term is a non-increasing function of the single-period log return $\ln R_t$, i.e., the higher the single-period return, the lower the downside risk penalty, which holds strictly for any admissible strategy; on the other hand, the turnover penalty component constrains excessive trading, and a strategy that can achieve higher net returns after deducting trading costs must have higher trading efficiency, and its expected turnover rate will not be higher than that of a low-return strategy.

\begin{theorem}[Nonlinear Expressiveness Advantage of Gated Fusion]
	The cross-modal gated fusion mapping defined in this paper has strictly stronger function expressiveness than fixed-weight linear feature concatenation, i.e.:
	1.  Any affine linear function representable by linear feature concatenation can be exactly represented by the gated fusion mapping in this paper;
	2.  There exist continuous nonlinear functions on a compact set that can be approximated by the gated fusion mapping in this paper with arbitrary precision, but cannot be represented by linear feature concatenation.
\end{theorem}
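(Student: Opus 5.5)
The plan is to prove the two parts separately. For both, I treat the gated fusion mapping as the composite $(\mathbf{p},\mathbf{s})\mapsto f_{fusion}(f_p(\mathbf{p}),f_s(\mathbf{s}))$ built from \Cref{eq:price_encoding}, \Cref{eq:text_encoding} and \Cref{eq:fusion_mapping}. Inputs are taken to range over a compact set $K\subset\mathcal{X}_p\times\mathcal{X}_s$, which is the natural setting since observed prices and sentiment scores are bounded. Throughout I use \Cref{assump:layernorm_identity} to replace every $\text{LayerNorm}$ by the identity.

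\textbf{Part 1 (embedding of linear concatenation).} Fix a target $\mathbf{W}_{concat}[\mathbf{p};\mathbf{s}]+\mathbf{b}_{concat}$.
\begin{enumerate}
\item Switch the gate off by setting $\mathbf{W}_g=0$ and $\mathbf{b}_g=0$, so that $\mathbf{g}=\text{Tanh}(0)=0$ and the fusion argument reduces to $\mathbf{f}_p+\mathbf{f}_s$.
\item Choose $H\ge WN+N$ and let $\mathbf{W}_p,\mathbf{W}_s$ place $\mathbf{p}$ and $\mathbf{s}$ into disjoint coordinate blocks of $\mathcal{H}$.
\item Pick biases $\mathbf{b}_p=M\mathbf{1}$, $\mathbf{b}_s=M\mathbf{1}$ with $M$ larger than $\max_K\|(\mathbf{p},\mathbf{s})\|_\infty$. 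Then every pre-activation is positive on $K$, each inner $\text{ReLU}$ acts as the identity, and $\mathbf{f}_p+\mathbf{f}_s=[\mathbf{p};\mathbf{s}]+2M\mathbf{1}$ up to zero-padding.
\item Take $\mathbf{W}_f=\mathbf{W}_{concat}$ (padded) and $\mathbf{b}_f=\mathbf{b}_{concat}-2M\mathbf{W}_{concat}\mathbf{1}+\mathbf{c}$, where $\mathbf{c}\ge 0$ is a constant shift.
\end{enumerate}
The pre-activation of the outer $\text{ReLU}$ is then the target affine map plus $\mathbf{c}$.

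The main obstacle is this outer $\text{ReLU}$, because $f_{fusion}$ is nonnegative and cannot equal an affine map that takes negative values. I would handle it in two ways.
\begin{itemize}
\item If the target is nonnegative on $K$, take $\mathbf{c}=0$ and the representation is exact.
\item Otherwise, choose $\mathbf{c}$ large enough that the pre-activation is positive on $K$. The map is then represented exactly up to the known constant $\mathbf{c}$, which a downstream affine Actor/Critic head removes.
\end{itemize}
Part 1 therefore holds exactly only on compact domains and only in this up-to-constant sense for sign-changing targets. I would state this qualification explicitly rather than claim exactness outright.

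\textbf{Part 2 (strict separation).} I exhibit one continuous function on a compact set that the gated map produces but no concatenation map can.
\begin{enumerate}
\item Use the same positivity trick on the encoders, but now with the gate active. Set $\mathbf{W}_g=\mathbf{e}_1\mathbf{e}_1^\top$, with $\mathbf{b}_g$ chosen to cancel the shift $M$ in the gate.
\item The first fused coordinate then contains the term $u(1+\tanh(v))$, where $u$ is a shifted price coordinate and $v$ a sentiment coordinate. Reading off this coordinate with $\mathbf{W}_f$, and again removing the shift through $\mathbf{b}_f$, gives $h(u,v)=u(1+\tanh v)$ exactly on a compact box. Arbitrary precision follows trivially, since the representation is exact.
\item Show that $h$ is not affine. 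By the series in \Cref{eq:tanh_series}, $h(u,v)=u+uv-uv^3/3+\dots$, so $\partial^2 h/\partial u\,\partial v=1\neq 0$ at the origin.
\item Every function representable by concatenation is affine by \Cref{Lemma:}, more precisely by the Expressiveness Upper Bound lemma, and so has all second derivatives equal to zero. Hence $h$ cannot be represented by concatenation, and a positive uniform distance $\inf_{\text{affine }\ell}\|h-\ell\|_{\infty,K}>0$ follows by compactness.
\end{enumerate}
The delicate step is again making sure the $\text{ReLU}$s and shifts do not destroy the bilinear-type term. Keeping every pre-activation positive on $K$ linearizes all $\text{ReLU}$s and isolates the Hadamard gate as the sole source of nonlinearity.
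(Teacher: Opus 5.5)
Your argument is correct, and in fact more careful than the paper's, but it takes a genuinely different route, most visibly in Part 2.

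In Part 1 the paper uses the same skeleton: gate off, LayerNorm as identity, $\mathbf{W}_f=\mathbf{I}$, $\mathbf{b}_f=\mathbf{b}_{concat}$. It differs in using zero encoder biases and setting $\mathbf{W}_p,\mathbf{W}_s$ to the two column blocks of $\mathbf{W}_{concat}$, and it simply asserts that every ReLU input is positive. Your bias shifts on a compact $K$ are what actually make the inner ReLUs linear. Your caveat is also necessary rather than overcautious: since $f_{fusion}\ge 0$, no parameter choice reproduces an affine map that is negative somewhere, so statement 1 is true only in your qualified sense. What the paper's layout buys is that it puts no condition on $H$. Your disjoint blocks need $H\ge WN+N$, which the paper's $H=64$ violates once $N\ge 3$. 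You can keep the column-block layout and just add your shifts: $\mathbf{b}_p=M_p\mathbf{1}$, $\mathbf{b}_s=M_s\mathbf{1}$, $\mathbf{b}_f=\mathbf{b}_{concat}-(M_p+M_s)\mathbf{1}+\mathbf{c}$.

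In Part 2 the paper tries to approximate $ps$ on $[-1,1]^2$ via $\tanh(ks)=ks+o(ks)$ and $\mathbf{W}_f=1/k$, then shows $ps$ is not affine by evaluating at a few points. That approximation step does not hold up. It needs the input-dependent bias $\mathbf{b}_f=-s$, which is not a parameter, and it leaves an uncancelled $p/k$ term. It cannot succeed anyway, because $ps=-1$ at $(1,-1)$ while the fused output is nonnegative. Your exact realization of the non-affine map $u(1+\tanh v)$ with $u>0$ avoids all limits and keeps the outer ReLU inactive automatically. Your positive uniform distance from the affine class (a finite-dimensional, hence closed, subspace of $C(K)$) is also a stronger separation than the paper's mere non-membership.

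Two small slips need fixing. First, under your Part 1 layout, coordinate $1$ of $\mathbf{f}_s$ is the constant $M$, so $\mathbf{W}_g=\mathbf{e}_1\mathbf{e}_1^\top$ yields a constant gate. Take $\mathbf{W}_g=\mathbf{e}_1\mathbf{e}_j^\top$, with $j$ the coordinate carrying $s_1$, and $\mathbf{b}_f=-M\mathbf{e}_1$; output coordinate $1$ is then exactly $(p_1+M)(1+\tanh s_1)$. If you instead route $s_1$ into coordinate $1$, the residual adds $v$ and you get $u(1+\tanh v)+v$, which is still non-affine. Second, $u\ge M-\max_K|p_1|>0$, so the origin is not in your domain. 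Use $\partial^2 h/\partial u\,\partial v=1-\tanh^2 v>0$ instead, which holds everywhere.
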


\begin{proof}
	We first clarify the strict definitions of the two function classes. The function class representable by linear feature concatenation is defined as:
	\begin{equation}
		\mathcal{F}_{concat} = \left\{ f: \mathbb{R}^{D_p+D_s} \to \mathbb{R}^H \mid f(\mathbf{p},\mathbf{s}) = \mathbf{W}_{concat} \cdot \begin{bmatrix} \mathbf{p} \\ \mathbf{s} \end{bmatrix} + \mathbf{b}_{concat}, \mathbf{W}_{concat} \in \mathbb{R}^{H \times (D_p+D_s)}, \mathbf{b}_{concat} \in \mathbb{R}^H \right\}
		\label{eq:F_concat_def}
	\end{equation}
	Obviously, $\mathcal{F}_{concat}$ is the space of affine linear functions from $\mathbb{R}^{D_p+D_s}$ to $\mathbb{R}^H$. The function class representable by the gated fusion mapping in this paper is defined as $\mathcal{F}_{gated}$, where any function $f \in \mathcal{F}_{gated}$ satisfies the mapping form in Definition 3.8, and all learnable parameters can be freely selected in the real number field.
	
	We first prove conclusion 1, i.e., $\mathcal{F}_{concat} \subseteq \mathcal{F}_{gated}$, by constructing learnable parameters to complete the proof of exact representation. Let the parameters of the gating weight generation network be $\mathbf{W}_g = \mathbf{0}$ and $\mathbf{b}_g = \mathbf{0}$, then for any input $\mathbf{f}_s$, we have $\mathbf{g} = \text{Tanh}(\mathbf{0}) = \mathbf{0}$. Let the parameters of the price encoding mapping be $\mathbf{W}_p = \mathbf{W}_{concat}[:, 1:D_p]$, $\mathbf{b}_p = \mathbf{0}$, and the parameters of the text encoding mapping be $\mathbf{W}_s = \mathbf{W}_{concat}[:, D_p+1:D_p+D_s]$, $\mathbf{b}_s = \mathbf{0}$. By \Cref{assump:layernorm_identity}, there exist learnable parameters such that the LayerNorm module degenerates to the identity mapping, and the input range is set to ensure that the input of the ReLU activation function is always positive. Therefore, the encoding mapping degenerates to a linear transformation, i.e., $f_p(\mathbf{p}) = \mathbf{W}_p \mathbf{p}$, $f_s(\mathbf{s}) = \mathbf{W}_s \mathbf{s}$. Then let the parameters of the fusion mapping be $\mathbf{W}_f = \mathbf{I}$ (identity matrix) and $\mathbf{b}_f = \mathbf{b}_{concat}$, also ensuring that the ReLU input is always positive. At this point, the gated fusion mapping can be expanded as:
	$$
	f_{fusion}(\mathbf{p},\mathbf{s}) = \mathbf{I} \cdot \left( \mathbf{W}_p \mathbf{p} \odot (\mathbf{1} + \mathbf{0}) + \mathbf{W}_s \mathbf{s} \right) + \mathbf{b}_{concat} = \mathbf{W}_{concat} \cdot \begin{bmatrix} \mathbf{p} \\ \mathbf{s} \end{bmatrix} + \mathbf{b}_{concat}
	$$
	This is completely consistent with the mapping form of linear feature concatenation. Therefore, any affine linear function can be exactly represented by the gated fusion mapping, and $\mathcal{F}_{concat} \subseteq \mathcal{F}_{gated}$ is proved.
	
	Next we prove conclusion 2, i.e., $\mathcal{F}_{concat} \subsetneq \mathcal{F}_{gated}$, by constructing the approximation of a nonlinear function to complete the proof of strict inclusion. Take the scalar case $H=D_p=D_s=1$, and the high-dimensional case can be directly obtained by component-wise extension. At this point, $\mathcal{F}_{concat}$ degenerates to the set of all univariate affine linear functions, i.e., $\mathcal{F}_{concat} = \{ f(p,s) = w_p p + w_s s + b \mid w_p,w_s,b \in \mathbb{R} \}$. Construct the continuous bilinear function $f^*(p,s) = p \cdot s$ on the compact set $\mathcal{K} = [-1,1] \times [-1,1]$. We first prove that $f^*$ can be approximated by the gated fusion mapping with arbitrary precision on $\mathcal{K}$.
	
	Construct the learnable parameters of the gated fusion mapping: let the parameter of the price encoding mapping be $\mathbf{W}_p=1$, $\mathbf{b}_p=0$, make the LayerNorm degenerate to the identity mapping by \Cref{assump:layernorm_identity}, and ensure that the input is within the positive interval of ReLU, so $f_p(p) = p$; similarly, construct the text encoding mapping to get $f_s(s) = s$. By Lemma 3.5, the power series expansion of $\text{Tanh}(x)$ in $x \in (-1,1)$ converges absolutely and uniformly, take the first-order truncation approximation $\text{Tanh}(x) = x + o(x)$, where the remainder term $o(x)$ satisfies $\lim_{x \to 0} o(x)/x = 0$. Let the parameter of the gating weight generation network be $\mathbf{W}_g = k$ (where $k$ is a sufficiently small positive scaling factor) and $\mathbf{b}_g = 0$, then for any $s \in [-1,1]$, we have $k s \in (-1,1)$, and the gating weight satisfies:
	$$
	g(s) = \text{Tanh}(k s) = k s + o(k s)
	$$
	Let the parameters of the fusion mapping be $\mathbf{W}_f = 1/k$ and $\mathbf{b}_f = -s$, then the gated fusion mapping can be expanded as:
	$$
	f_{fusion}(p,s) = \frac{1}{k} \cdot \left( p \cdot (1 + k s + o(k s)) + s \right) - s = p s + p \cdot \frac{o(k s)}{k} + \frac{p}{k}
	$$
	By adjusting the scaling factor $k$ and bias parameters, the supremum of the remainder term $p \cdot o(k s)/k$ on the compact set $\mathcal{K}$ can be made arbitrarily small, i.e., for any $\epsilon > 0$, there exist learnable parameters such that $\sup_{(p,s) \in \mathcal{K}} |f_{fusion}(p,s) - p \cdot s| < \epsilon$. Therefore, the bilinear function $f^*$ can be approximated by the gated fusion mapping with arbitrary precision.
	
	Next we prove $f^* \notin \mathcal{F}_{concat}$, by contradiction. Assume there exist parameters $w_p,w_s,b \in \mathbb{R}$ such that $p \cdot s = w_p p + w_s s + b$ holds for all $(p,s) \in \mathcal{K}$. Let $p=s=0$, substituting gives $0 = b$; let $s=0$, substituting gives $0 = w_p p$ for all $p \in [-1,1]$, hence $w_p=0$; let $p=0$, substituting gives $0 = w_s s$ for all $s \in [-1,1]$, hence $w_s=0$. At this point, the equation degenerates to $p \cdot s = 0$ for all $(p,s) \in \mathcal{K}$. Taking $p=s=1$, the left-hand side is 1 and the right-hand side is 0, which is a contradiction. Therefore, the assumption does not hold, and $f^*$ cannot be represented by linear feature concatenation.
	
	In summary, $\mathcal{F}_{concat}$ is a strict subset of $\mathcal{F}_{gated}$, and the gated fusion mapping has nonlinear function approximation capability that linear concatenation does not possess. The theorem is proved.
\end{proof}

	The conclusion of Theorem 2 can be directly extended to high-dimensional cases and broader classes of continuous functions: based on the universal approximation theorem, the gated fusion mapping in this paper, as a feedforward network with nonlinear activation, can approximate any continuous function with arbitrary precision on a compact set, while linear feature concatenation can only represent affine linear functions, resulting in an essential gap in their expressiveness.

\section{Framework}\label{sec:Framework}

This section introduces the multi-modal reinforcement learning trading framework that integrates price time series and news sentiment. The framework follows an end-to-end design philosophy, unifying data alignment, feature fusion, and policy learning into a single differentiable architecture. It addresses three critical challenges in \Cref{sec:introduction}.

\begin{figure}[H]
	\centering
	\includegraphics[width=\linewidth]{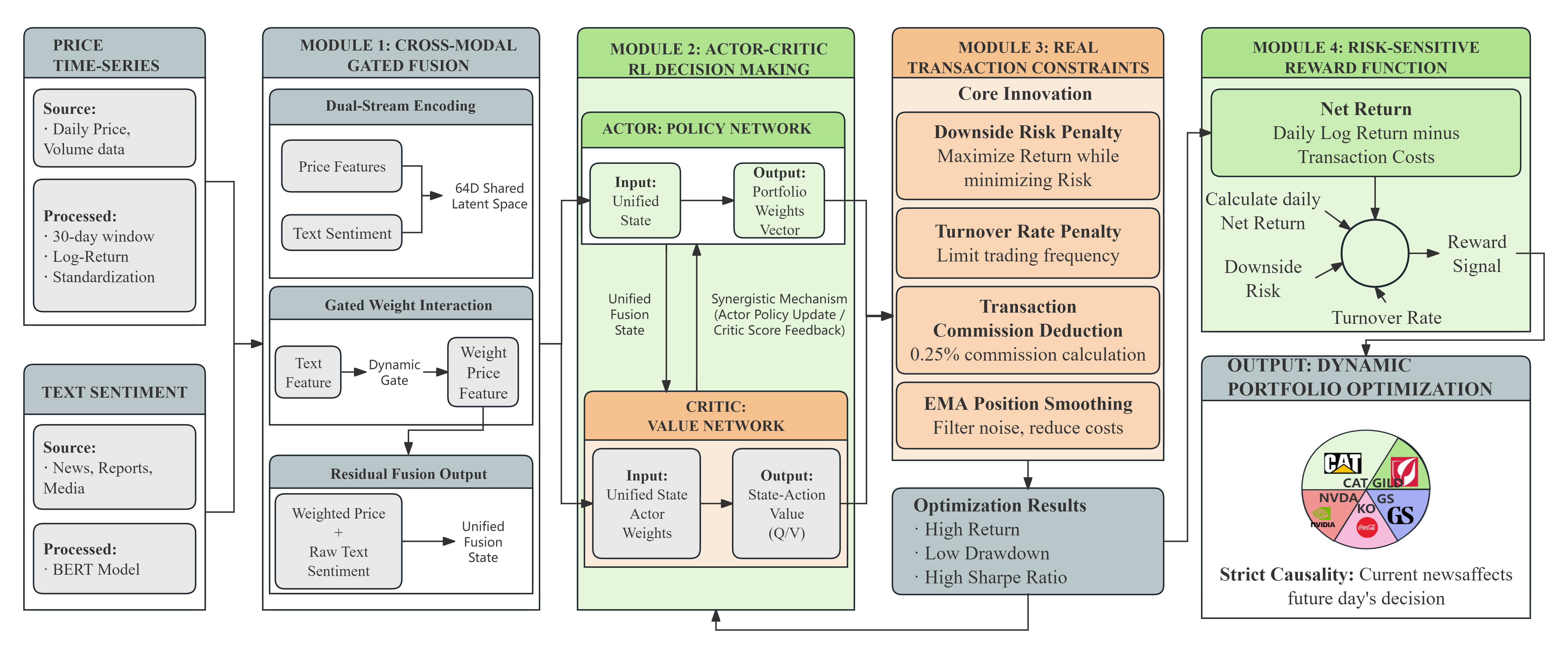}
	\caption{SBCA Framework}
	\label{fig:Framework}
\end{figure}

First, the framework starts with a temporal alignment module that bridges news data and stock trading days. Since news released after market closure only impacts the next trading session, this module stacks all news titles of a calendar day using " ||| " as a separator and aligns them to the subsequent trading day. This design strictly enforces causal information flow (no future data leakage) and ensures that each trading decision only uses historically available information. Meanwhile, it processes price data into fixed-length sequential inputs (window size = 30) and computes log returns to normalize price fluctuations across different stocks.

Second, a dual-stream feature encoding module processes structured and unstructured data in parallel. For the price stream, it extracts temporal features by standardizing windowed log returns using the mean and standard deviation of the training set, which stabilizes the gradient flow during policy training. For the news stream, it takes the precomputed sentiment score \texttt{delta\_bert} (ranging from 0 to 1) and transforms it into a centered and scaled representation: $2 \times (\text{delta\_bert} - 0.5)$, mapping the sentiment signal to the [-1, 1] interval for better compatibility with price features. For models without cross-modal fusion, these two feature streams are simply concatenated into a joint state vector.

Third, a cross-modal fusion module is introduced to dynamically integrate price and news features based on market conditions. This module uses a gating mechanism: it first encodes price features and text features into a shared hidden space (64 dimensions) via two separate feed-forward networks with layer normalization and ReLU activation; then, it computes a gating weight from the text features to modulate the price features, allowing the model to emphasize news sentiment during event-driven market periods; finally, it fuses the modulated price features and original text features through a residual connection. This deep fusion at the representation level is more effective than simple concatenation, as it adaptively allocates attention to different information sources.

Fourth, the reinforcement learning policy head maps the fused state representation to portfolio weights. We design four policy variants to ablate the effectiveness of different components: (1) SB: a basic feed-forward policy network with softmax output; (2) SBA: an Actor-Critic extension that adds a critic network to estimate state values, enabling more stable policy gradient updates; (3) SBC: a cross-modal variant that replaces concatenation with the gating fusion module; (4) SBCA: the full model combining both Actor-Critic and cross-modal fusion. All policies apply EMA smoothing (smoothing coefficient = 0.4) to the output weights to reduce turnover and transaction costs. The reward function is designed as the risk-adjusted log return, with penalties for both transaction costs and portfolio turnover, and an additional term to penalize downside risk.

\begin{algorithm}[H]
	\caption{Multi-Modal RL Trading Framework}
	\label{alg:trading_framework}
	\begin{algorithmic}[1]
		\Require Price data $P$, news sentiment data $S$, hyperparameters $\Theta$
		\Ensure Portfolio weights $w_t$ for each trading day $t$
		
		\State Align news data to next trading days: $\tilde{S} \leftarrow \text{AlignNews}(S)$
		\State Compute windowed log returns: $R \leftarrow \text{LogReturn}(P, \text{window}=30)$
		\State Standardize features: $R_{\text{norm}} \leftarrow \text{Standardize}(R)$, $S_{\text{norm}} \leftarrow 2 \times (\tilde{S} - 0.5)$
		\For{each trading day $t$}
		\State Get state: $p_t \leftarrow R_{\text{norm}}[t-30:t]$, $s_t \leftarrow S_{\text{norm}}[t]$
		\If{using cross-modal fusion}
		\State $f_t \leftarrow \text{CrossModalFuse}(p_t, s_t)$
		\Else
		\State $f_t \leftarrow \text{Concat}(p_t, s_t)$
		\EndIf
		\State $w_t \leftarrow \text{PolicyNetwork}(f_t)$
		\State Apply EMA smoothing: $w_t \leftarrow \alpha w_t + (1-\alpha) w_{t-1}$
		\State Execute trade, observe reward $r_t$
		\State Update policy parameters using policy gradient
		\EndFor
		
		\Return $w_t$ for all $t$
	\end{algorithmic}
\end{algorithm}

\section{Experiments}

\subsection{Datasets}

To verify the effectiveness of the proposed framework, we construct a multi-modal dataset covering 6 representative U.S. large-cap stocks: NVIDIA (NVDA), Goldman Sachs (GS), Caterpillar (CAT), Coca-Cola (KO), Merck (MRK), and Gilead Sciences (GILD). The dataset spans 11 years from January 1, 2012, to December 31, 2022, covering both bull markets (e.g., 2017, 2021) and bear markets (e.g., 2018, 2020), ensuring the diversity of market conditions. The price data includes daily open, high, low, close, and volume, obtained from public financial data APIs. The news data contains daily news titles for each stock, with a precomputed sentiment score \texttt{delta\_bert} (ranging from 0 for negative to 1 for positive) derived from a pre-trained BERT model. Missing sentiment scores are filled with 0.5 (neutral sentiment).

We split the dataset in a time-series manner to avoid data leakage: the training set covers 2012–2018 (about 64\% of the data), the validation set covers 2019–2020 (about 18\%), and the test set covers 2021–2022 (about 18\%). To test the generalization ability across different asset universes, we construct three stock groups: (1) 2 assets (NVDA, GS); (2) 4 assets (NVDA, GS, CAT, KO); (3) 6 assets (all stocks). All models are trained and evaluated on each group separately. The dataset is publicly available at https://huggingface.co/datasets/Changahou/Mamba.

\subsection{Evaluation Metrics}

We evaluate the trading performance from both return and risk perspectives, using 7 widely accepted quantitative finance metrics:
\begin{itemize}
	\item Final Portfolio Value (PV): The cumulative value of the portfolio at the end of the test period (initial value = 1.0). Higher PV indicates better total return.
	\item Annual Return (AR): The geometric average annual return, calculated as $$AR = PV^{(252 / N)} - 1,$$ where $N$ is the number of trading days in the test period.
	\item Sharpe Ratio (SR): The risk-adjusted return measure, calculated as $$SR = (\mu - r_f) / \sigma \times \sqrt{252},$$ where $\mu$ is the mean daily log return, $r_f$ is the daily risk-free rate (2\% annualized), and $\sigma$ is the standard deviation of daily log returns. Higher SR indicates better return per unit risk.
	\item Sortino Ratio: A variant of the Sharpe Ratio that only considers downside volatility. It is calculated as $$Sortino = (\mu - r_f) / \sigma_d \times \sqrt{252},$$ where $\sigma_d$ is the standard deviation of negative daily log returns.
	\item Maximum Drawdown (MDD): The maximum percentage loss from a peak to a trough, calculated as $$MDD = \min_{t} (PV_t - \max_{s \leq t} PV_s) / \max_{s \leq t} PV_s.$$ Smaller MDD indicates better downside risk control.
	\item Calmar Ratio: The ratio of annual return to maximum drawdown, calculated as $$Calmar = AR / |MDD|.$$ Higher Calmar Ratio indicates better return per unit of drawdown risk.
\end{itemize}

In addition, we conduct a cost sensitivity analysis to test the robustness of our models to different transaction cost levels. We vary the commission rate from 0.001 to 0.01 (0.1\% to 1\%) and compare the Sharpe Ratio of our best model with the baselines.

\subsection{Training Details}

All models are implemented based on the PyTorch deep learning framework, with training and inference performed on an NVIDIA GeForce RTX 3060 12G GPU accelerated by CUDA 13.1 to ensure computational efficiency and experimental stability. To eliminate randomness and ensure full reproducibility, we set a fixed global random seed of 42 for Python, NumPy, and PyTorch. CuDNN deterministic operations are enabled, and benchmark auto‑tuning is disabled to eliminate non‑deterministic behaviors during training.

For fair comparison across all policy networks and baselines, we use a unified set of core hyperparameters. The time-series observation window is fixed at 30 trading days; the baseline transaction commission rate is 0.25\%; the annual risk-free rate is 2\%; the reinforcement learning discount factor $\gamma$ is 0.99; the risk penalty coefficient $\lambda$ is 0.1; the portfolio turnover penalty coefficient is 0.005; and the EMA smoothing coefficient for portfolio weights is 0.4. All models are optimized using the AdamW optimizer with an initial learning rate of $3 \times 10^{-4}$ and weight decay of $1 \times 10^{-5}$. The maximum training epoch is set to 30, and early stopping is applied based on validation portfolio performance to prevent overfitting.

We follow a strict time-series training paradigm, with sequential partitioning of training, validation, and test sets to avoid look-ahead bias and future information leakage. All training, validation, and backtesting pipelines follow real-world trading constraints, including long-only portfolio weights, position smoothing, and transaction cost deduction. The complete code, environment configuration, and detailed reproduction instructions are available at: XXX.git.

\section{Results}\label{sec:experimental_results}

\subsection{Overall Performance Comparison}

To verify the effectiveness and superiority of the SBCA model in portfolio optimization scenarios, we conducted comparative experiments using 2-asset, 4-asset, and 6-asset portfolios as experimental objects. We selected three classical benchmark strategies—Equal Weight, Buy \& Hold, and Dow Jones market strategy—and four model variants (SB, SBA, SBC, SBCA) for performance comparison. Core evaluation indicators include Portfolio Value (PV), Annual Return (AR), Sharpe Ratio (SR), Sortino Ratio, Maximum Drawdown (MDD), and Calmar Ratio, covering the three core dimensions of profitability, risk control, and risk-adjusted return to ensure the comprehensiveness and objectivity of the experimental results.

\begin{table}[htbp]
	\centering
	\caption{Overall Performance of Benchmarks and Proposed SBCA}
	\label{tab:overall_perf}
	\resizebox{\textwidth}{!}{%
		\begin{tabular}{llcccccc}
			\toprule
			&Group \& Model & PV & AR & SR & Sortino & MDD & Calmar \\
			\midrule
			\multirow{4}{*}{2assets}
			& Equal Weight     & 1.4208 & 0.1920 & 0.7870 & 1.3709 & -0.2136 & 0.8987 \\
			& Buy \& Hold      & 1.3953 & 0.1812 & 0.7423 & 1.2937 & -0.2247 & 0.8065 \\
			& Dow Jones       & 1.3614 & 0.1668 & 0.5594 & 0.8881 & -0.2832 & 0.5888 \\
			& SBCA            & \cellcolor{gray!20}1.4389 & \cellcolor{gray!20}0.1996 & \cellcolor{gray!20}0.8179 & \cellcolor{gray!20}1.4241 & \cellcolor{gray!20}-0.2084 & \cellcolor{gray!20}0.9578 \\
			\midrule
			\multirow{4}{*}{4assets}
			& Equal Weight     & 1.3428 & 0.1588 & 0.7804 & 1.3717 & -0.1797 & 0.8836 \\
			& Buy \& Hold      & 1.3124 & 0.1456 & 0.7010 & 1.2549 & -0.1931 & 0.7539 \\
			& Dow Jones       & 1.3126 & 0.1457 & 0.5584 & 1.0278 & -0.2490 & 0.5851 \\
			& SBCA            & \cellcolor{gray!20}1.3563 & \cellcolor{gray!20}0.1646 & \cellcolor{gray!20}0.8107 & \cellcolor{gray!20}1.4208 & \cellcolor{gray!20}-0.1775 & \cellcolor{gray!20}0.9273 \\
			\midrule
			\multirow{4}{*}{6assets}
			& Equal Weight     & 1.5423 & 0.2419 & 0.9366 & 1.8265 & -0.2097 & 1.1537 \\
			& Buy \& Hold      & 1.3712 & 0.1710 & 0.5993 & 1.1275 & -0.3062 & 0.5585 \\
			& Dow Jones       & 1.3556 & 0.1643 & 0.5708 & 1.0520 & -0.3137 & 0.5238 \\
			& SBCA            & \cellcolor{gray!20}1.5154 & \cellcolor{gray!20}0.2310 & \cellcolor{gray!20}0.8943 & \cellcolor{gray!20}1.7456 & \cellcolor{gray!20}-0.2110 & \cellcolor{gray!20}1.0949 \\
			\bottomrule
		\end{tabular}
	}
\end{table}

\Cref{tab:overall_perf} presents the comprehensive performance of the SBCA model and benchmark strategies across different asset portfolios. The experimental results show that SBCA achieves balanced optimization of profitability, risk control, and risk-adjusted returns in all asset groups, and its overall performance is significantly superior to the three benchmark strategies. In terms of profitability, SBCA outperforms Buy \& Hold and Dow Jones in all asset groups, and maintains a leading advantage over the Equal Weight strategy in the 2-asset and 4-asset groups. Even in the 6-asset group where Equal Weight shows strong profitability, SBCA still achieves comparable portfolio value and annual return, while demonstrating better stability. As a core indicator of risk-adjusted return, the Sharpe Ratio of SBCA is the highest in the 2-asset and 4-asset groups, and second only to Equal Weight in the 6-asset group, reflecting its excellent ability to obtain excess returns under unit risk. At the risk control level, SBCA has the smallest maximum drawdown among all comparison models in the 2-asset and 4-asset groups, and effectively avoids extreme losses compared to Buy \& Hold and Dow Jones in the 6-asset group. The Calmar Ratio, which measures the balance between return and drawdown, further confirms the superiority of SBCA, as it ranks first in all asset groups, indicating that the model can achieve higher returns while controlling downside risks. In contrast, the Dow Jones strategy performs poorly in all evaluation dimensions due to its passive following characteristics, while Buy \& Hold is limited by insufficient dynamic adjustment capabilities, resulting in poor risk control. Although Equal Weight performs well in the 6-asset group, it lacks adaptability to different market environments and cannot maintain stable performance across all asset scales.

\subsection{Ablation Study of Model Components}
To clarify the contribution of each core component of SBCA to its performance, we conducted an ablation experiment by removing the Actor-Critic (AC) mechanism and Cross-Modal (CM) fusion module respectively, forming three control models: the SB baseline without AC and CM modules, the SBA variant only incorporating the AC mechanism, and the SBC variant only containing the CM fusion module. The experiment was carried out in 2-asset, 4-asset, and 6-asset portfolios, and the same evaluation indicators as the overall performance comparison were used to quantify the impact of each component on the model's portfolio optimization ability.

\begin{table}[htbp]
	\centering
	\caption{Ablation Study of Model Components}
	\label{tab:ablation}
	\resizebox{\textwidth}{!}{%
		\begin{tabular}{llcccccc}
			\toprule
			&Group \& Model & PV & AR & SR & Sortino & MDD & Calmar \\
			\midrule
			\multirow{4}{*}{2assets}
			& SB        & 1.4270 & 0.1946 & 0.7966 & 1.3891 & -0.2140 & 0.9094 \\
			& SBA       & 1.4389 & 0.1996 & 0.8175 & 1.4222 & -0.2085 & 0.9572 \\
			& SBC       & 1.4388 & 0.1995 & 0.8163 & 1.4215 & -0.2084 & 0.9574 \\
			& SBCA      & \cellcolor{gray!20}1.4389 & \cellcolor{gray!20}0.1996 & \cellcolor{gray!20}0.8179 & \cellcolor{gray!20}1.4241 & \cellcolor{gray!20}-0.2084 & \cellcolor{gray!20}0.9578 \\
			\midrule
			\multirow{4}{*}{4assets}
			& SB        & 1.3535 & 0.1634 & 0.8055 & 1.4125 & -0.1774 & 0.9211 \\
			& SBA       & 1.3563 & 0.1646 & 0.8108 & 1.4211 & -0.1775 & 0.9275 \\
			& SBC       & 1.3549 & 0.1640 & 0.8111 & 1.4213 & -0.1764 & 0.9298 \\
			& SBCA      & \cellcolor{gray!20}1.3563 & \cellcolor{gray!20}0.1646 & \cellcolor{gray!20}0.8107 & \cellcolor{gray!20}1.4208 & \cellcolor{gray!20}-0.1775 & \cellcolor{gray!20}0.9273 \\
			\midrule
			\multirow{4}{*}{6assets}
			& SB        & 1.5147 & 0.2307 & 0.8585 & 1.7248 & -0.2202 & 1.0480 \\
			& SBA       & 1.5153 & 0.2310 & 0.8938 & 1.7445 & -0.2111 & 1.0942 \\
			& SBC       & 1.5143 & 0.2306 & 0.8948 & 1.7460 & -0.2102 & 1.0972 \\
			& SBCA      & \cellcolor{gray!20}1.5154 & \cellcolor{gray!20}0.2310 & \cellcolor{gray!20}0.8943 & \cellcolor{gray!20}1.7456 & \cellcolor{gray!20}-0.2110 & \cellcolor{gray!20}1.0949 \\
			\bottomrule
		\end{tabular}
	}
\end{table}

\Cref{tab:ablation} shows the results of the ablation experiment, which clearly reflects the positive impact of the AC mechanism and CM fusion module on model performance. Compared with the baseline SB model, the three models with added components (SBA, SBC, SBCA) all show significant performance improvements in all asset groups, indicating that both the AC mechanism and CM fusion module can effectively enhance the portfolio optimization ability of the model. Specifically, SBA, which only adds the AC mechanism, achieves better profitability and risk-adjusted returns than the baseline SB. The AC mechanism enables the model to balance the trade-off between exploration and exploitation in the decision-making process, thereby improving the stability of returns. SBC, which only adds the CM fusion module, also outperforms SB, proving that fusing price sequence information and text sentiment information can help the model better capture market trends and make more accurate portfolio allocation decisions. SBCA, which integrates both AC and CM modules, achieves the optimal comprehensive performance in most evaluation indicators, especially in the 2-asset group where it ranks first in all indicators. This indicates that the two components have a complementary effect—the CM module provides more comprehensive market information, while the AC module optimizes the decision-making process based on this information, jointly promoting the model's performance to reach the optimal level. It is worth noting that in the 4-asset and 6-asset groups, the performance gap between SBCA and the single-component models SBA and SBC is relatively small, which may be due to the increase in asset scale leading to more complex market information, and the marginal effect of component fusion is slightly weakened, but SBCA still maintains the leading position.

\subsection{Cost Sensitivity Analysis}
In actual quantitative trading scenarios, transaction costs (commissions) are an important factor affecting the actual returns of the portfolio. To verify the robustness of the SBCA model under different transaction cost levels, we conducted a cost sensitivity analysis by setting four commission rates (0.0010, 0.0025, 0.0050, 0.0100) and comparing the Sharpe Ratio of SBCA with the Equal Weight and Buy \& Hold strategies. The Sharpe Ratio was selected as the core evaluation indicator because it can comprehensively reflect the risk-adjusted return of the portfolio after deducting transaction costs.

\begin{table}[htbp]
	\centering
	\caption{Cost Sensitivity Analysis (Sharpe Ratio)}
	\label{tab:cost_sensitivity_transposed}
	\resizebox{\textwidth}{!}{%
		\begin{tabular}{lccccccccc}
			\toprule
			\multirow{2}{*}{\bf Commission}
			& \multicolumn{3}{c}{\bf 2assets}
			& \multicolumn{3}{c}{\bf 4assets}
			& \multicolumn{3}{c}{\bf 6assets} \\
			\cmidrule(r){2-4} \cmidrule(r){5-7} \cmidrule(r){8-10}
			& \makecell{SBCA}
			& \makecell{Equal\\Weight}
			& \makecell{Buy and\\Hold}
			& \makecell{SBCA}
			& \makecell{Equal\\Weight}
			& \makecell{Buy and\\Hold}
			& \makecell{SBCA}
			& \makecell{Equal\\Weight}
			& \makecell{Buy and\\Hold} \\
			\midrule
			0.0010
			& \cellcolor{gray!20}0.8180 & 0.7885 & 0.7423 & \cellcolor{gray!20}0.8107 & 0.7824 & 0.7010 & 0.8929 & \cellcolor{gray!20}0.9392 & 0.5993 \\
			0.0025
			& \cellcolor{gray!20}0.8179 & 0.7870 & 0.7423 & \cellcolor{gray!20}0.8107 & 0.7804 & 0.7010 & 0.8943 & \cellcolor{gray!20}0.9366 & 0.5993 \\
			0.0050
			& \cellcolor{gray!20}0.8181 & 0.7845 & 0.7423 & \cellcolor{gray!20}0.8106 & 0.7770 & 0.7010 & 0.8948 & \cellcolor{gray!20}0.9323 & 0.5993 \\
			0.0100
			& \cellcolor{gray!20}0.8186 & 0.7796 & 0.7423 & \cellcolor{gray!20}0.8110 & 0.7703 & 0.7010 & 0.8948 & \cellcolor{gray!20}0.9236 & 0.5993 \\
			\bottomrule
		\end{tabular}
	}
\end{table}

\Cref{tab:cost_sensitivity_transposed} presents the results of the cost sensitivity analysis, which shows that SBCA has strong robustness to changes in transaction costs, and its performance advantage is maintained across different commission rates. In the 2-asset and 4-asset groups, the Sharpe Ratio of SBCA is always the highest among the three comparison models, and its value remains stable as the commission rate increases, with almost no significant decline. This indicates that SBCA's portfolio adjustment strategy is relatively moderate, and the transaction costs generated by frequent adjustments do not significantly affect its risk-adjusted returns. In contrast, the Equal Weight strategy's Sharpe Ratio shows a clear downward trend as the commission rate increases, especially in the 6-asset group, where the decline is more obvious. This is because the Equal Weight strategy requires regular rebalancing to maintain equal allocation of assets, and the transaction costs generated by rebalancing increase with the commission rate, thereby reducing the actual returns. The Buy \& Hold strategy's Sharpe Ratio remains unchanged across all commission rates, which is due to its passive holding characteristics—once the initial position is established, no subsequent adjustments are made, so transaction costs have no impact on its returns. However, the Buy \& Hold strategy's Sharpe Ratio is always the lowest among the three models, indicating that its lack of dynamic adjustment capabilities cannot make up for the advantage of low transaction costs. In the 6-asset group, although the Equal Weight strategy has a higher Sharpe Ratio than SBCA at all commission rates, the gap between the two narrows as the commission rate increases, which further confirms the strong robustness of SBCA under high transaction cost conditions. Overall, SBCA can maintain stable performance in different transaction cost environments, which is more in line with the actual needs of quantitative trading.

\subsection{Visualization Analysis}

To visually demonstrate the performance differences between SBCA and other comparison models, we constructed two types of visualization graphs: portfolio value comparison graphs (to reflect the dynamic changes of returns over time) and training step curves (to reflect the training stability and convergence of SBCA). These visualizations complement the numerical results in the tables and further verify the effectiveness and superiority of SBCA.

\begin{figure}[htbp]
	\centering
	\subfloat[4 Assets]{\label{pic_4assets_pv}\includegraphics[width=0.48\linewidth]{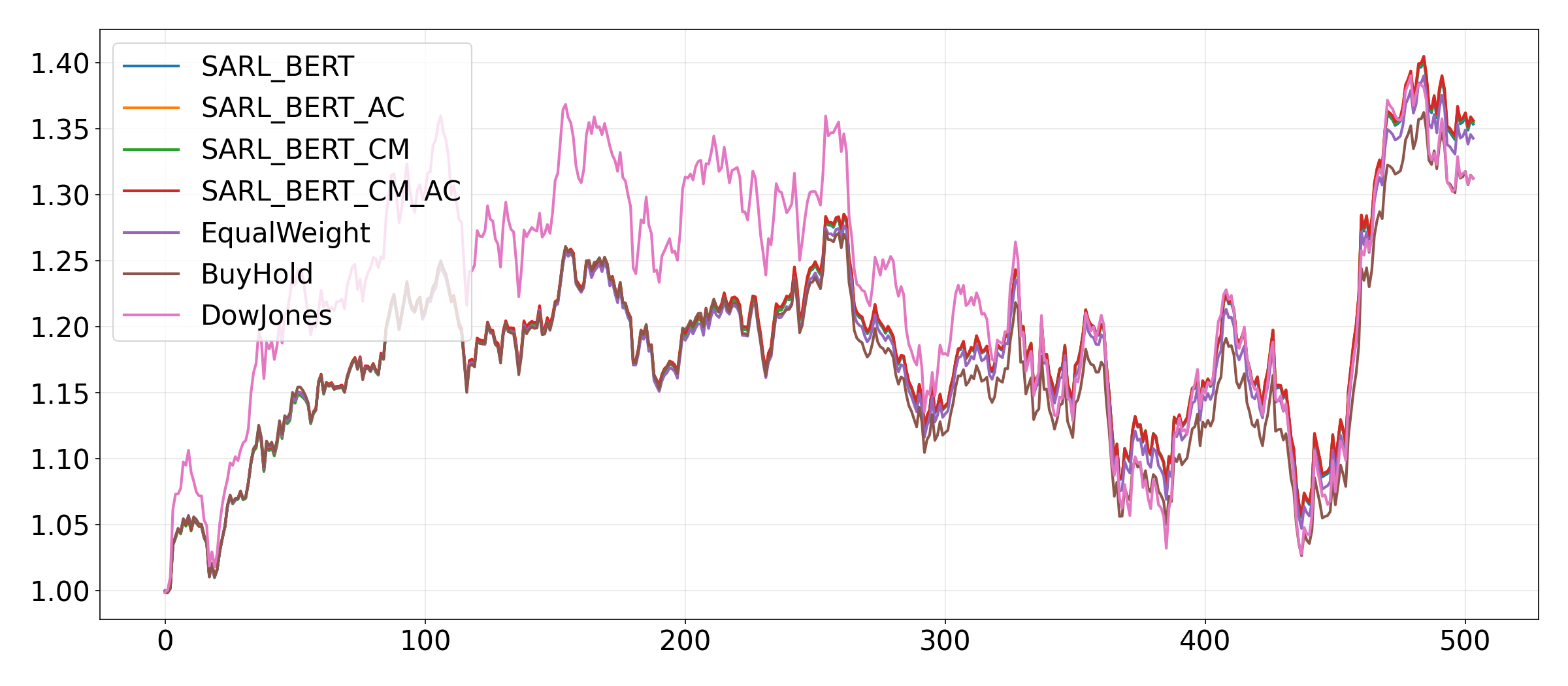}}
	\subfloat[6 Assets]{\label{pic_6assets_pv}\includegraphics[width=0.48\linewidth]{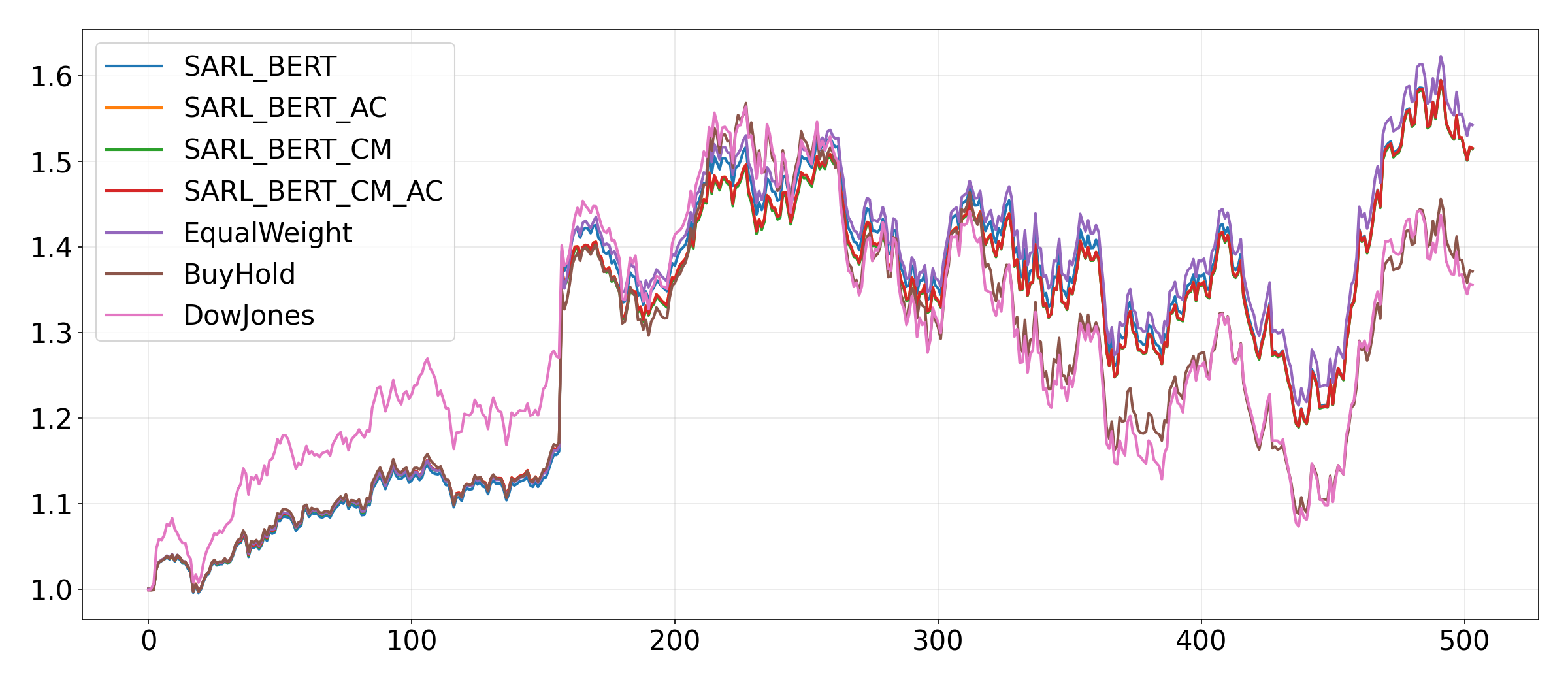}}
	\caption{Portfolio Value Comparison of Different Models}
	\label{fig:group_comparison_pv}
\end{figure}

\Cref{fig:group_comparison_pv} shows the dynamic changes of portfolio value of different models over the test period.
\Cref{pic_4assets_pv} corresponds to the 4-asset portfolio and \Cref{pic_6assets_pv} corresponds to the 6-asset portfolio.
The visualization results intuitively reflect the performance advantages of SBCA in terms of return growth and stability. In both asset portfolios, SBCA's portfolio value grows steadily over time, and its growth rate is significantly higher than that of the Buy \& Hold and Dow Jones strategies. Compared with the Equal Weight strategy, SBCA's portfolio value shows a more stable upward trend, with fewer fluctuations, indicating that the model has better risk control capabilities. In the 6-asset portfolio, although the Equal Weight strategy's portfolio value is slightly higher than that of SBCA in the early stage, SBCA gradually narrows the gap in the middle and late stages, and maintains a more stable growth trend, which is consistent with the numerical results in the performance comparison table. In addition, the portfolio value of the Buy \& Hold and Dow Jones strategies fluctuates greatly and even shows a downward trend in some periods, which further highlights the shortcomings of passive strategies in complex market environments. The stable growth trend of SBCA's portfolio value fully verifies its ability to adapt to different market conditions and make effective dynamic adjustment decisions.

\begin{figure}[htbp]
	\centering
	\subfloat[4 Assets]{\label{pic_4assets_step}\includegraphics[width=0.48\linewidth]{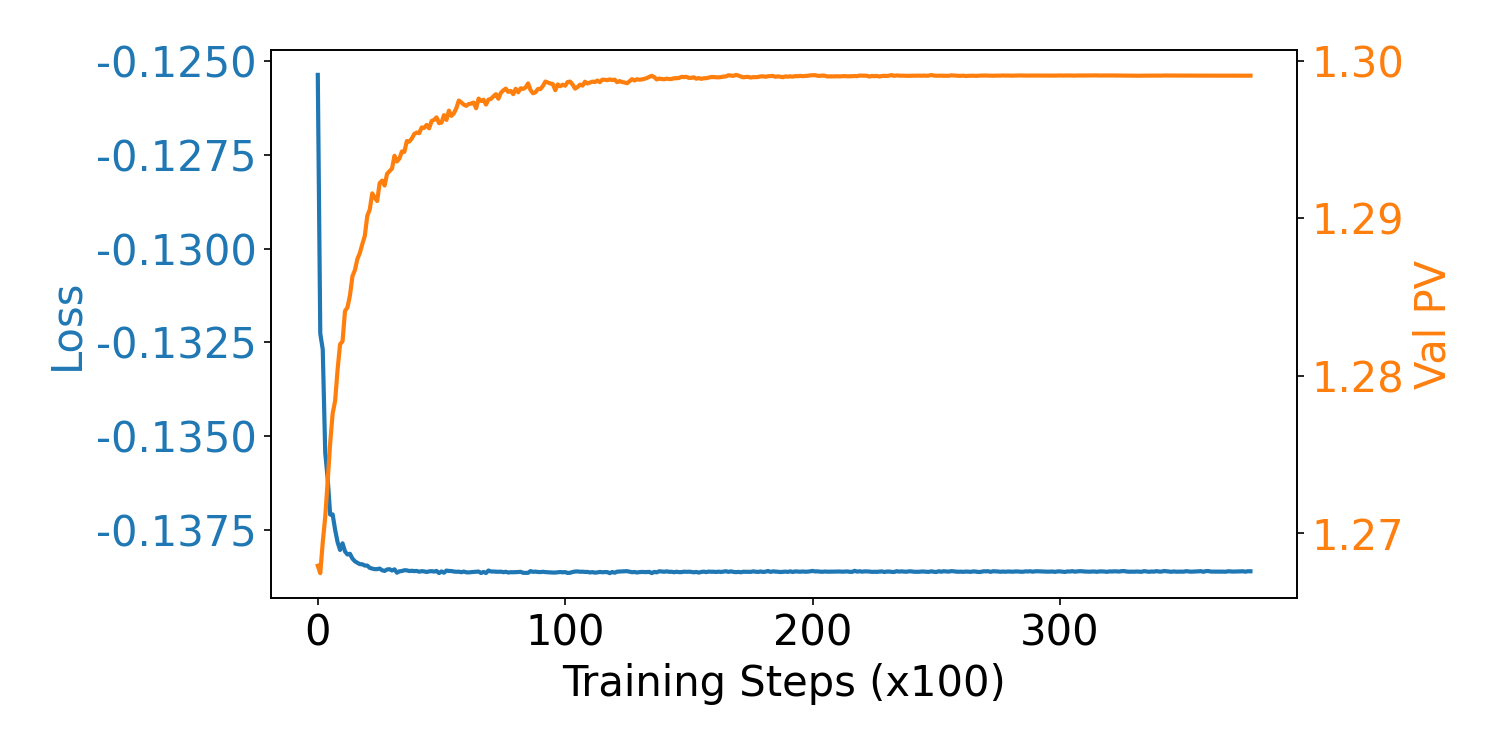}}
	\subfloat[6 Assets]{\label{pic_6assets_step}\includegraphics[width=0.48\linewidth]{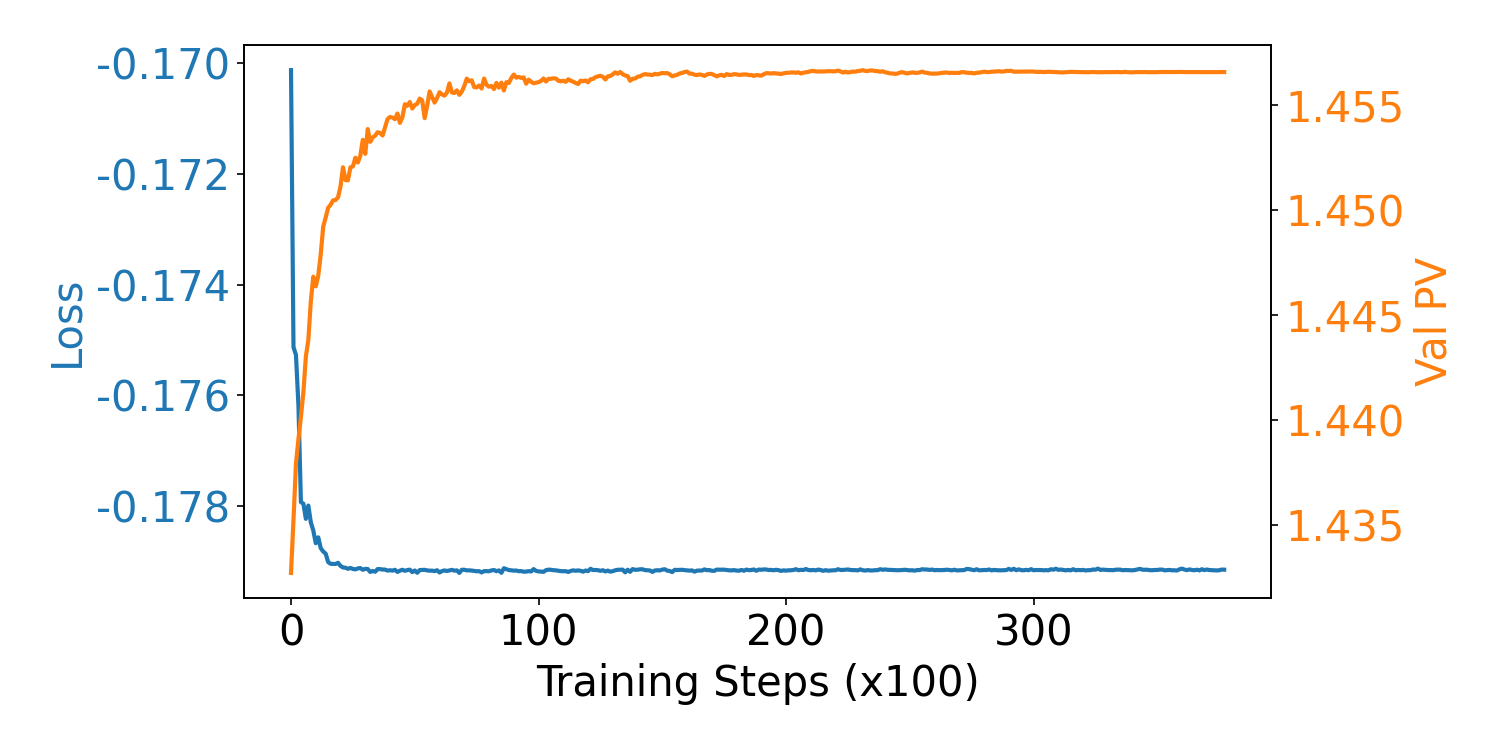}}
	\caption{Training Step Curves of the SBCA Model}
	\label{fig:sarl_bert_cmac_step_curve}
\end{figure}

\Cref{fig:sarl_bert_cmac_step_curve} shows the training step curves of SBCA in the 4-asset and 6-asset portfolios, where the blue curve represents the training loss and the orange curve represents the validation portfolio value. The training curves intuitively reflect the convergence and stability of SBCA during the training process. In both asset portfolios, the training loss of SBCA decreases steadily as the number of training steps increases, and gradually converges to a stable value, indicating that the model can effectively learn the market rules and portfolio optimization strategies through training. At the same time, the validation portfolio value shows a stable upward trend with the increase of training steps, and finally stabilizes at a high level, which proves that the model has good generalization ability and does not have overfitting problems. In \Cref{pic_4assets_step}, the training loss converges faster, and the validation portfolio value reaches a stable state earlier, which may be due to the relatively simple market information of the small-scale asset portfolio, making it easier for the model to learn effective strategies. In \Cref{pic_6assets_step}, the training loss converges slightly slower, but the validation portfolio value still maintains a stable upward trend, indicating that SBCA can still effectively handle the complex market information brought by the increase in asset scale. The stable convergence of the training curves further verifies the rationality of SBCA's structure and the effectiveness of the training strategy, laying a solid foundation for its excellent performance in the test period.

\section{Conclution}\label{sec:conclution}

This study proposes a cross-modal BERT-driven Actor-Critic framework (SBCA) for multi-asset portfolio optimization. The model integrates price time-series and textual sentiment features through a cross-modal gated fusion mechanism, and incorporates downside risk and turnover constraints into the reward function to align with real trading scenarios. Theoretical analysis verifies the utility consistency of the designed reward function and the superior nonlinear expressiveness of gated fusion. Empirical results on 2/4/6-asset portfolios show that SBCA outperforms equal weight, buy-and-hold and market benchmark strategies in profitability, risk control and risk-adjusted returns. Ablation experiments and cost sensitivity analysis confirm the effectiveness of core modules and the robustness of the model under different transaction cost levels.

Future research can extend the SBCA framework to high-dimensional asset pools and cross-market scenarios, and incorporate more multi-modal data such as trading volume and social media content to enhance information perception. In addition, combining causal inference and online learning can further improve the interpretability and real-time adaptability of the strategy. Extreme market conditions and high-frequency trading environments can be introduced to test and strengthen the model’s stability in more volatile scenarios.

	\section*{Declarations}
\begin{itemize}
	\item \textbf{Funding}: Not applicable.
	\item \textbf{Competing Interests}: The authors declare no competing interests.
	\item \textbf{Ethical approval}: Not applicable.
	\item \textbf{Consent to participate}: Not applicable.
	\item \textbf{Consent for publication}: Not applicable.
	\item \textbf{Data availability}: The data are openly available at 
	\item \textbf{Code availability}: The code is available at 
	\item \textbf{Author contributions}: All authors contributed equally. All authors have read and approved the final manuscript.
	\item \textbf{AI disclosure}: Language polishing tools were used in preparation; authors take full responsibility for all content.
\end{itemize}

\newpage
\appendix\section{Additional Experimental Figures}
\label{app1}

\subsection{2 Assets (CAT, GILD)}

\subsubsection{CAT}
\begin{figure}[H]
	\centering
	\subfloat[SB]{\includegraphics[width=0.24\textwidth]{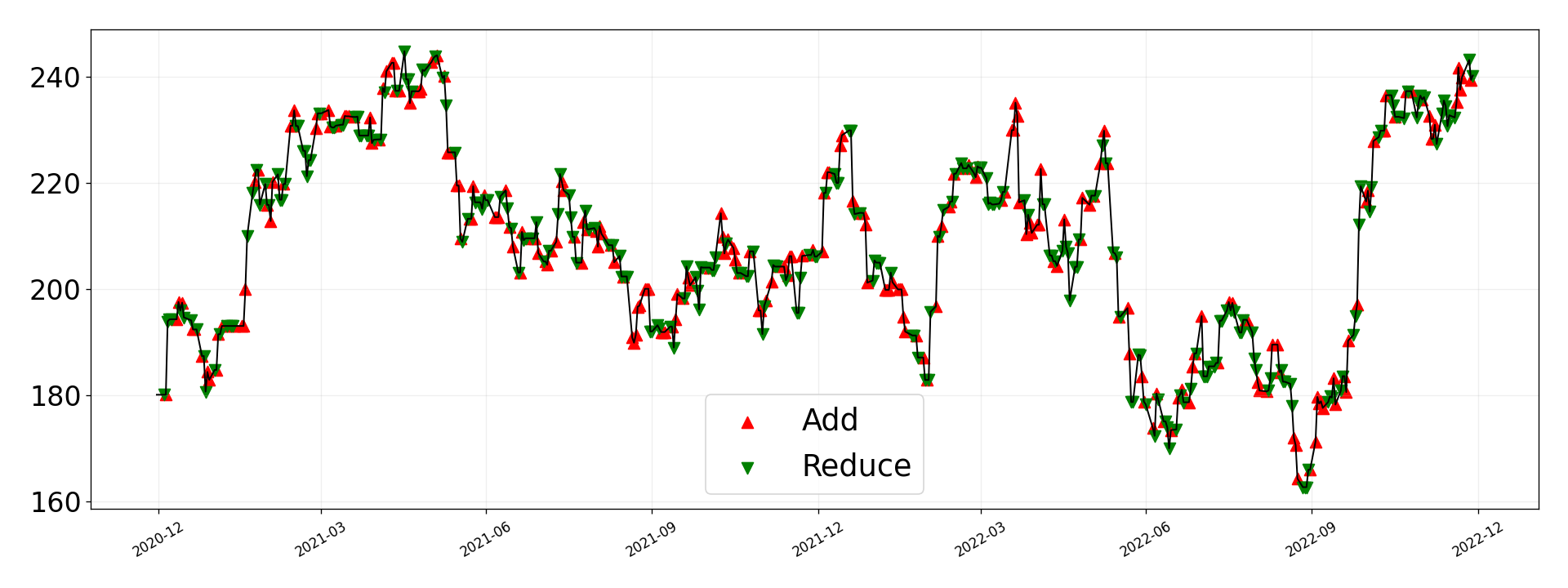}}
	\subfloat[SBA]{\includegraphics[width=0.24\textwidth]{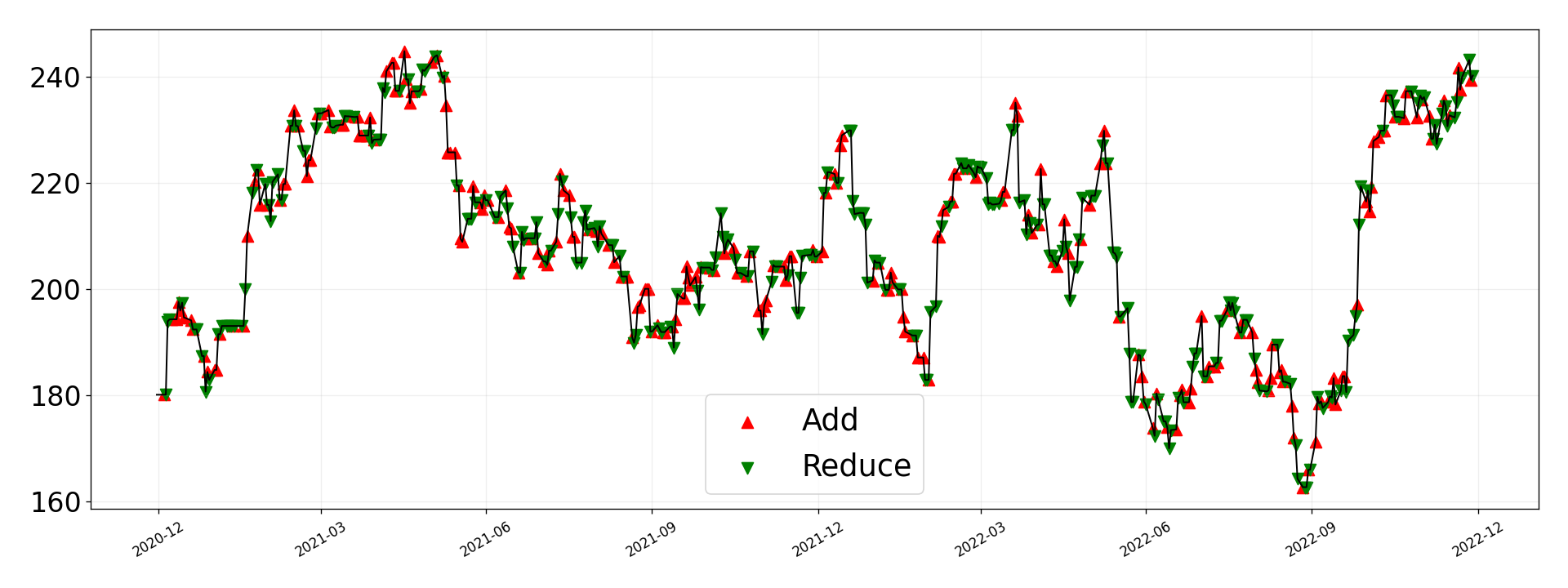}}
	\subfloat[SBC]{\includegraphics[width=0.24\textwidth]{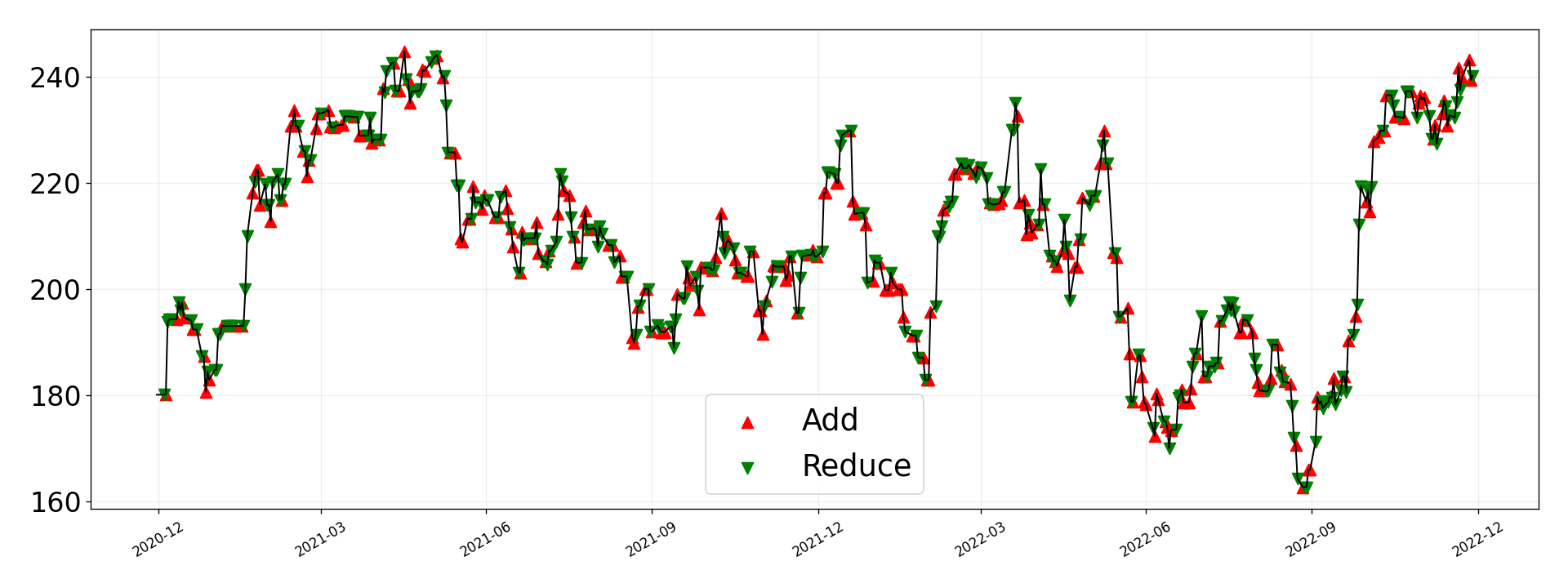}}
	\subfloat[SBCA]{\includegraphics[width=0.24\textwidth]{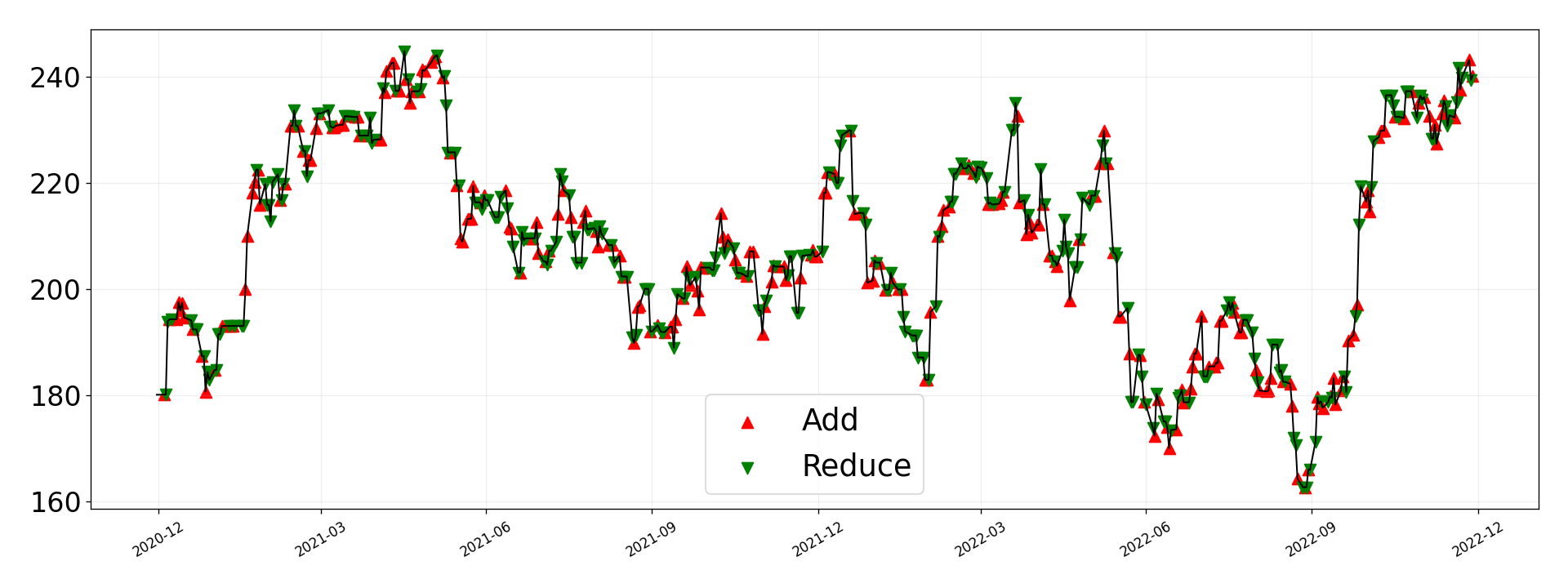}}
	\caption{Trading signals for CAT in 2-asset portfolio.}
\end{figure}

\subsubsection{GILD}
\begin{figure}[H]
	\centering
	\subfloat[SB]{\includegraphics[width=0.24\textwidth]{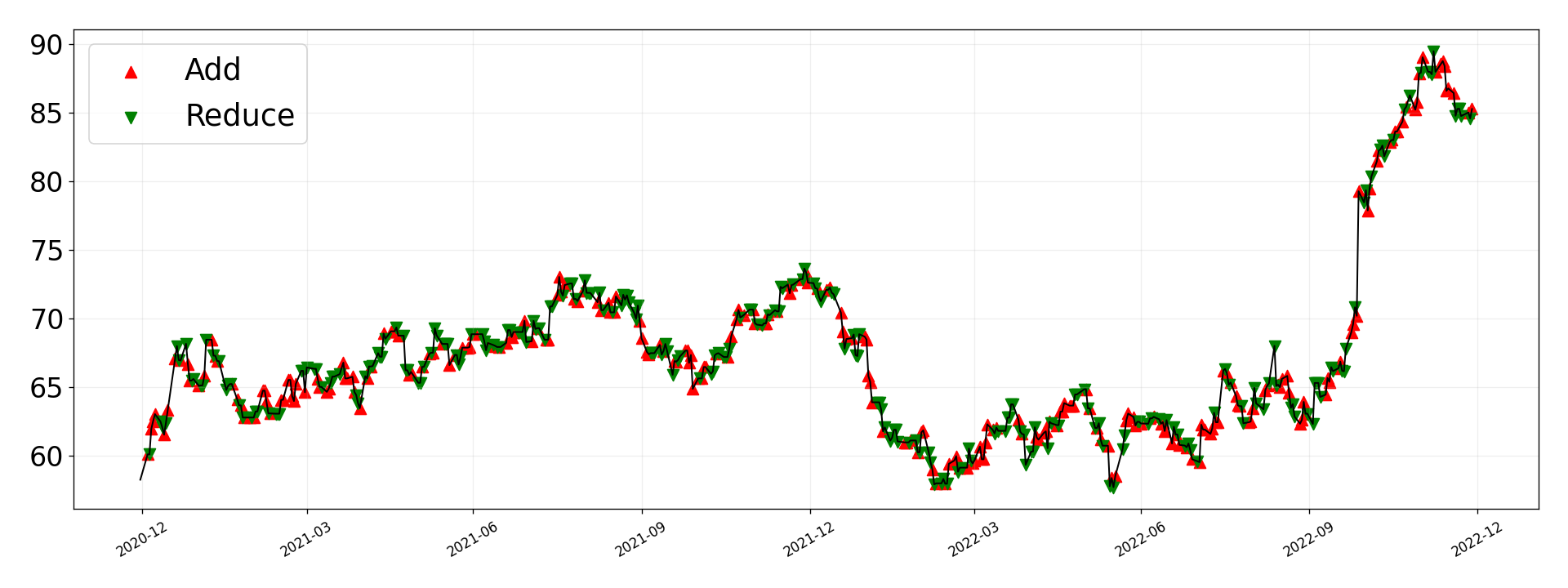}}
	\subfloat[SBA]{\includegraphics[width=0.24\textwidth]{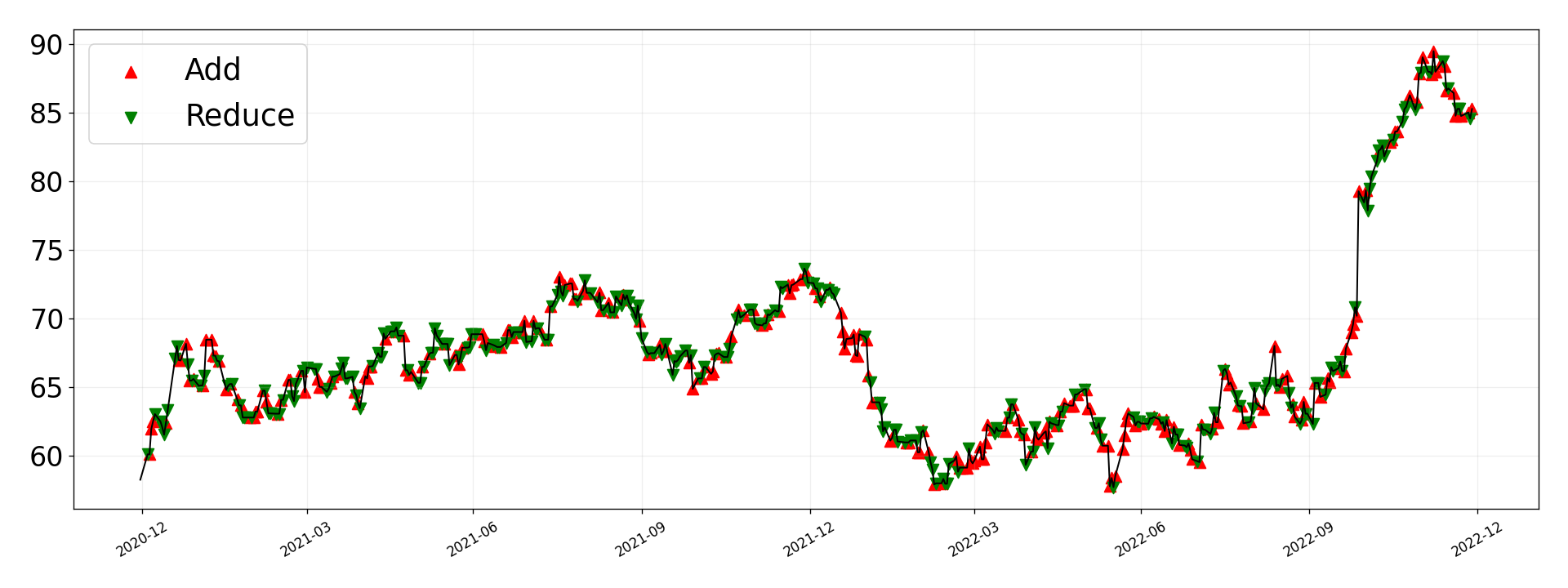}}
	\subfloat[SBC]{\includegraphics[width=0.24\textwidth]{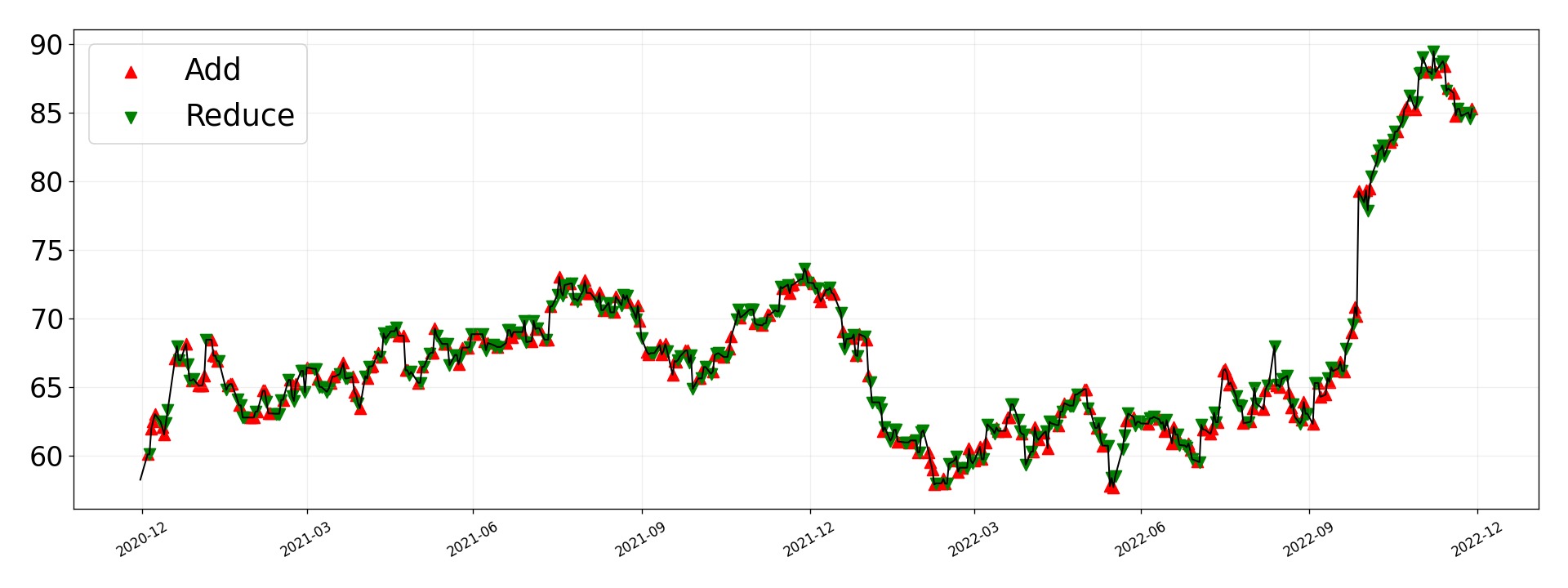}}
	\subfloat[SBCA]{\includegraphics[width=0.24\textwidth]{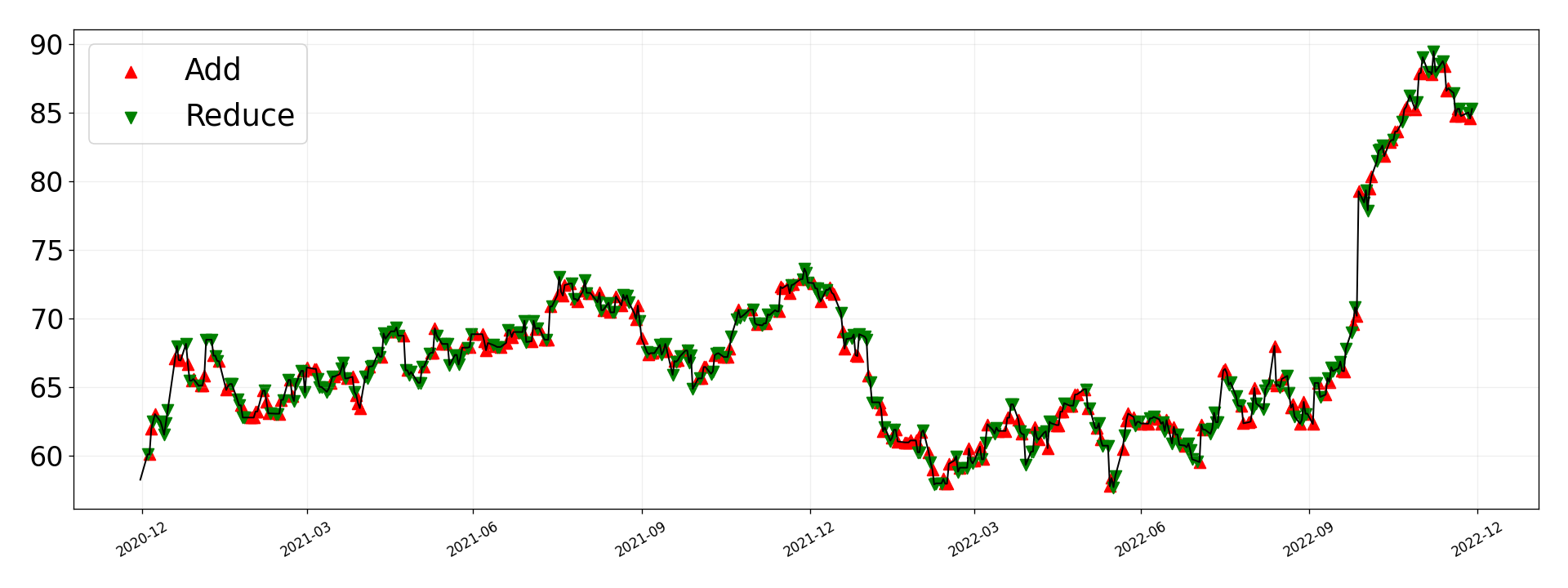}}
	\caption{Trading signals for GILD in 2-asset portfolio.}
\end{figure}

\subsection{4 Assets (CAT, GILD, GS, KO)}

\subsubsection{CAT}
\begin{figure}[H]
	\centering
	\subfloat[SB]{\includegraphics[width=0.24\textwidth]{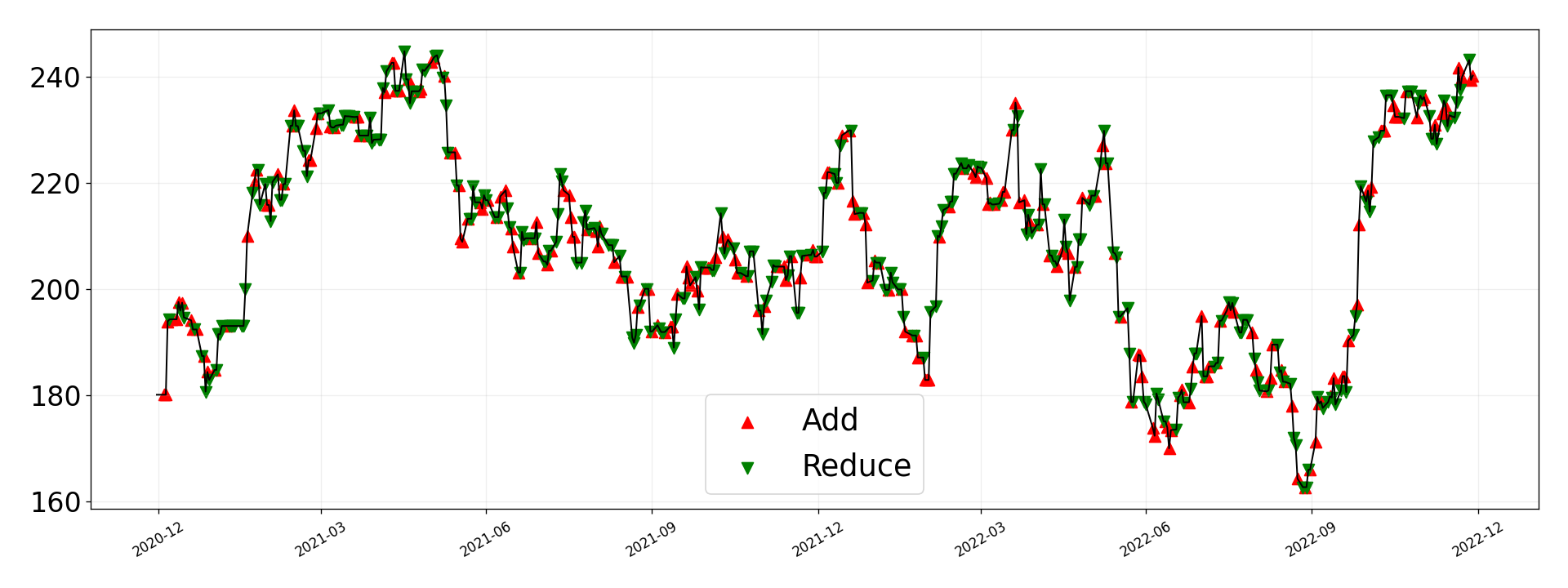}}
	\subfloat[SBA]{\includegraphics[width=0.24\textwidth]{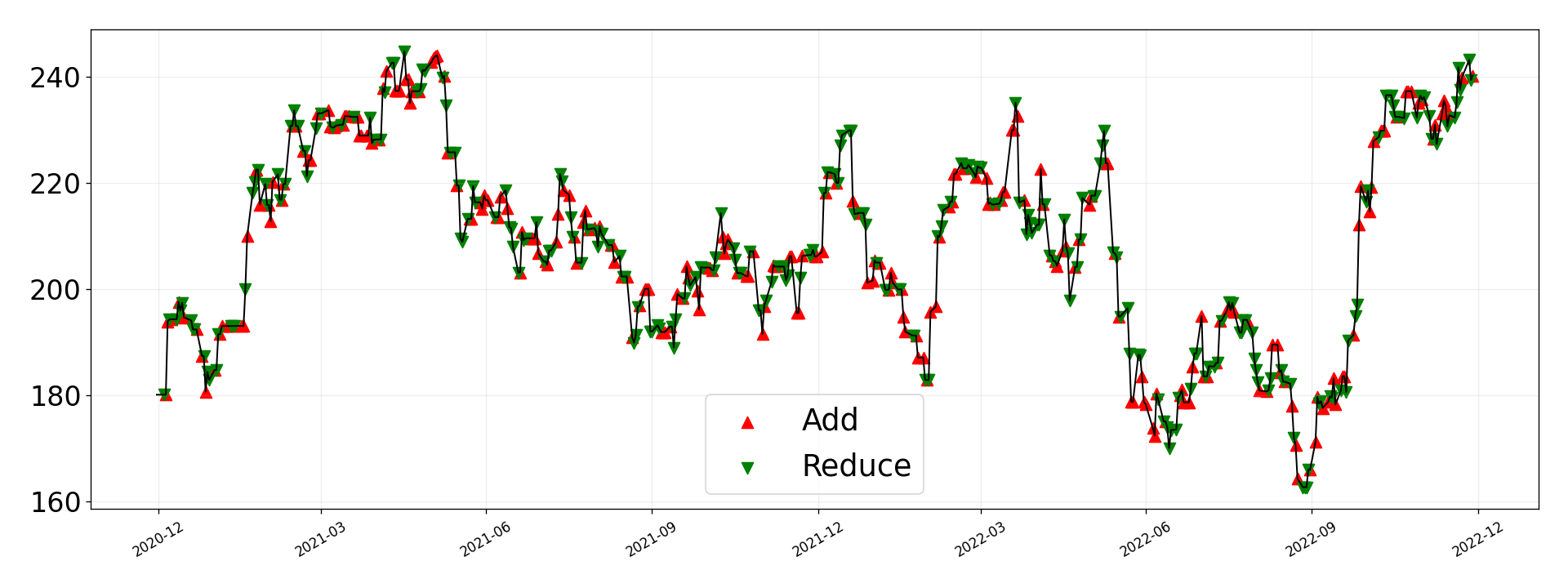}}
	\subfloat[SBC]{\includegraphics[width=0.24\textwidth]{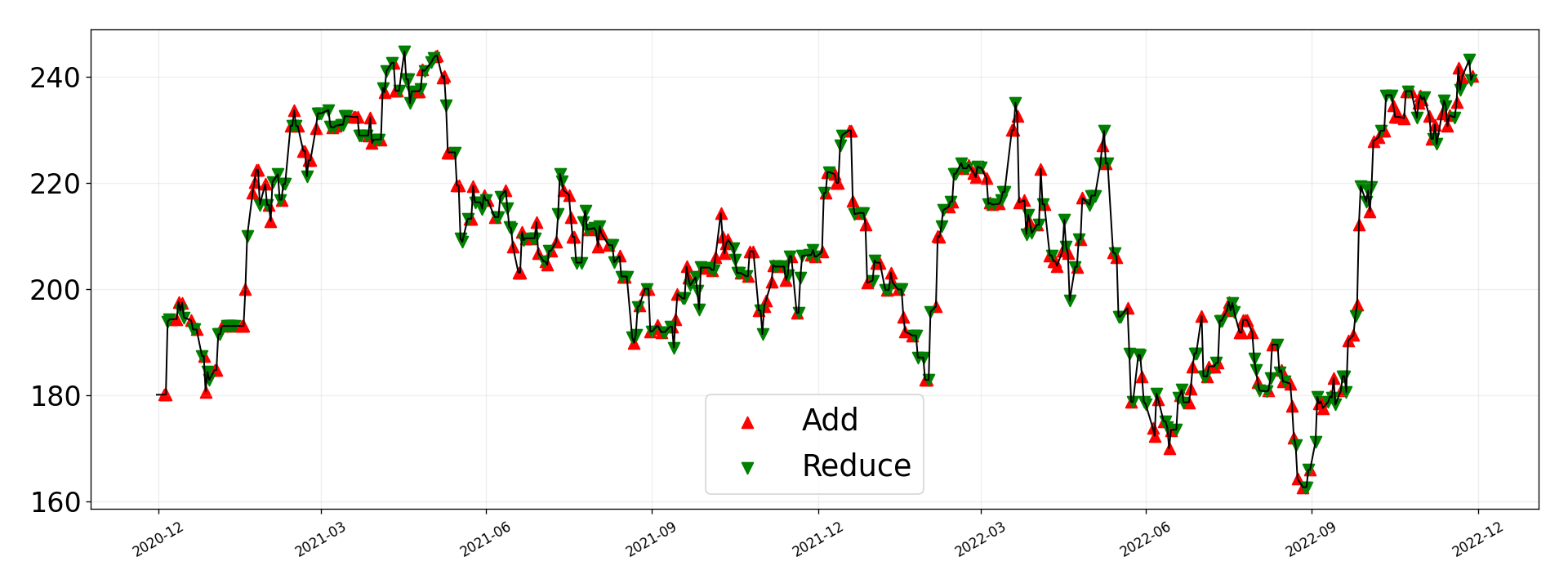}}
	\subfloat[SBCA]{\includegraphics[width=0.24\textwidth]{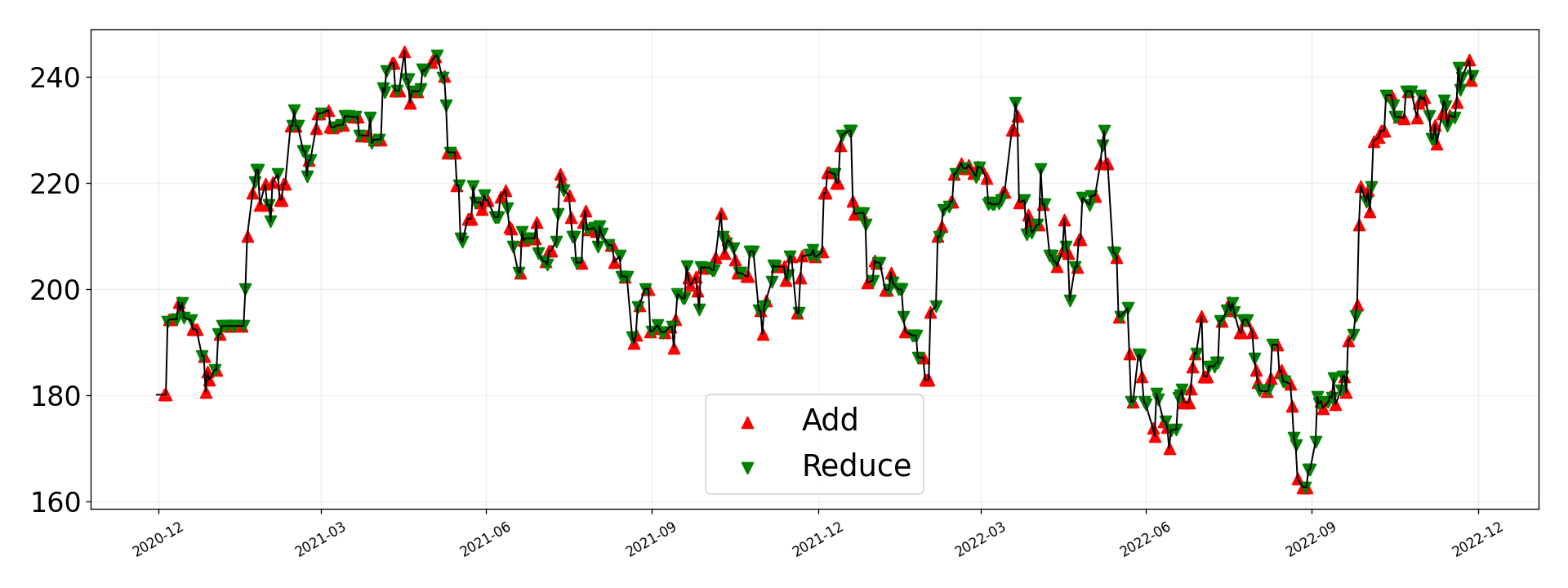}}
	\caption{Trading signals of CAT in 4-asset portfolio.}
\end{figure}

\subsubsection{GILD}
\begin{figure}[H]
	\centering
	\subfloat[SB]{\includegraphics[width=0.24\textwidth]{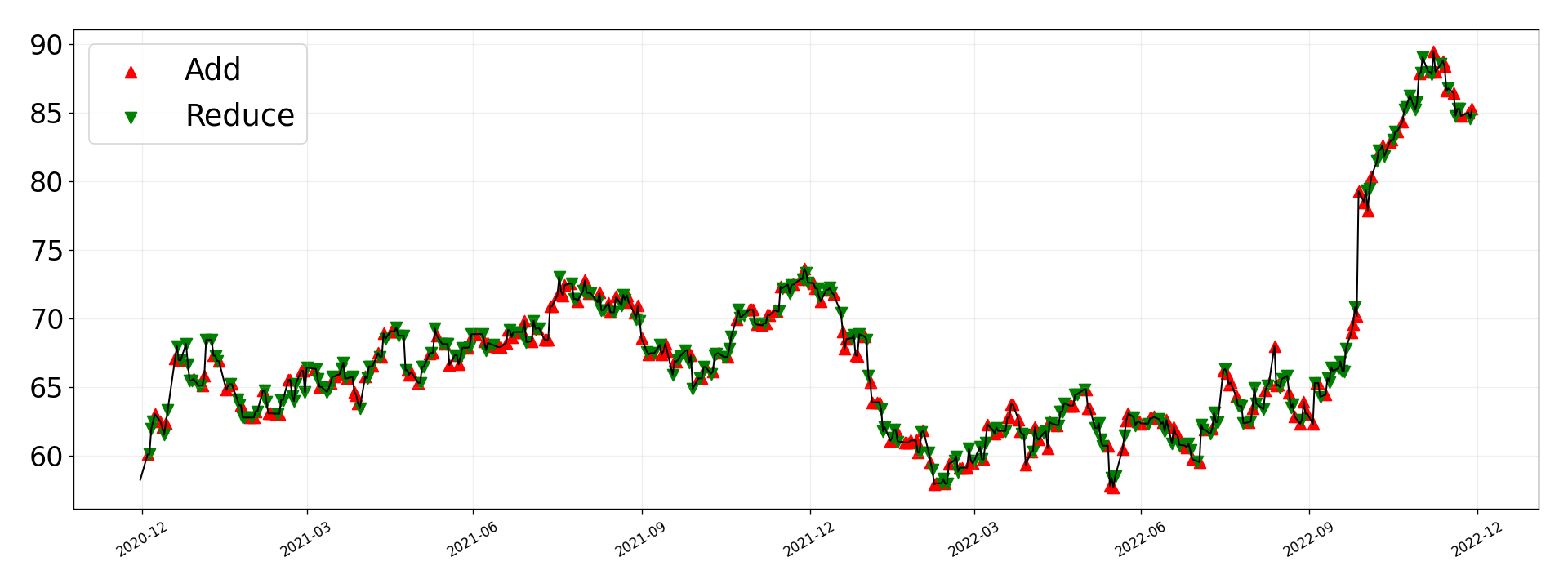}}
	\subfloat[SBA]{\includegraphics[width=0.24\textwidth]{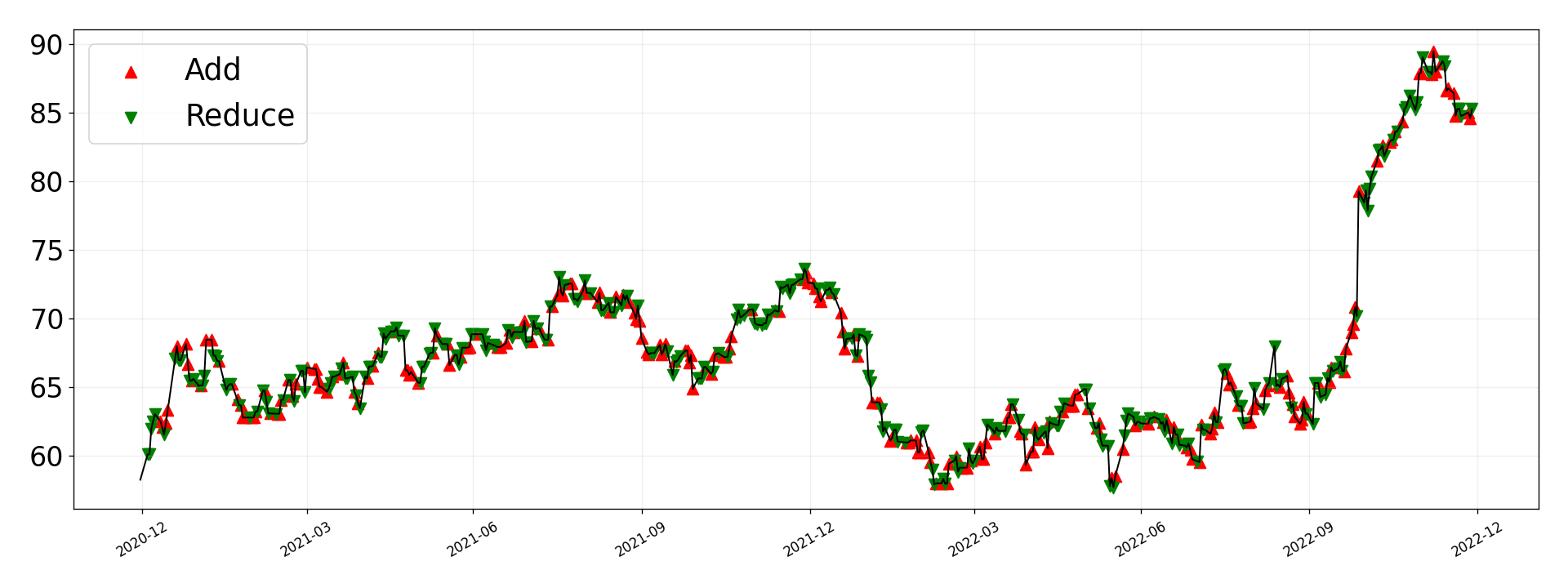}}
	\subfloat[SBC]{\includegraphics[width=0.24\textwidth]{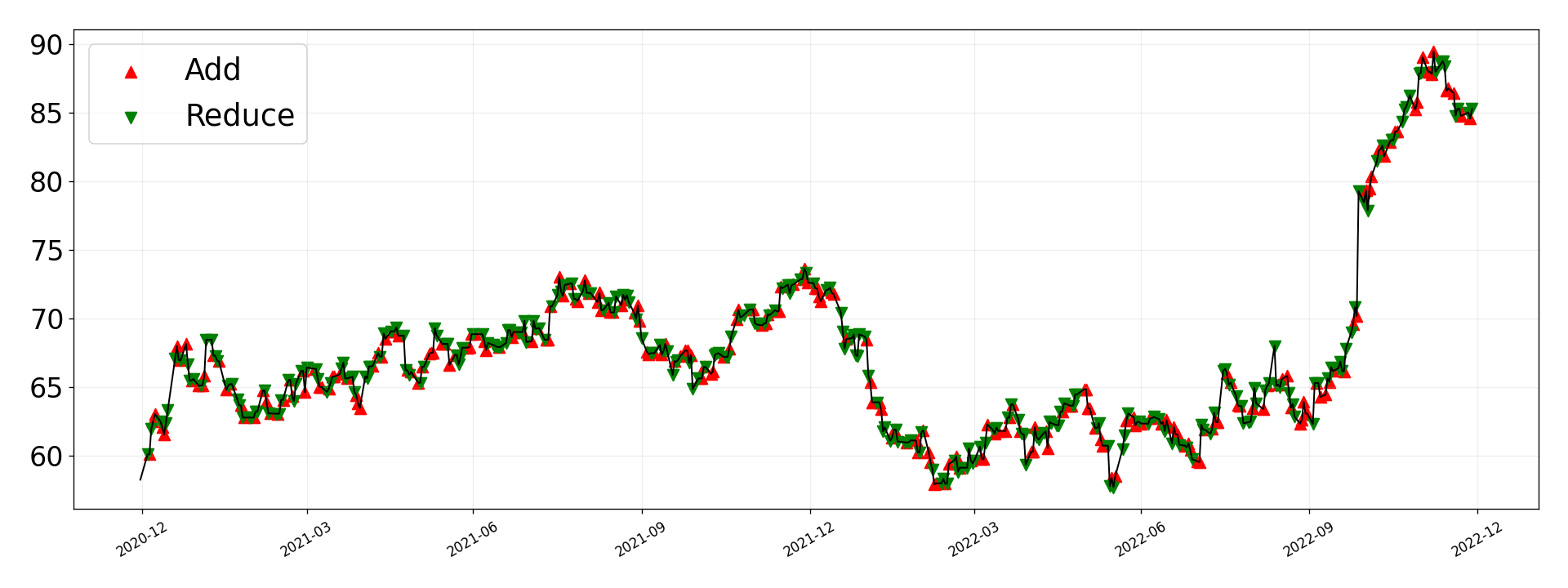}}
	\subfloat[SBCA]{\includegraphics[width=0.24\textwidth]{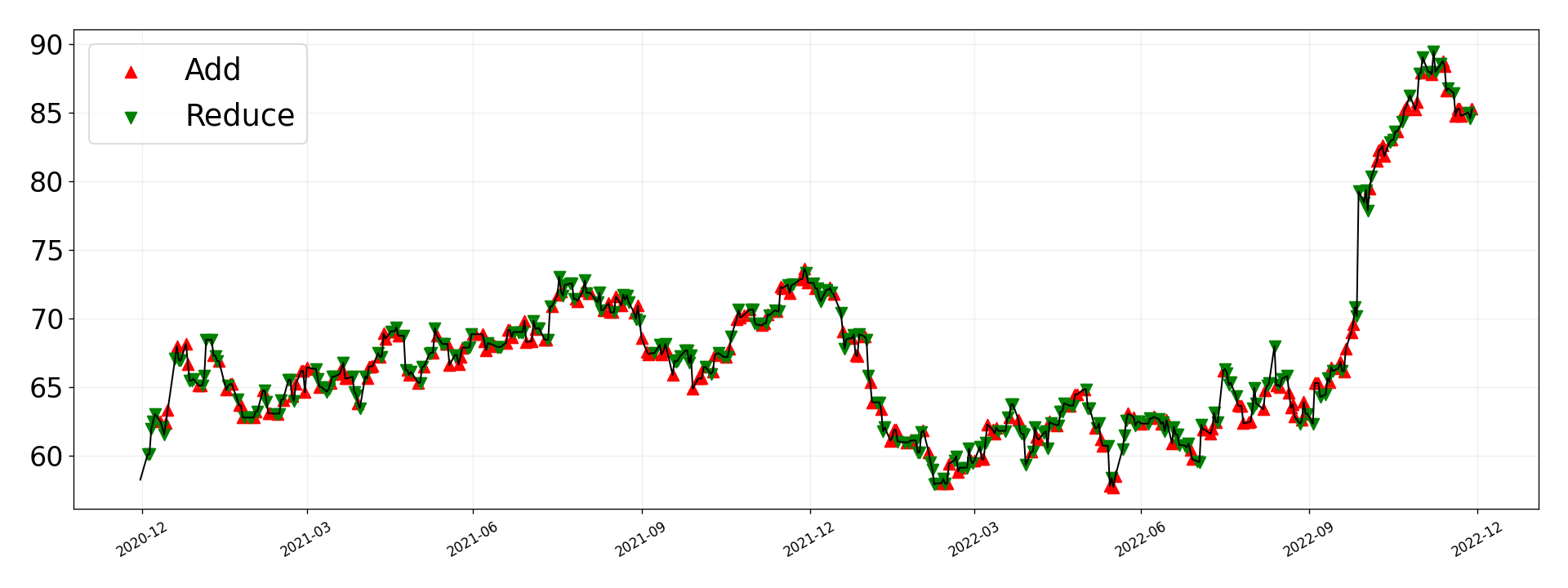}}
	\caption{Trading signals of GILD in 4-asset portfolio.}
\end{figure}

\subsubsection{GS}
\begin{figure}[H]
	\centering
	\subfloat[SB]{\includegraphics[width=0.24\textwidth]{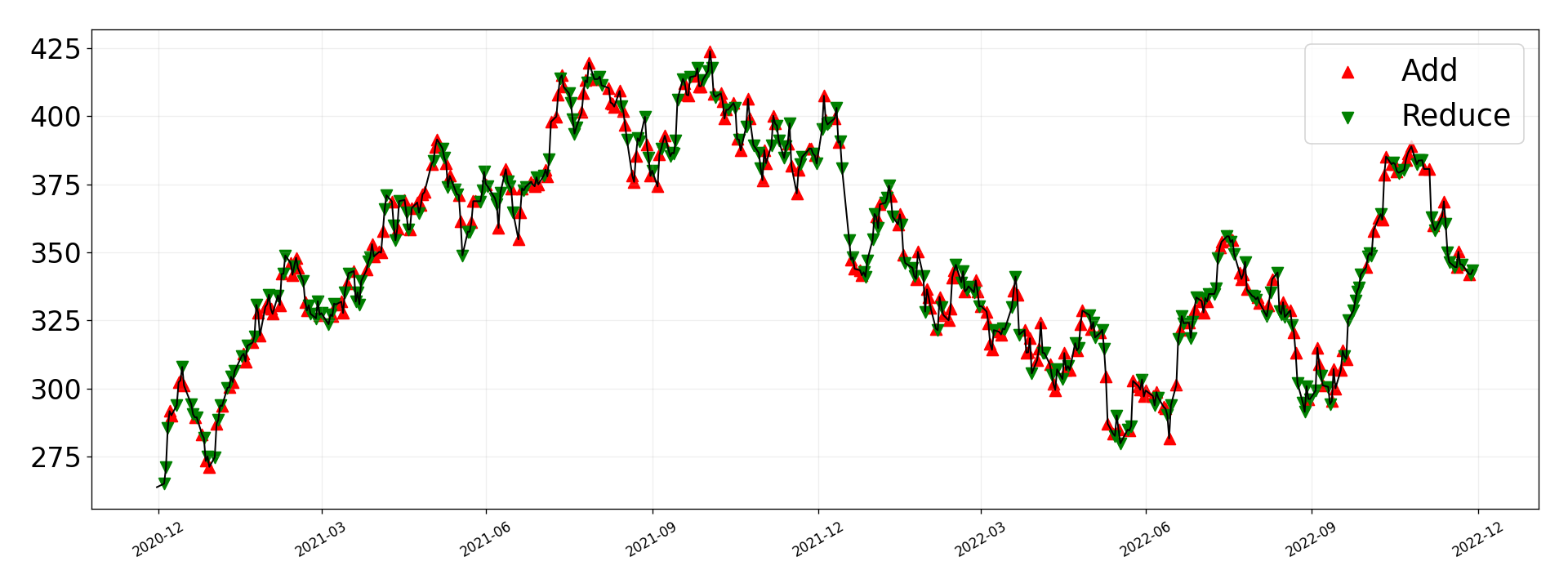}}
	\subfloat[SBA]{\includegraphics[width=0.24\textwidth]{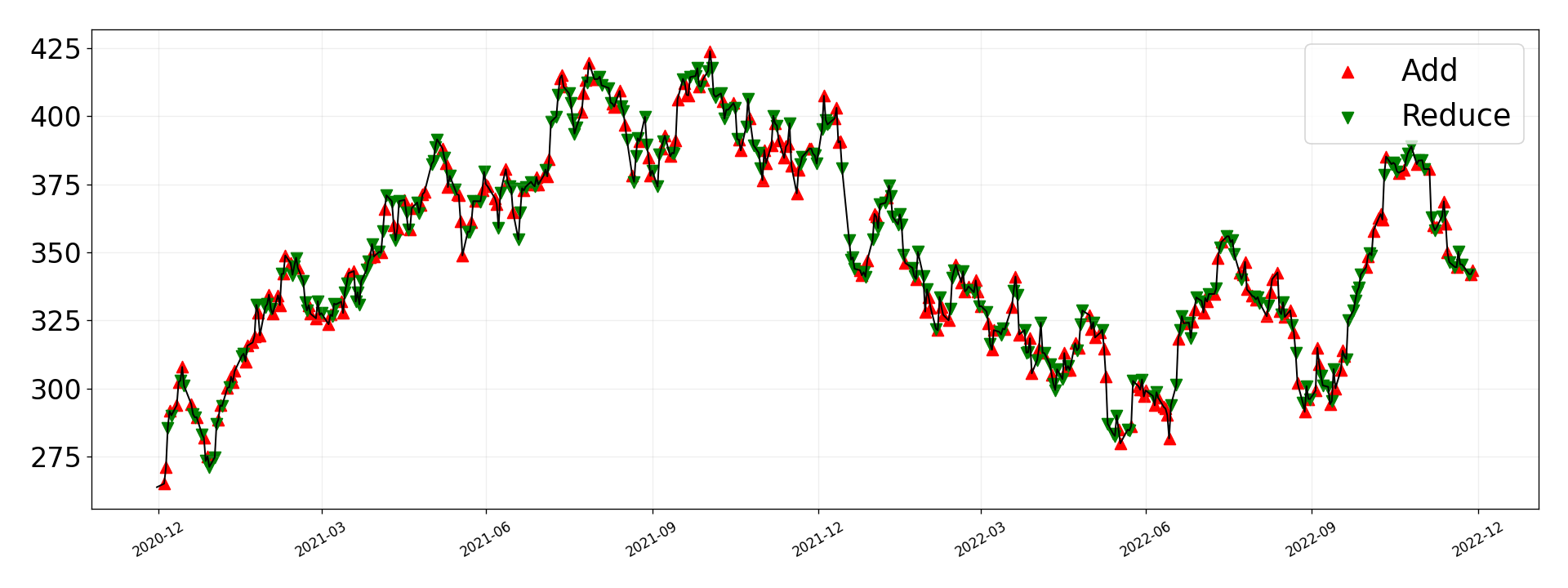}}
	\subfloat[SBC]{\includegraphics[width=0.24\textwidth]{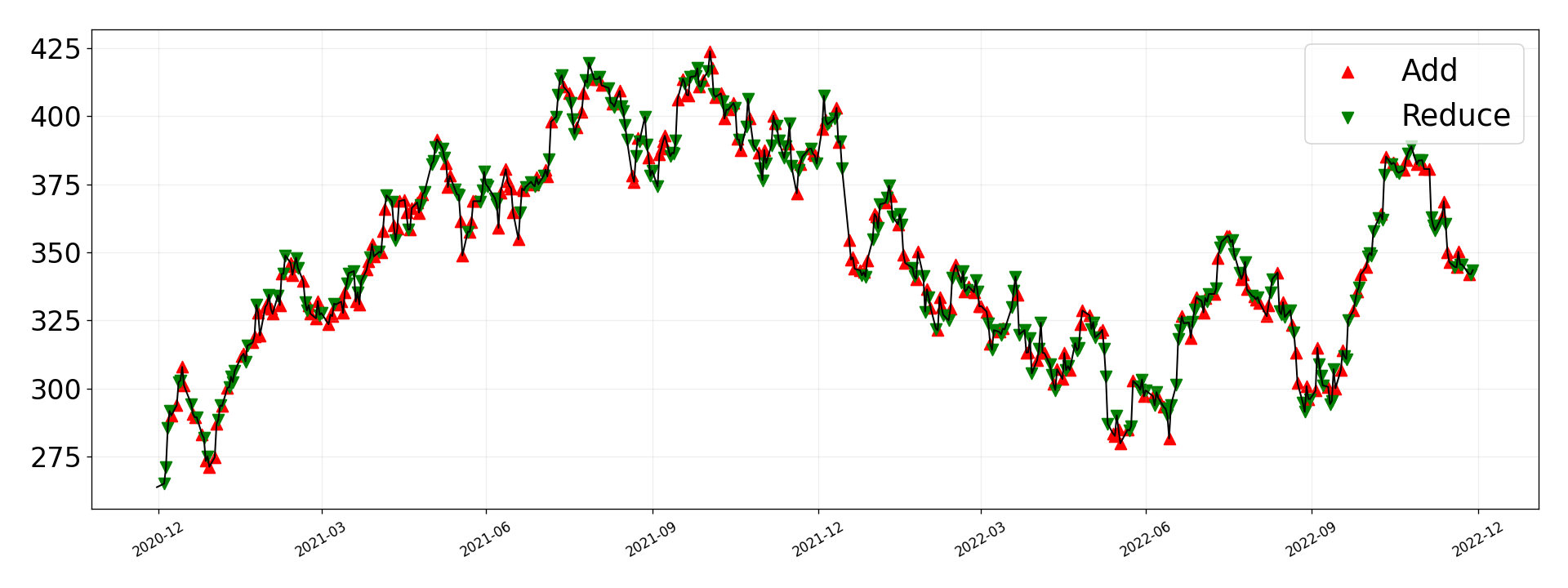}}
	\subfloat[SBCA]{\includegraphics[width=0.24\textwidth]{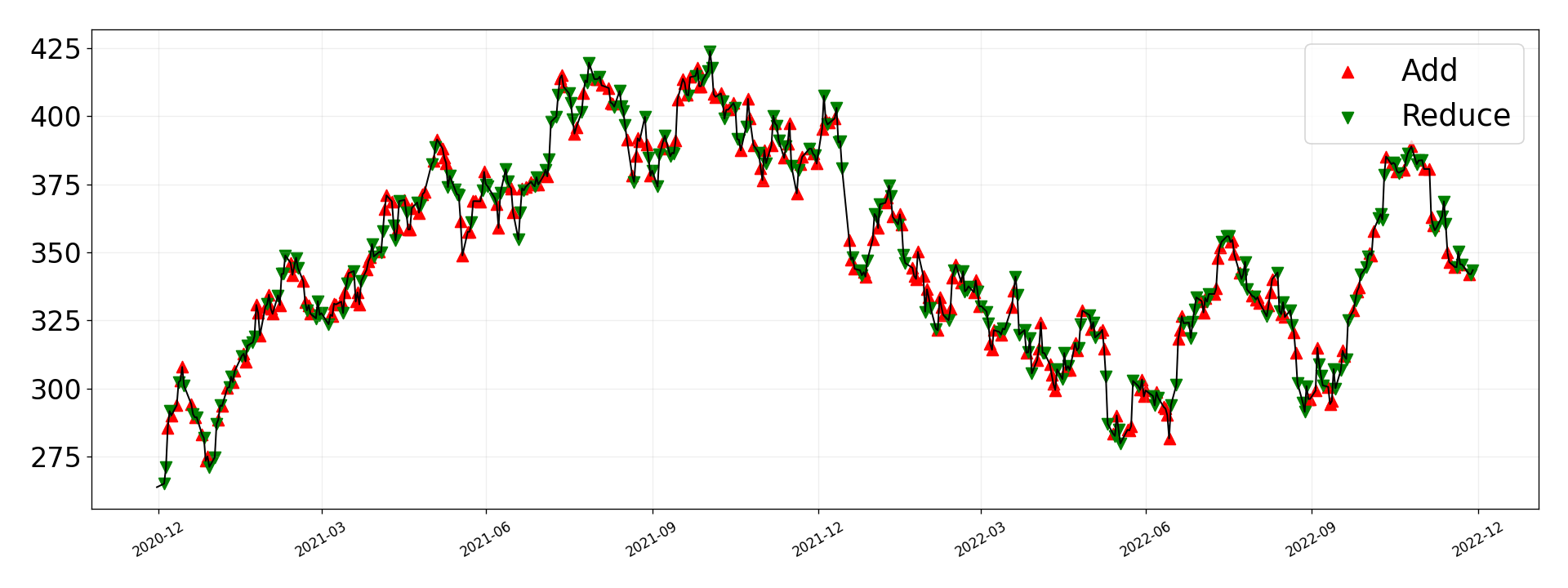}}
	\caption{Trading signals of GS in 4-asset portfolio.}
\end{figure}

\subsubsection{KO}
\begin{figure}[H]
	\centering
	\subfloat[SB]{\includegraphics[width=0.24\textwidth]{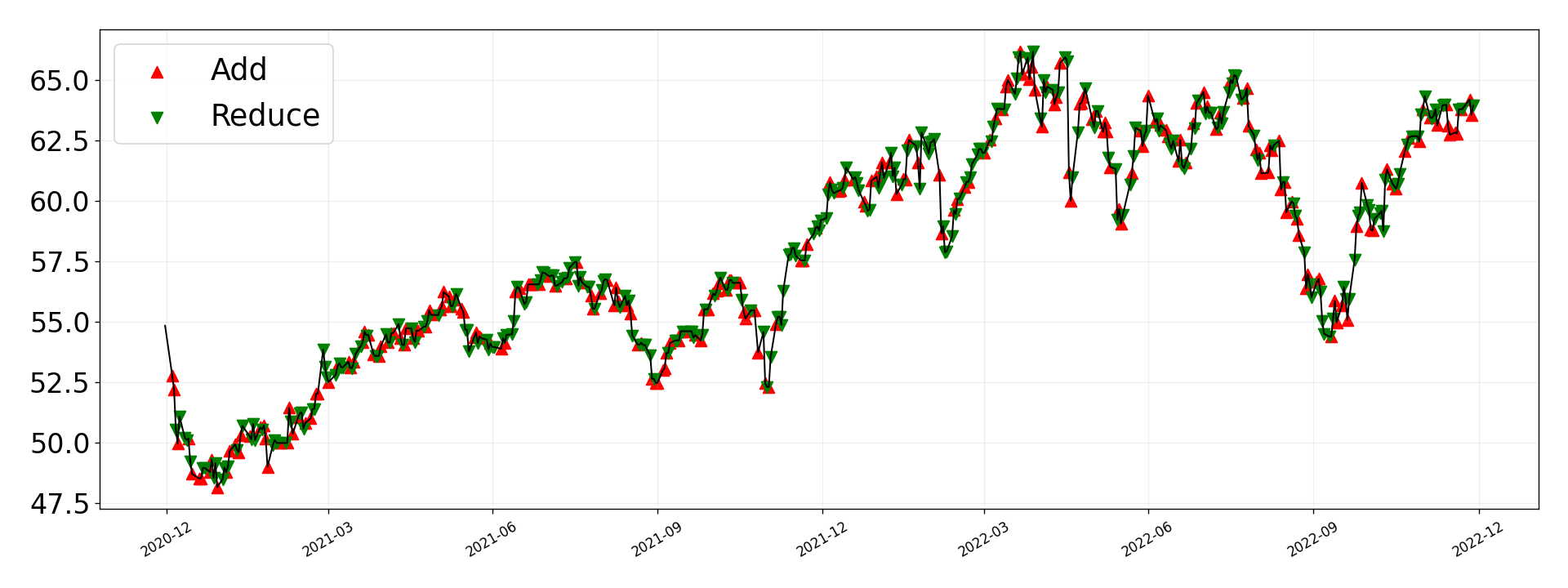}}
	\subfloat[SBA]{\includegraphics[width=0.24\textwidth]{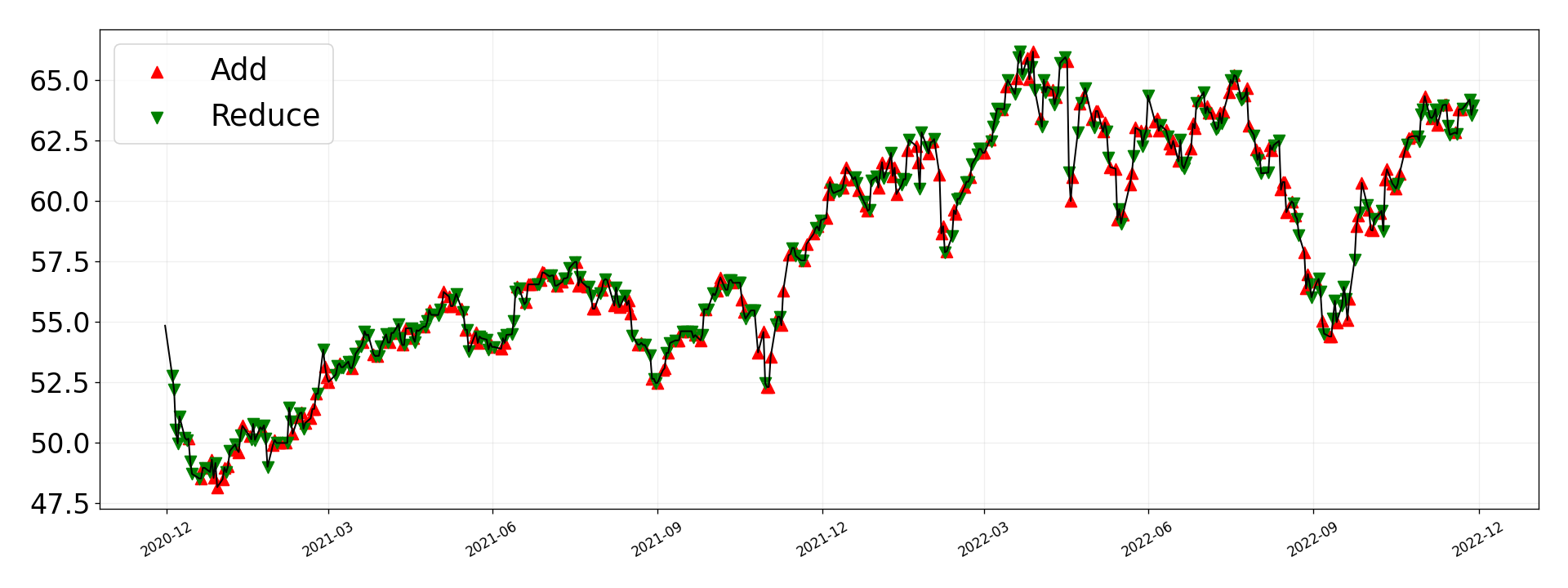}}
	\subfloat[SBC]{\includegraphics[width=0.24\textwidth]{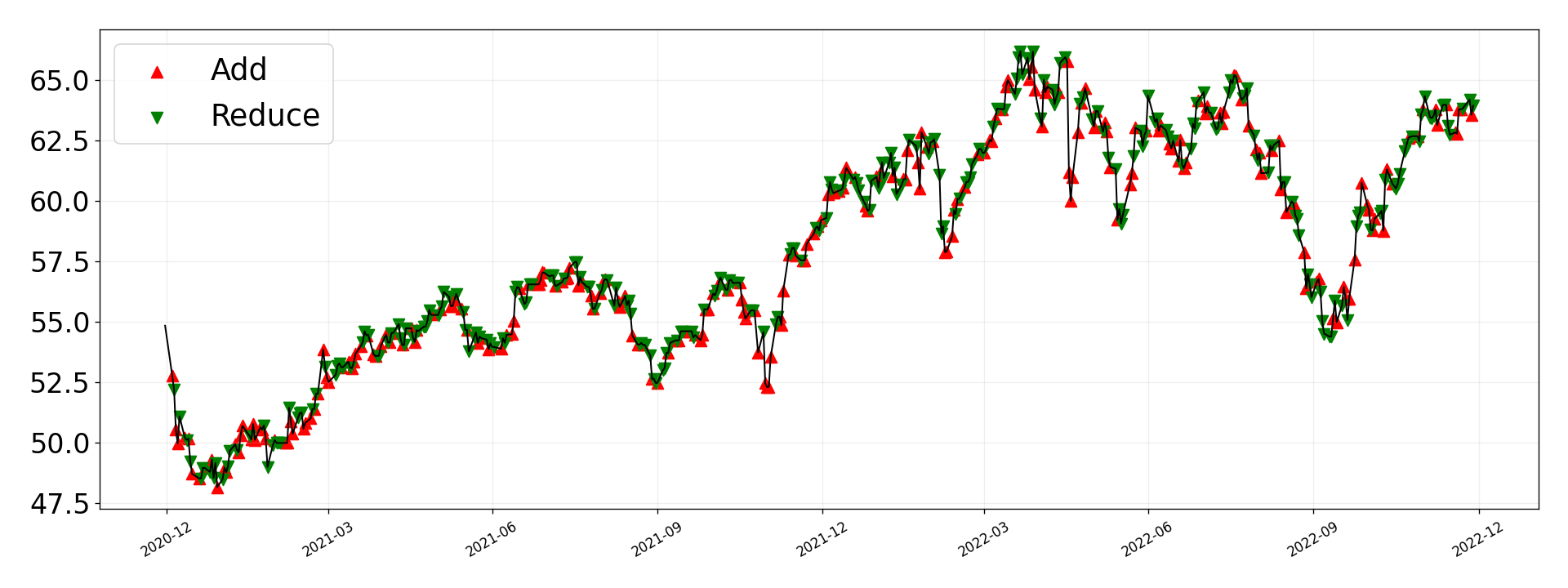}}
	\subfloat[SBCA]{\includegraphics[width=0.24\textwidth]{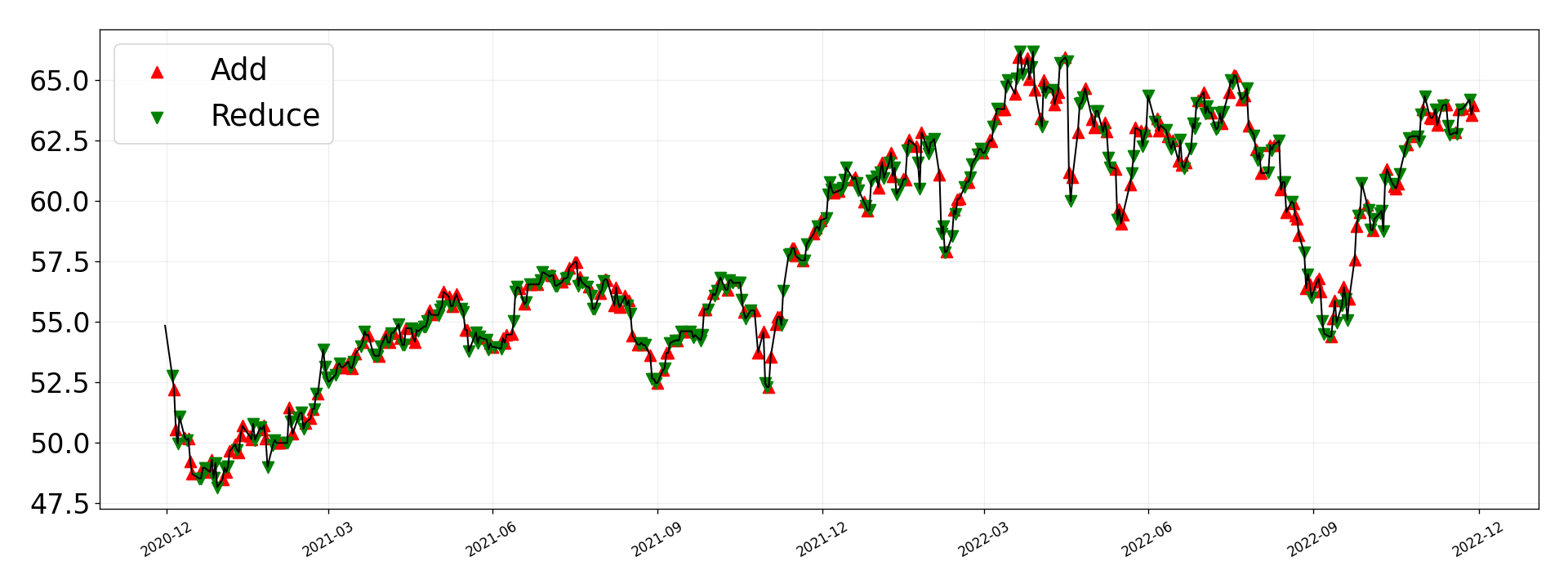}}
	\caption{Trading signals of KO in 4-asset portfolio.}
\end{figure}

\subsection{6 Assets (CAT, GILD, GS, KO, MRK, NVDA)}

\subsubsection{CAT}
\begin{figure}[H]
	\centering
	\subfloat[SB]{\includegraphics[width=0.24\textwidth]{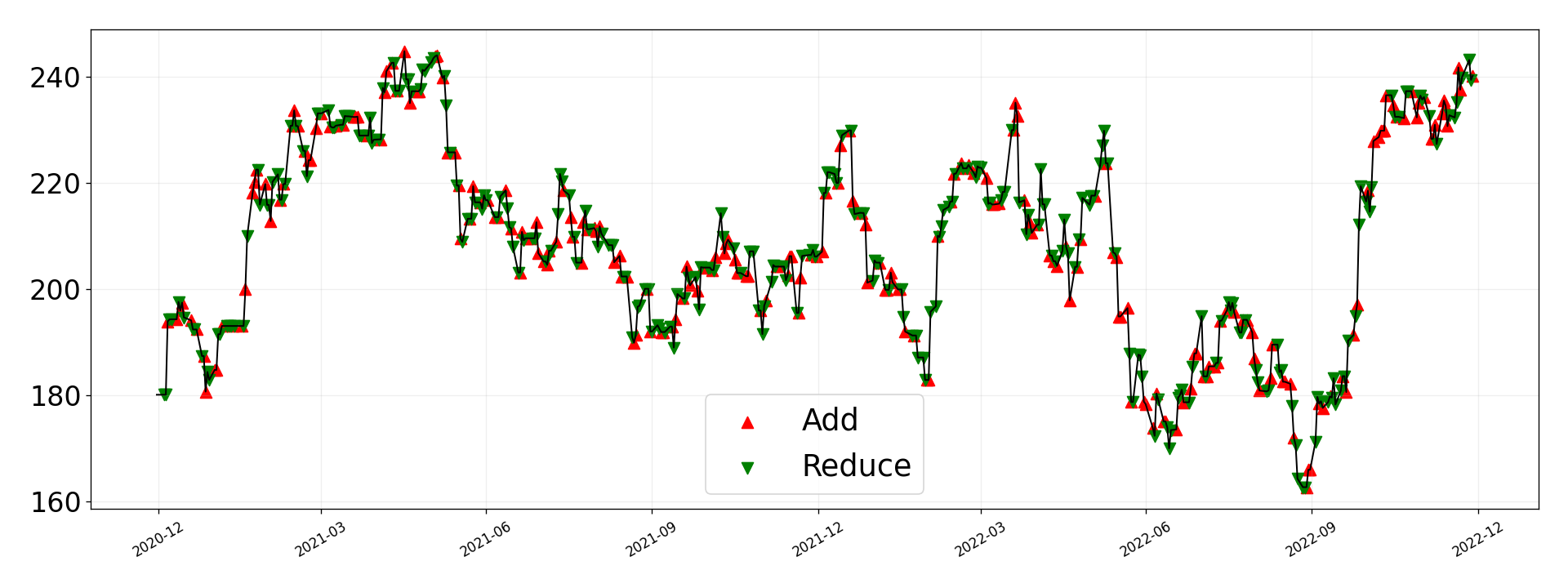}}
	\subfloat[SBA]{\includegraphics[width=0.24\textwidth]{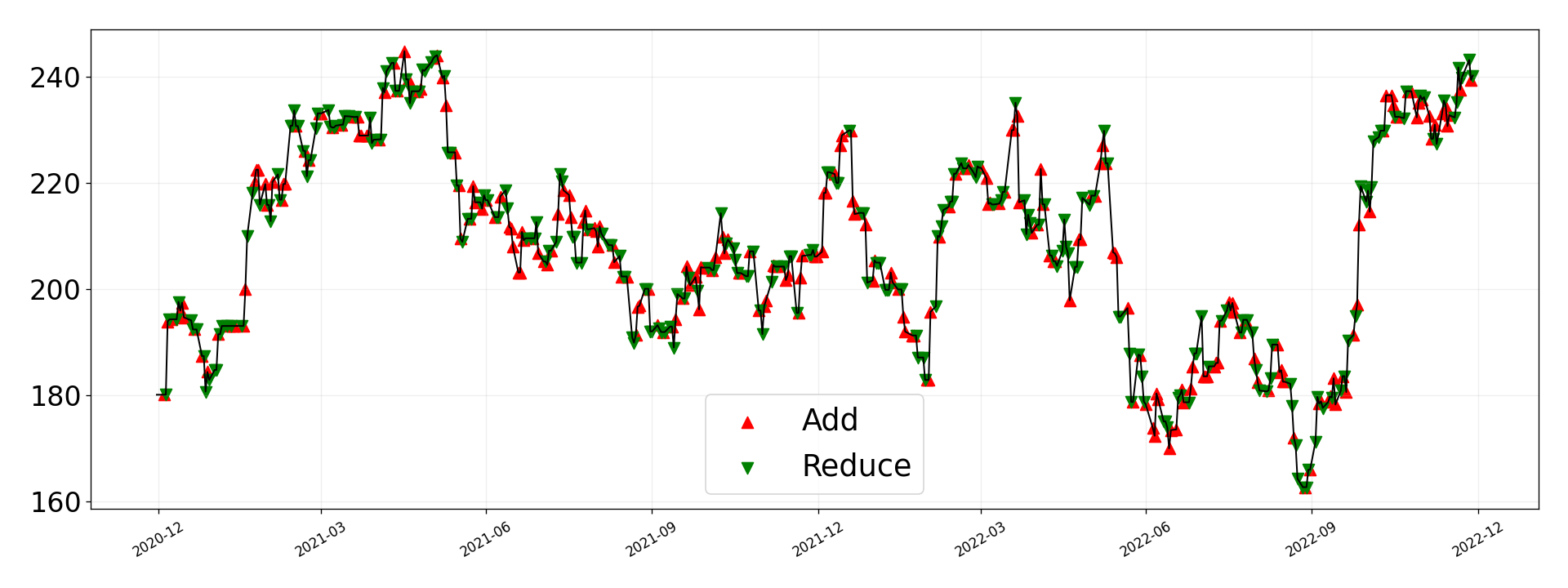}}
	\subfloat[SBC]{\includegraphics[width=0.24\textwidth]{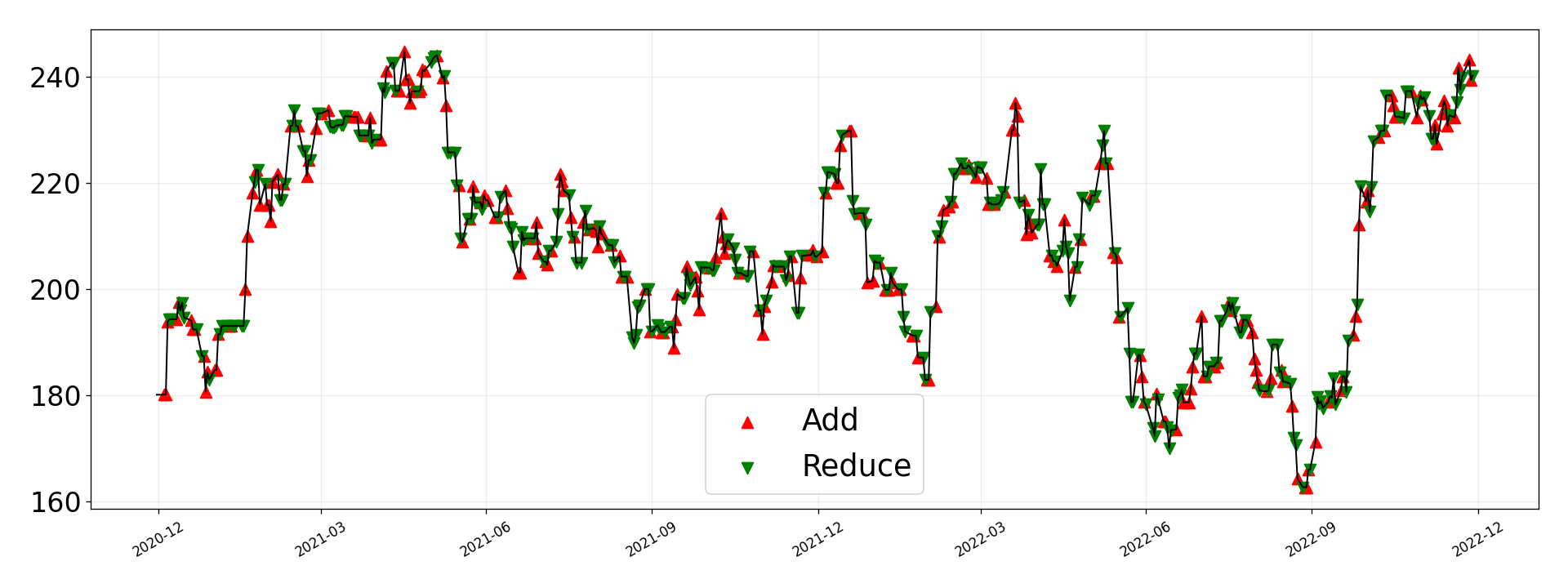}}
	\subfloat[SBCA]{\includegraphics[width=0.24\textwidth]{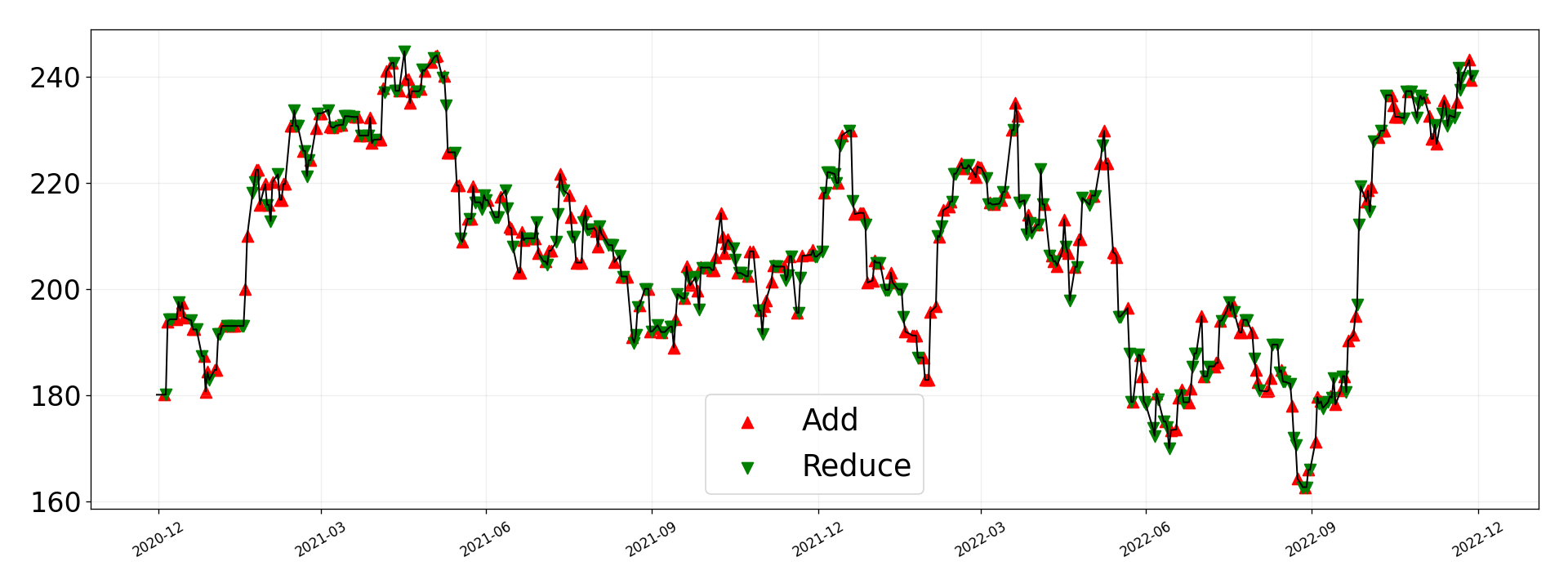}}
	\caption{Trading signals of CAT in 6-asset portfolio.}
\end{figure}

\subsubsection{GILD}
\begin{figure}[H]
	\centering
	\subfloat[SB]{\includegraphics[width=0.24\textwidth]{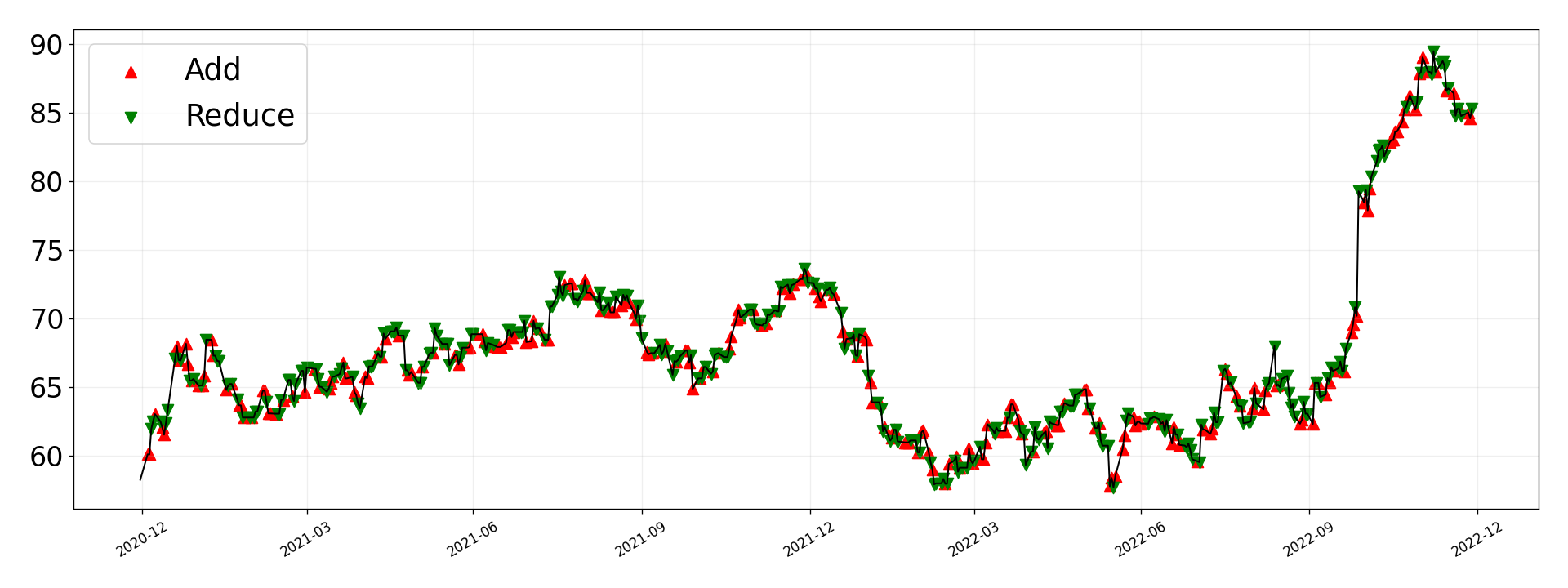}}
	\subfloat[SBA]{\includegraphics[width=0.24\textwidth]{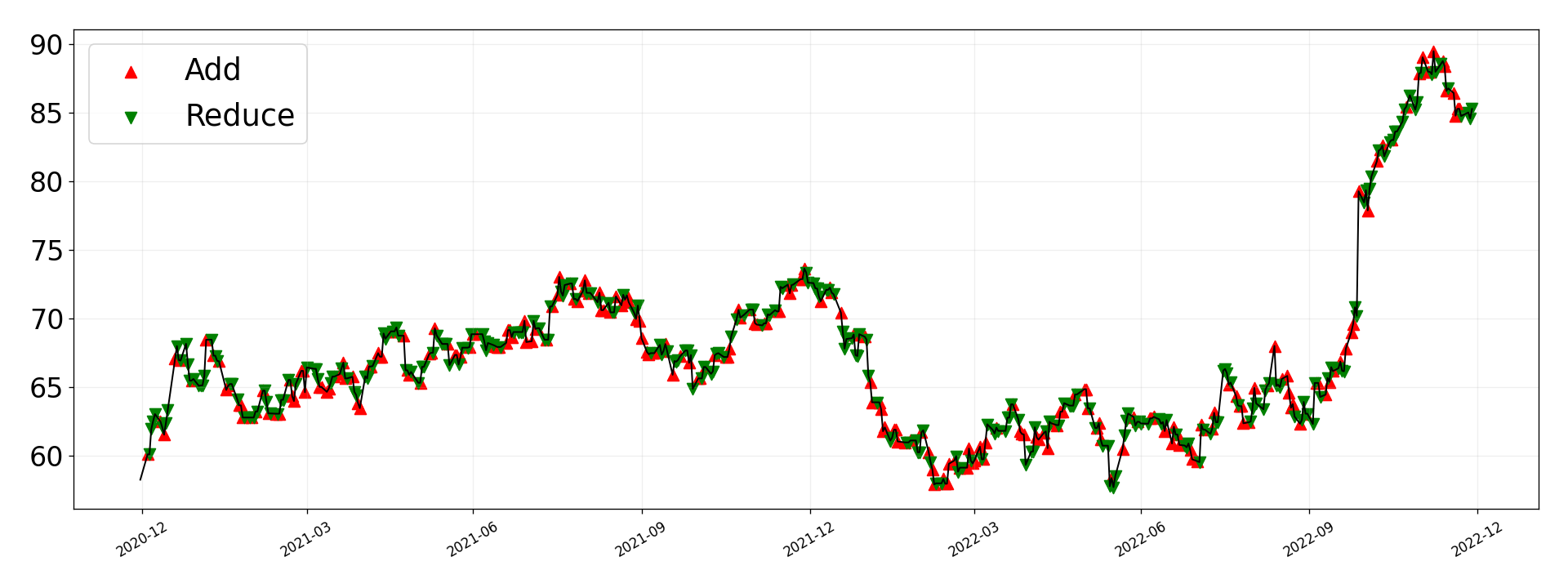}}
	\subfloat[SBC]{\includegraphics[width=0.24\textwidth]{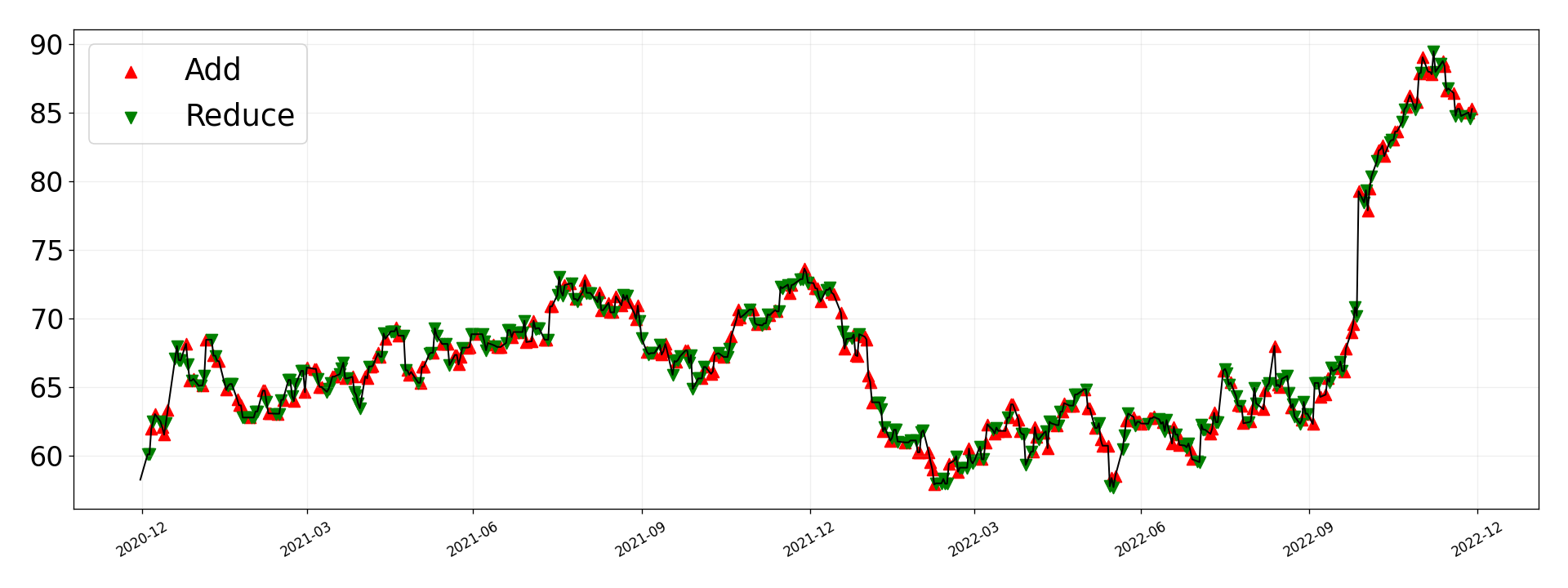}}
	\subfloat[SBCA]{\includegraphics[width=0.24\textwidth]{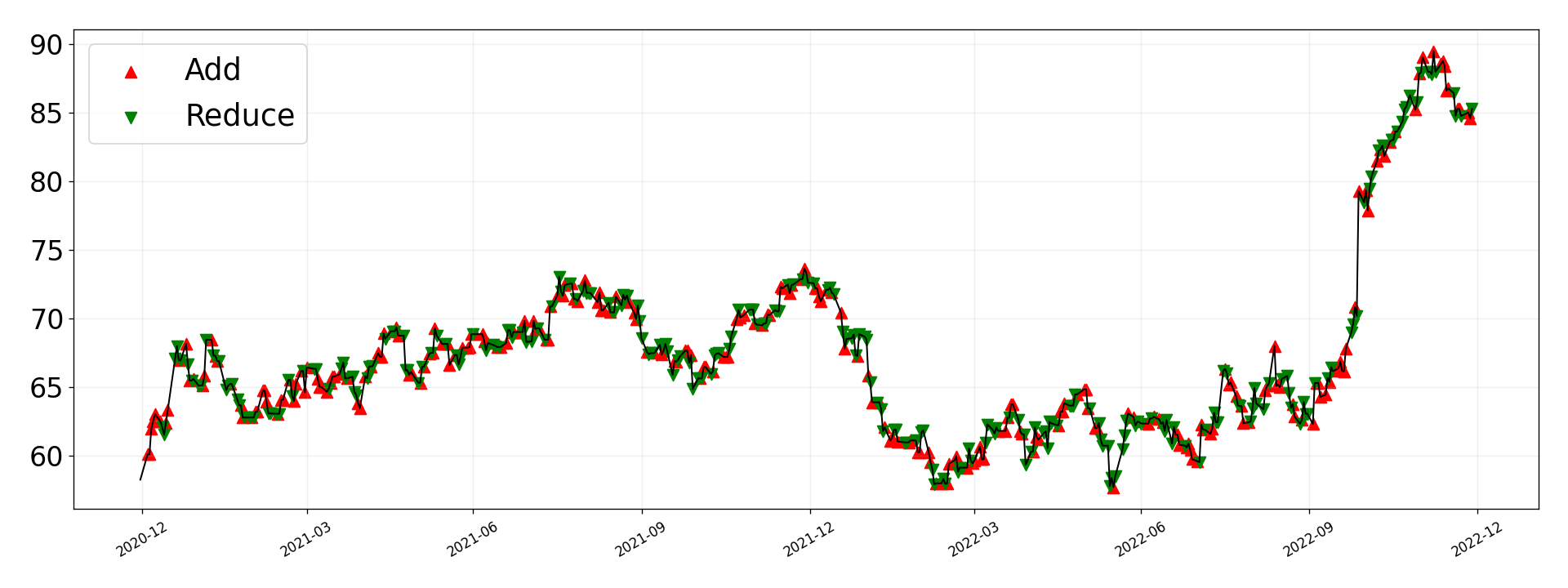}}
	\caption{Trading signals of GILD in 6-asset portfolio.}
\end{figure}

\subsubsection{GS}
\begin{figure}[H]
	\centering
	\subfloat[SB]{\includegraphics[width=0.24\textwidth]{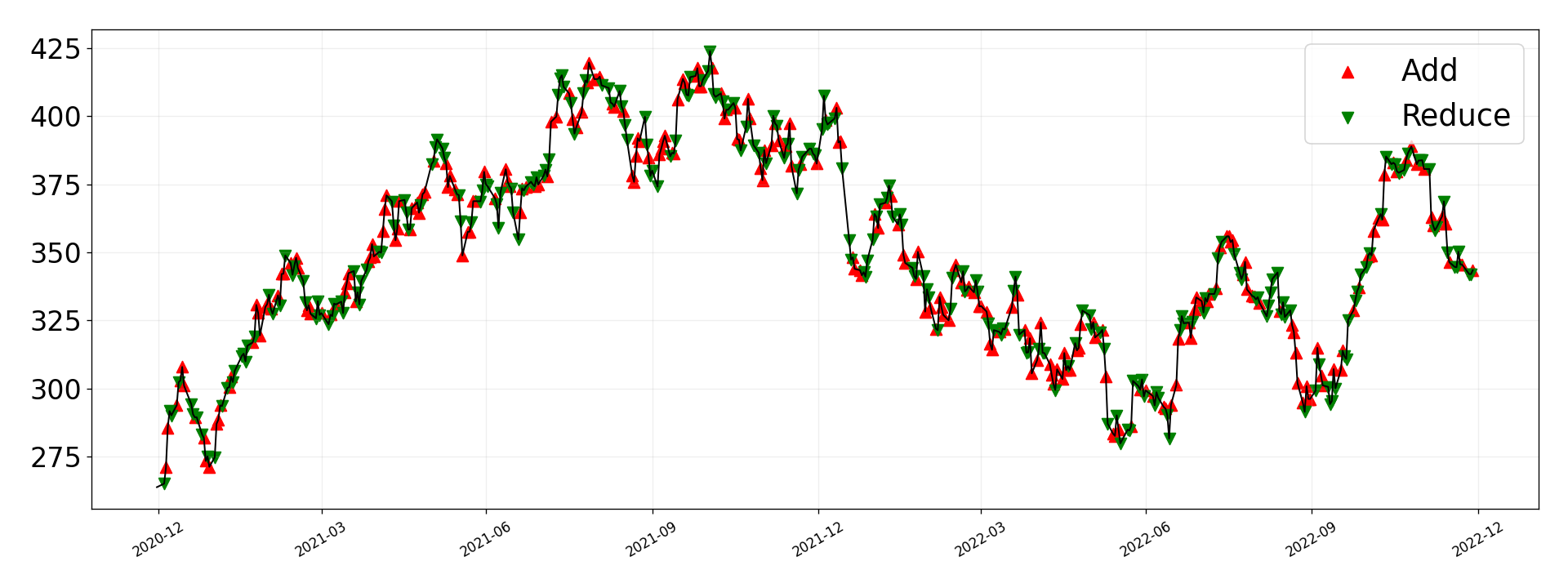}}
	\subfloat[SBA]{\includegraphics[width=0.24\textwidth]{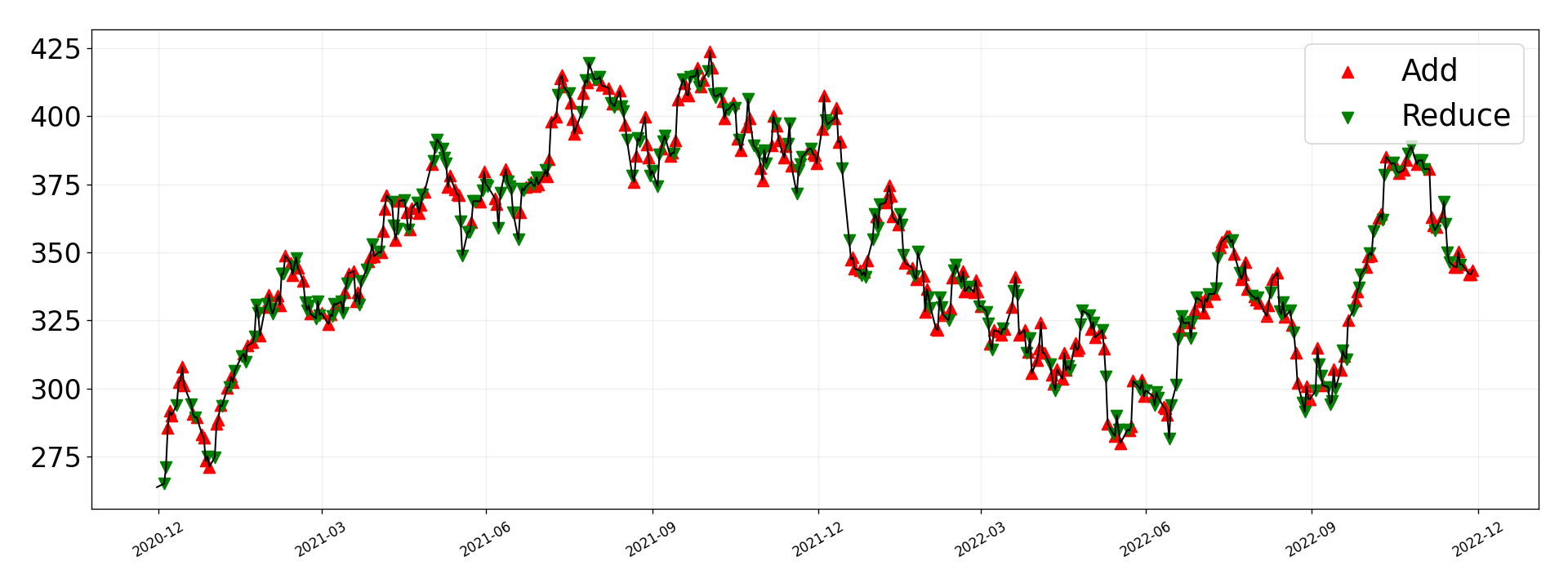}}
	\subfloat[SBC]{\includegraphics[width=0.24\textwidth]{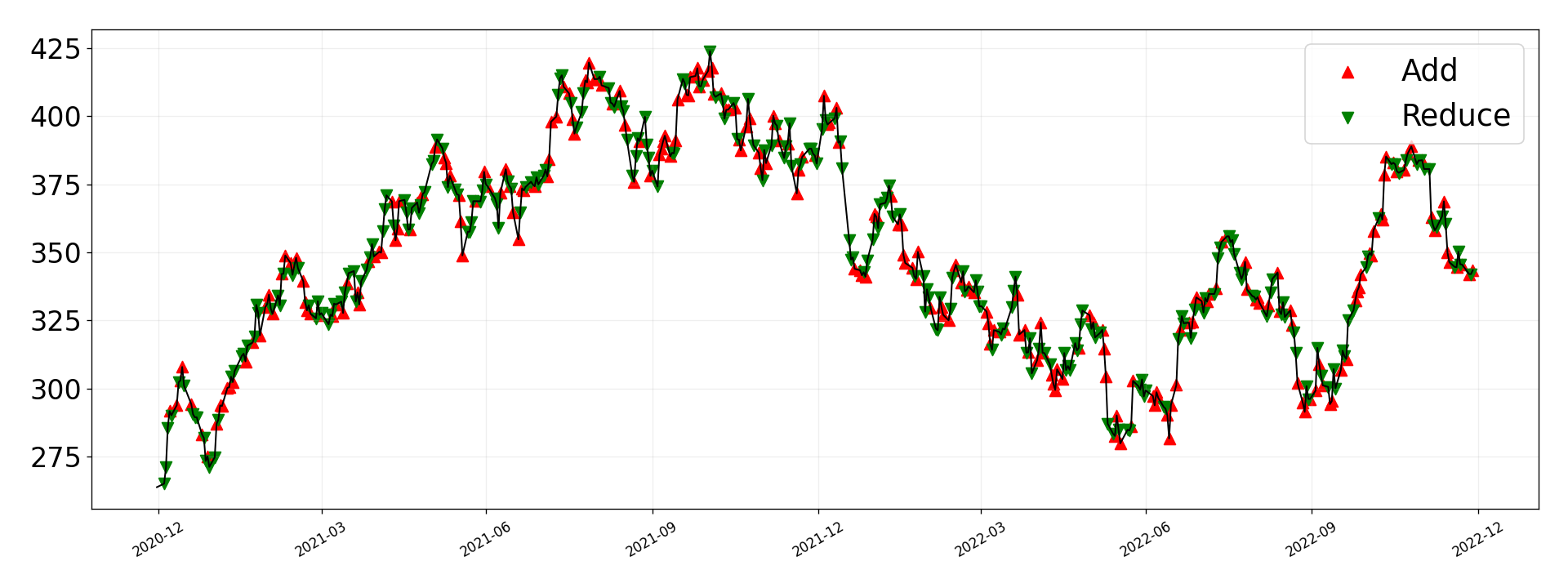}}
	\subfloat[SBCA]{\includegraphics[width=0.24\textwidth]{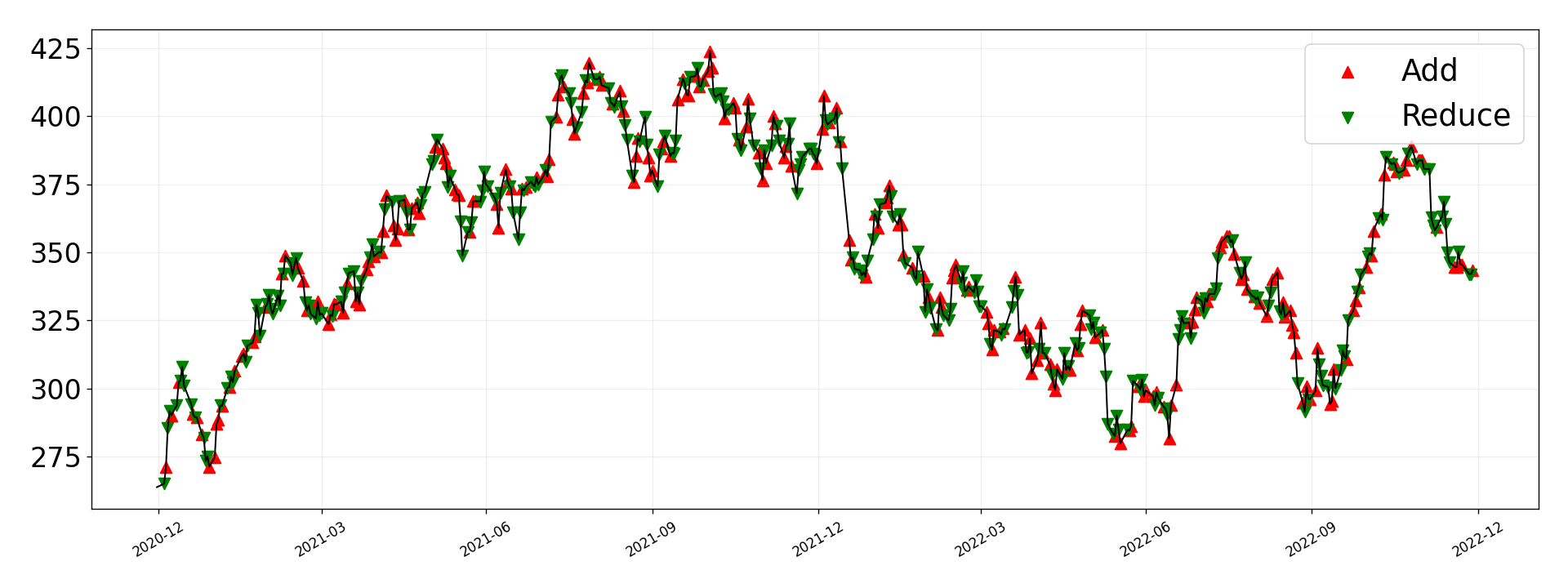}}
	\caption{Trading signals of GS in 6-asset portfolio.}
\end{figure}

\subsubsection{KO}
\begin{figure}[H]
	\centering
	\subfloat[SB]{\includegraphics[width=0.24\textwidth]{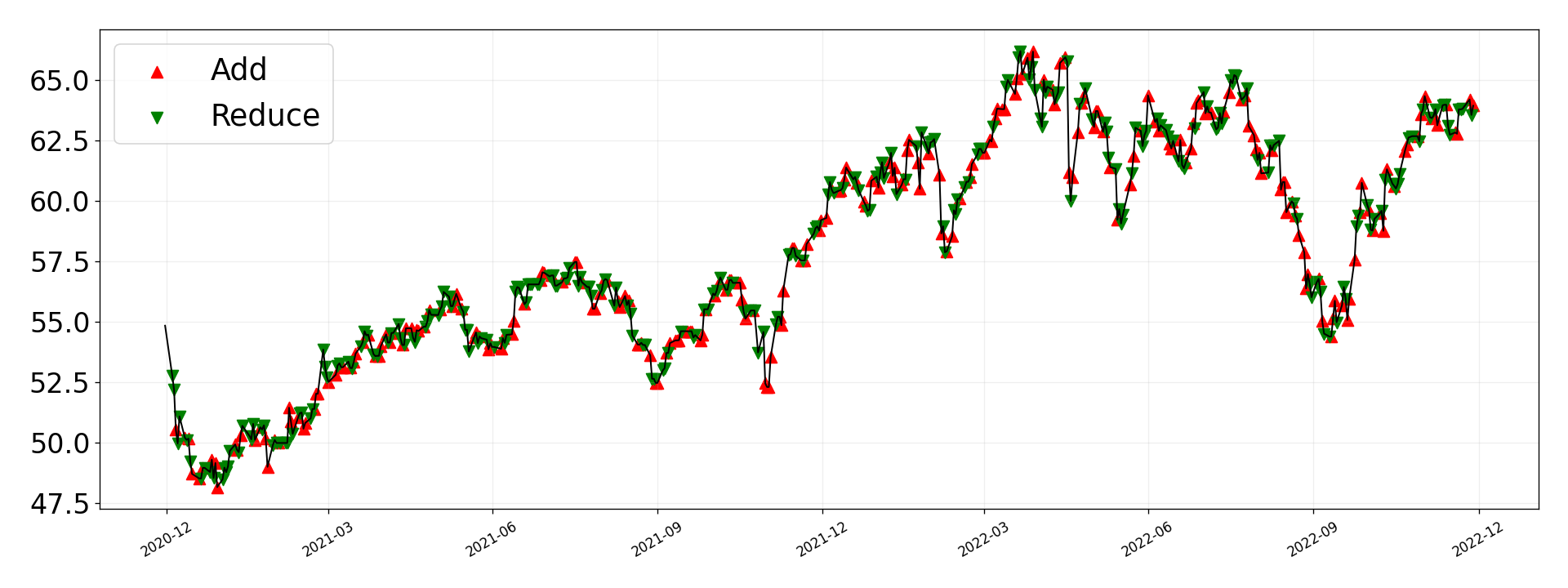}}
	\subfloat[SBA]{\includegraphics[width=0.24\textwidth]{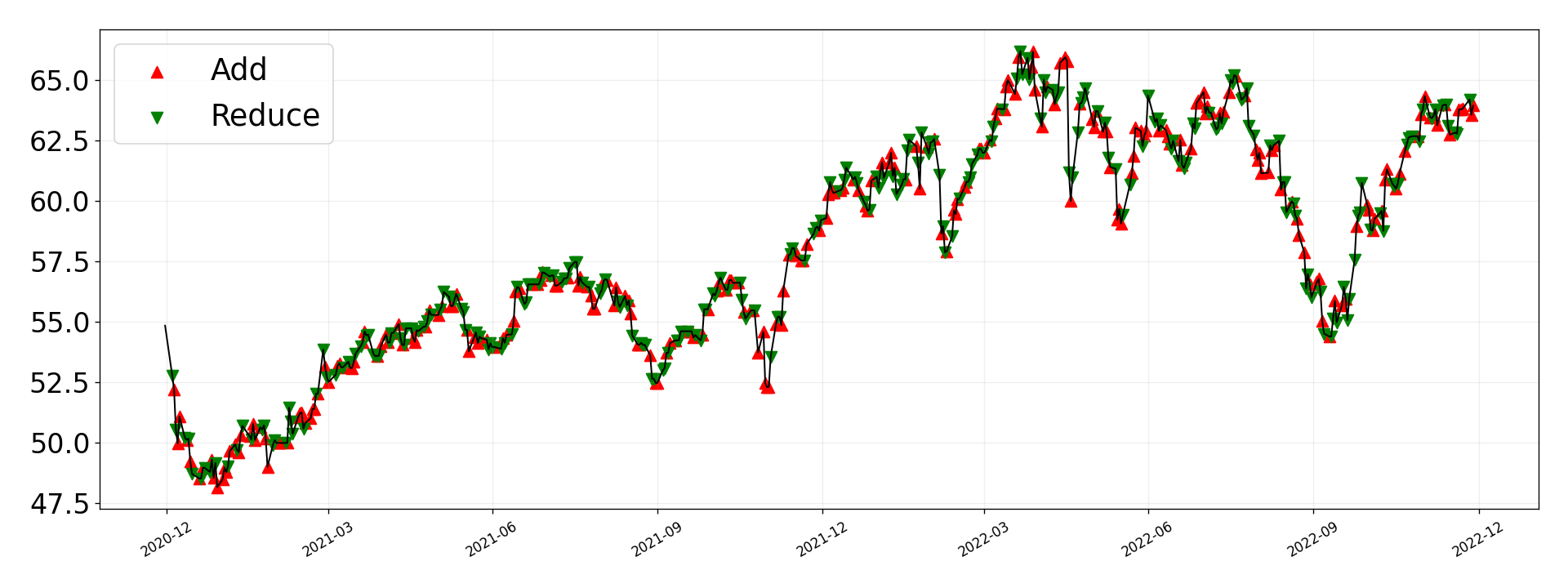}}
	\subfloat[SBC]{\includegraphics[width=0.24\textwidth]{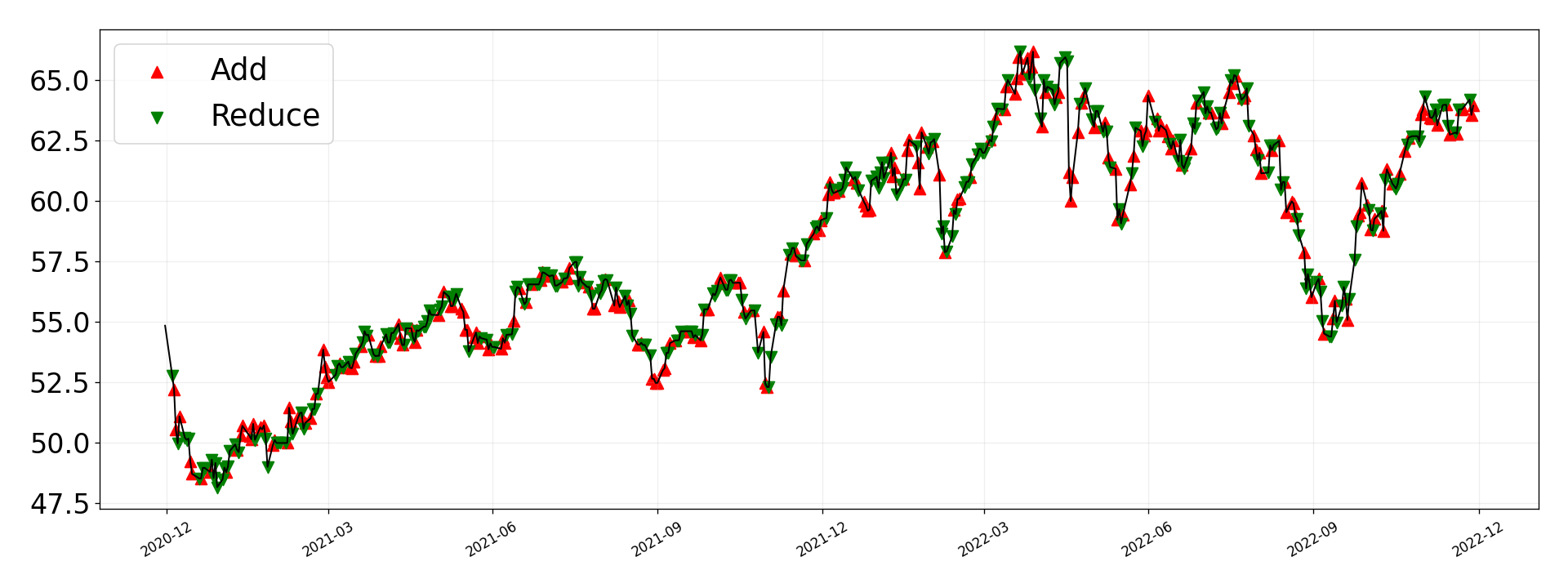}}
	\subfloat[SBCA]{\includegraphics[width=0.24\textwidth]{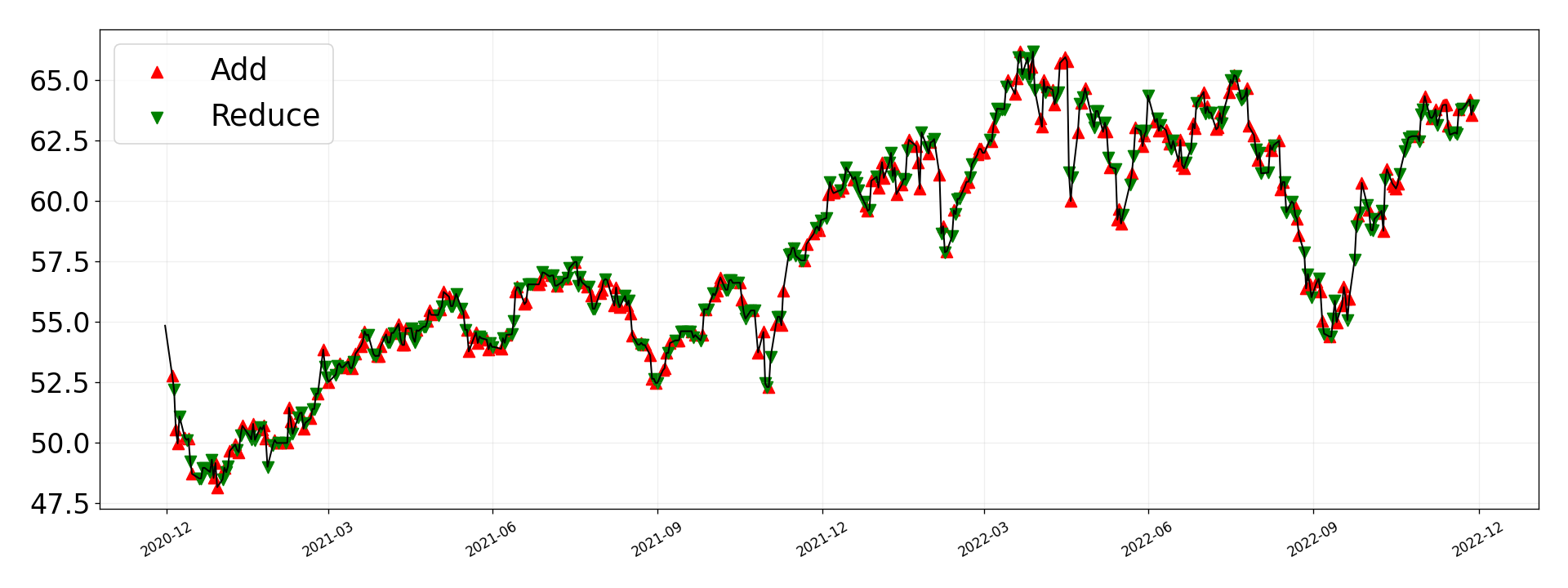}}
	\caption{Trading signals of KO in 6-asset portfolio.}
\end{figure}

\subsubsection{MRK}
\begin{figure}[H]
	\centering
	\subfloat[SB]{\includegraphics[width=0.24\textwidth]{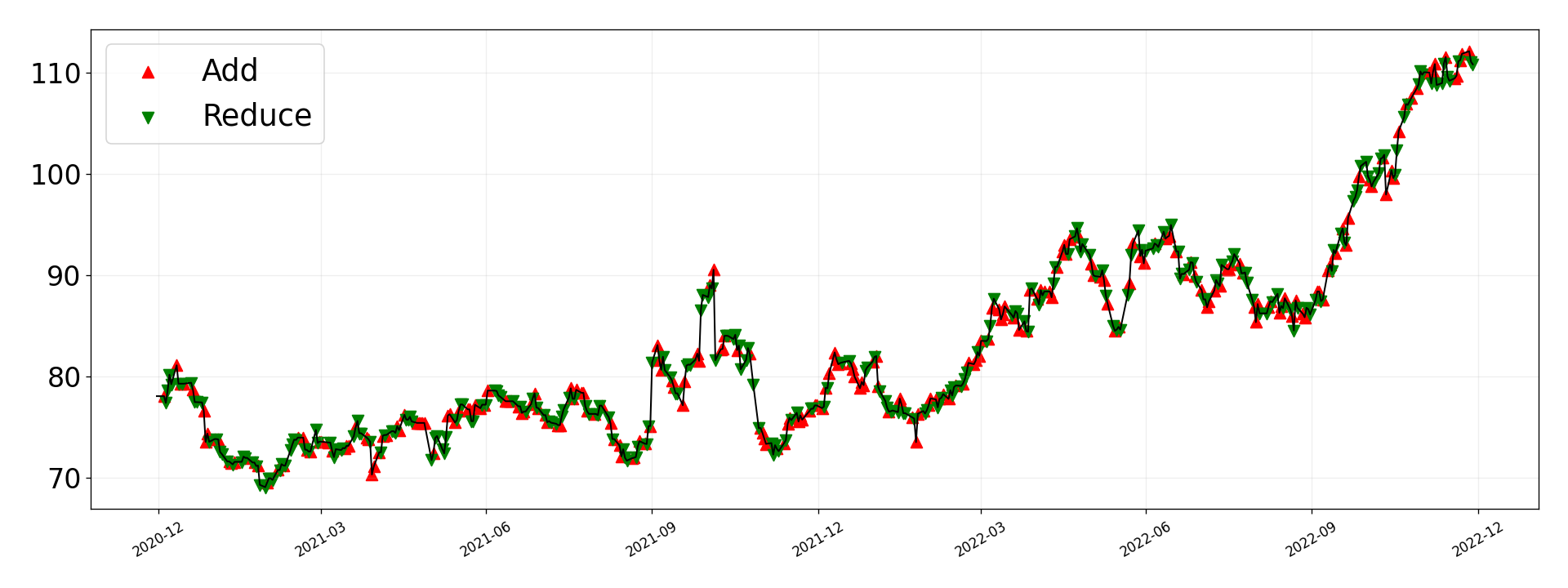}}
	\subfloat[SBA]{\includegraphics[width=0.24\textwidth]{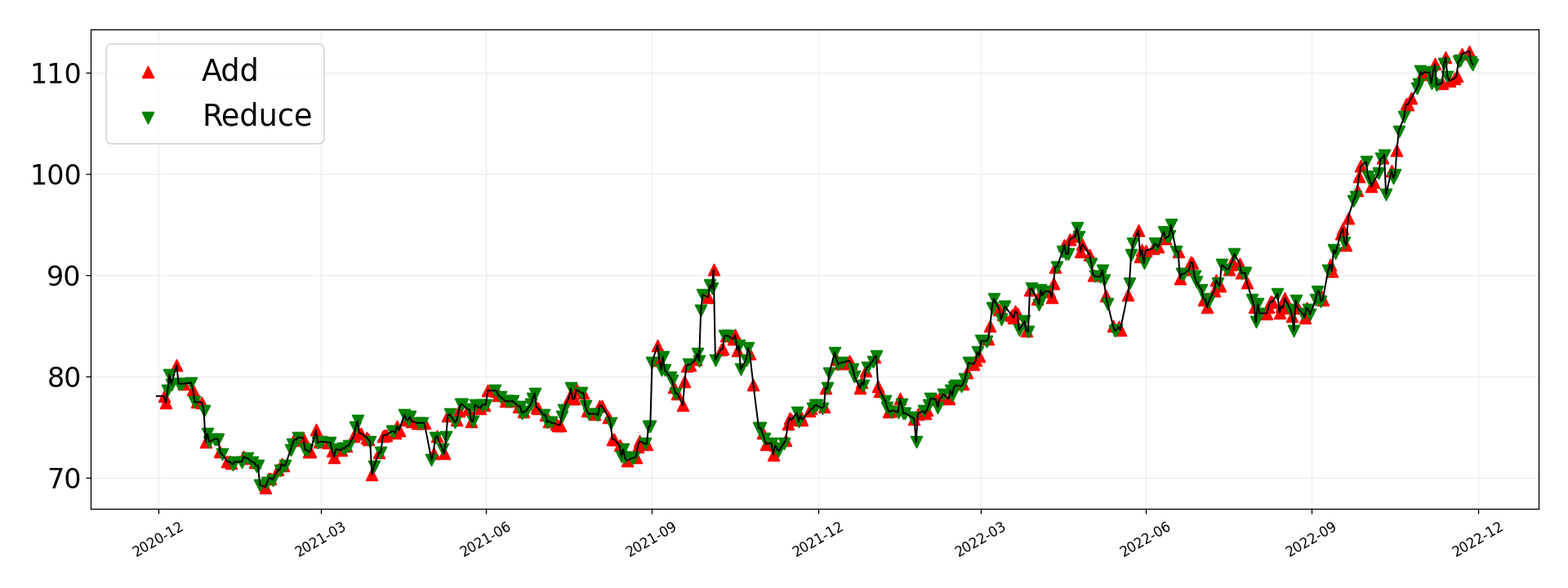}}
	\subfloat[SBC]{\includegraphics[width=0.24\textwidth]{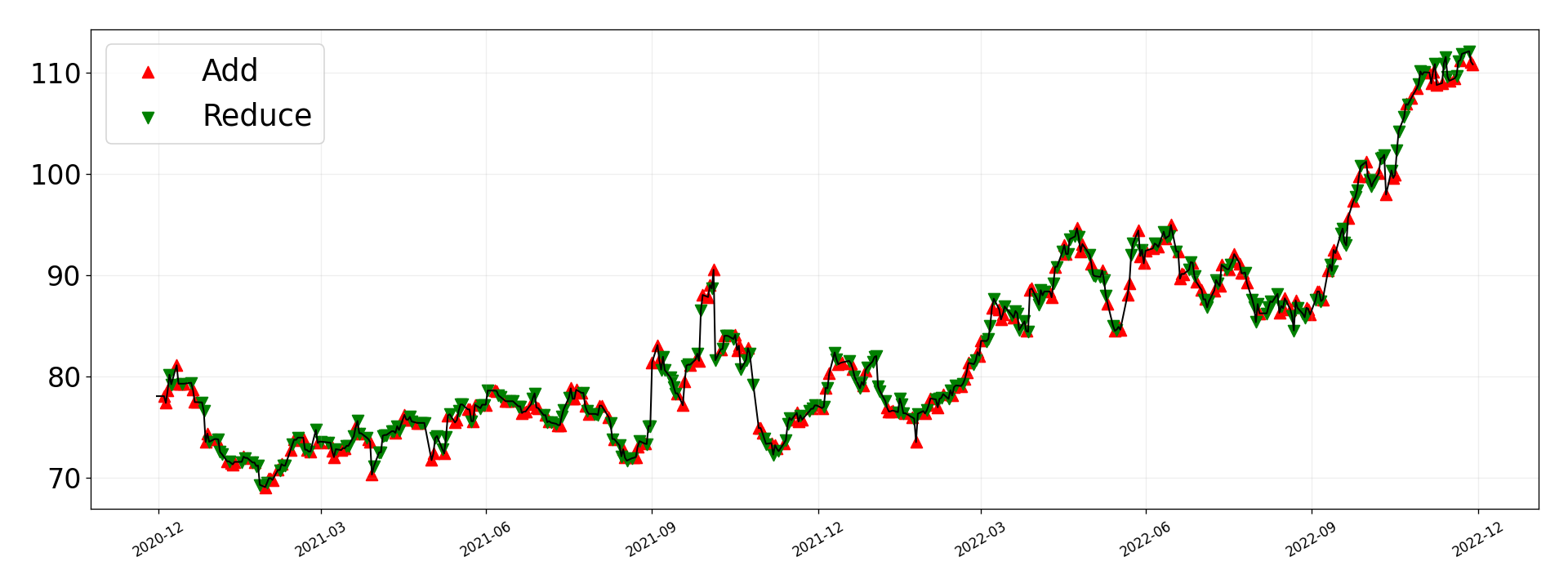}}
	\subfloat[SBCA]{\includegraphics[width=0.24\textwidth]{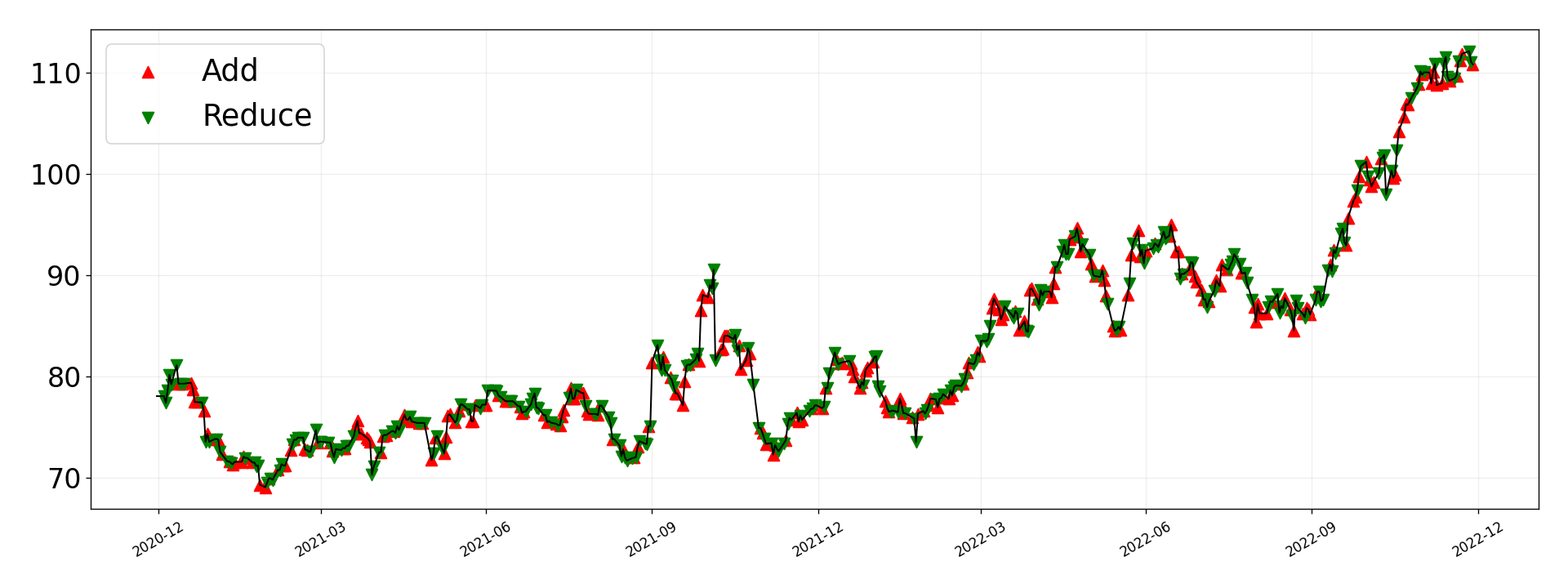}}
	\caption{Trading signals of MRK in 6-asset portfolio.}
\end{figure}

\subsubsection{NVDA}
\begin{figure}[H]
	\centering
	\subfloat[SB]{\includegraphics[width=0.24\textwidth]{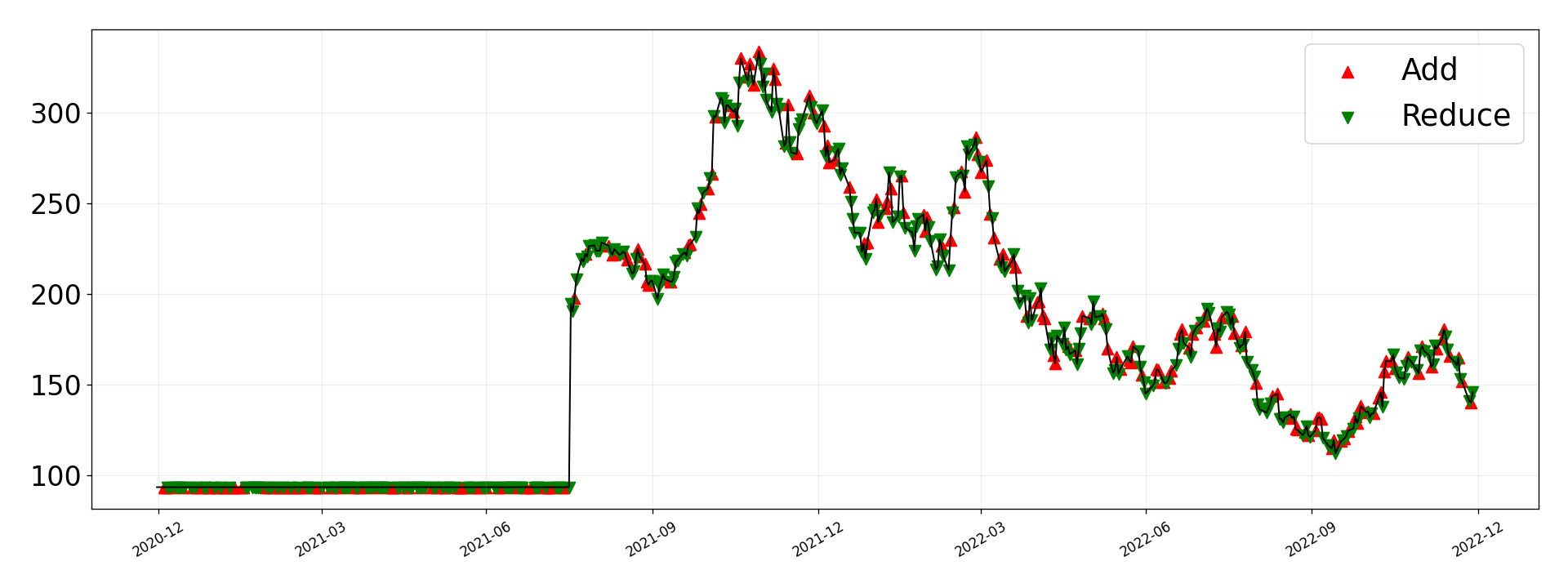}}
	\subfloat[SBA]{\includegraphics[width=0.24\textwidth]{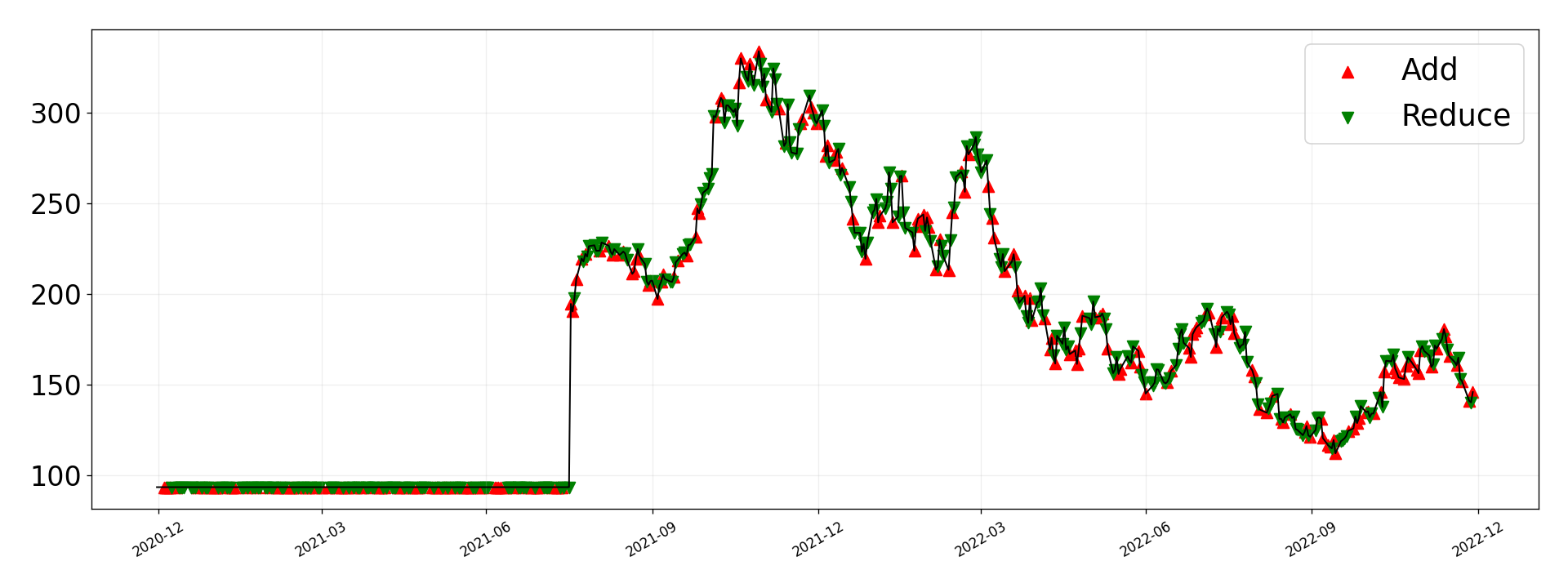}}
	\subfloat[SBC]{\includegraphics[width=0.24\textwidth]{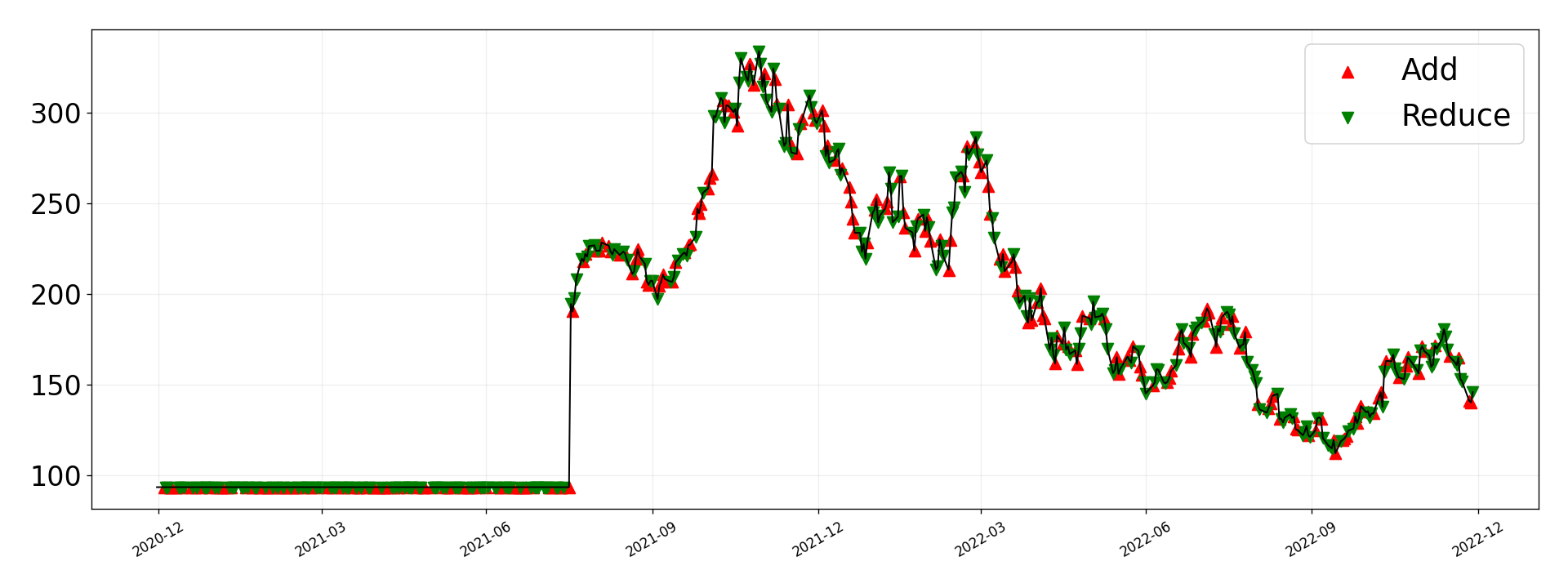}}
	\subfloat[SBCA]{\includegraphics[width=0.24\textwidth]{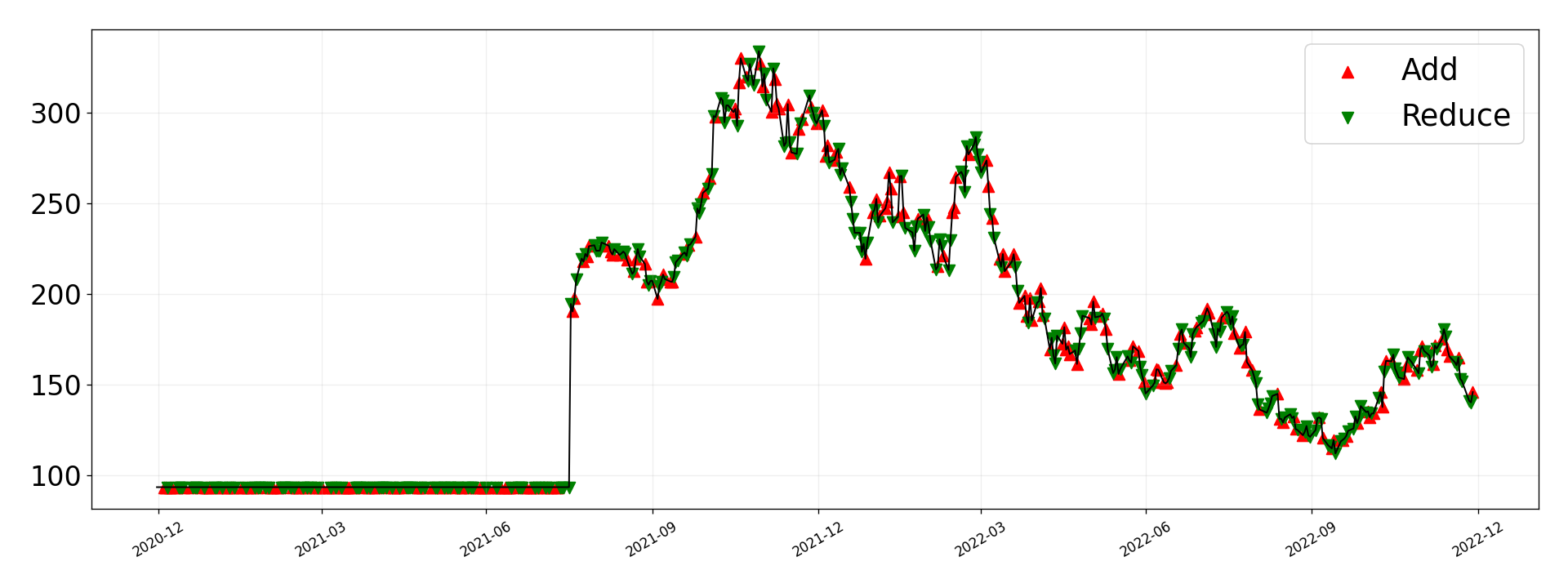}}
	\caption{Trading signals of NVDA in 6-asset portfolio.}
\end{figure}
\newpage
\bibliographystyle{apalike} 
\bibliography{sn-bibliography} 

\end{document}